\documentclass[sigplan,10pt]{acmart}\settopmatter{printfolios=false,printccs=false,printacmref=false}

\acmConference[PLDI'19]{ACM SIGPLAN Conference on Programming Languages}{June 22--28, 2019}{Phoenix, AZ, USA}
\acmYear{2019}
\acmISBN{} 
\acmDOI{} 
\startPage{1}

\setcopyright{none}
\bibliographystyle{ACM-Reference-Format}

\usepackage{stmaryrd,wrapfig}
\usepackage[noend,vlined,ruled]{algorithm2e}
\usepackage[prologue]{xcolor}
\usepackage{listings}
\usepackage{tikz}
\usepackage{bold-extra}
\usepackage{multirow}
\usepackage{booktabs}
\usepackage{picins,pifont}
\usepackage[labelformat=simple]{subcaption}

\DeclareMathAlphabet{\mathbbm}{U}{bbm}{m}{n}

\newcommand{\Loc}{\textit{Loc}}

\newcommand{\State}{\textit{State}} 
\newcommand{\Var}{\textsf{Var}}     

\newcommand{\cf}{\mathcal{C}}
\newcommand{\cfrec}{\mathcal{C}_{rec}}
\newcommand{\lrhole}{??_{LR}}
\newcommand{\lrrechole}{??_{Rec}}
\newcommand{\rhole}{??_R}
\newcommand{\loopfor}[2]{\mathtt{for}\:(#1)\;\left\{\; #2 \;\right\}}
\newcommand{\rrules}{\mathcal{R}}

\newcommand\tern[3]{{#1 \: ? \: #2 \: : \: #3}}
\renewcommand{\phi}{\varphi}

\makeatletter
\newcommand{\killpic}{%
  \hangindent=0pt
  \let\par=\old@par
}
\makeatother
\makeatletter
\newcommand\incircbin
{%
  \mathpalette\@incircbin
}
\newcommand\@incircbin[2]
{%
  \mathbin%
  {%
    \ooalign{\hidewidth$#1#2$\hidewidth\crcr$#1\bigcirc$}%
  }%
}
\newcommand{\oland}{\incircbin{\land}}
\makeatother

\makeatletter
\newcommand*{\pmzerodot}{%
  \nfss@text{%
    \sbox0{$\vcenter{}$}
    \sbox2{0}%
    \sbox4{0\/}%
    \ooalign{%
      0\cr
      \hidewidth
      \kern\dimexpr\wd4-\wd2\relax 
      \raise\dimexpr(\ht2-\dp2)/2-\ht0\relax\hbox{%
        \if b\expandafter\@car\f@series\@nil\relax
          \mathversion{bold}%
        \fi
        $\cdot\m@th$%
      }%
      \hidewidth
      \cr
      \vphantom{0}
    }%
  }%
}
\newcommand*{\pmzeroslash}{%
  \nfss@text{%
    \sbox0{0}%
    \sbox2{/}%
    \sbox4{%
      \raise\dimexpr((\ht0-\dp0)-(\ht2-\dp2))/2\relax\copy2 %
    }%
    \ooalign{%
      \hfill\copy4 \hfill\cr
      \hfill0\hfill\cr
    }%
    \vphantom{0\copy4 }
  }%
}
\makeatother

\def\tool{{\sc ParSynt }}
\def\btool{{\sc \bf ParSynt }}

\def\seq{\mathcal{S}}
\def\ccat{\bullet}

\def\map{\mathit{map}}

\def\foldl{\mathit{foldl}}
\def\scalars{\mathit{Sc}}

\def\min2{\mathit{min2}}
\def\max{\mathit{max}}
\def\min{\mathit{min}}
\def\int{\mathit{int}}

\def\zero{\pmzeroslash}
\newcommand{\lift}[2]{\hat{#1}^{{#2}}}

\newcommand\ite[3]{{\sf if}\ #1\ {\sf then}\ #2\ {\sf else}\ #3}

\newcommand{\newkw}[2]{\def#1{{\rm\sf#2}}}

\newkw{\Loc}{Loc} 
\newkw{\DF}{DF}            
\newkw{\AP}{AP}   
\newkw{\State}{State}   
\newkw{\GV}{GV}   
\newkw{\LV}{LV}   
\newkw{\Var}{Var}   
\newkw{\Seq}{SeVar}   
\newkw{\BSeq}{BSVar}   
\newkw{\BVar}{BVar}   
\newkw{\Exp}{Exp}   
\newkw{\LhVar}{LhVar} 
\newkw{\Act}{Act}   
\newkw{\Heap}{Heap}
\newkw{\Contour}{Contour}
\newkw{\Record}{Record}
\newkw{\Env}{Env}
\newkw{\act}{act}
\newkw{\Trace}{Trace}
\newkw{\Formula}{Formula}
\newkw{\Term}{Term}
\newkw{\Live}{Live}
\newkw{\true}{true}
\newkw{\false}{false}
\newkw{\Pr}{Program}
\newkw{\iter}{Iterator}
\newkw{\Int}{Int}
\newkw{\IF}{if}
\newkw{\When}{when}

\newkw{\SVar}{SVar}   
\newkw{\IVar}{IVar}   
\newkw{\Aux}{Aux}     
\newkw{\exp}{exp}   

\newkw{\cost}{Cost} 

\def\numbinary{\oplus}

\def\compare{\olessthan}

\def\boolbinary{\oland}

\newkw{\oAux}{OldAux}
\newkw{\ufold}{unfold}
\newkw{\norm}{\textit{{normalize}}}
\newkw{\collect}{collect}

\SetKwProg{Fn}{Function}{}{}

\definecolor{inputcolor}{HTML}{0066FF}
\definecolor{statecolor}{HTML}{b43030}
\definecolor{normalformcolor}{HTML}{304FFE}
\definecolor{auxcolor}{HTML}{004D40}
\definecolor{state}{HTML}{d50000}

\lstset{
  language=C,
  columns=fullflexible,
  aboveskip=3pt,
  belowskip=0pt,
  xleftmargin=5pt,
  classoffset=0,
  basicstyle=\linespread{0.2}\footnotesize\ttfamily\bfseries,
  keywordstyle=\color{blue}\ttfamily,
  stringstyle=\color{red}\ttfamily,
  commentstyle=\color{gray}\ttfamily,
  morecomment=[l][\color{magenta}]{\#},
  emph={int},
  emphstyle=\color{ForestGreen},
  classoffset=1,
  morekeywords={seq,int,bool},
  keywordstyle=\color{ForestGreen},
  classoffset=2,
  morekeywords={lemma,calc,assert,ensures,max,min},
  keywordstyle=\color{RedViolet},
}


\begin{document}
\title[Modular D\&C Parallelization]{Modular Synthesis of Divide-and-Conquer\\ Parallelism for Nested Loops \\ (Extended Version)}




\author{Azadeh Farzan}

\affiliation{
  \department{Department of Computer Science}              
  \institution{University of Toronto}            
  \city{Toronto}
  \country{Canada}                    
}
\email{azadeh@cs.toronto.edu}          

\author{Victor Nicolet}

\affiliation{
  \department{Department of Computer Science}              
  \institution{University of Toronto}            
  \city{Toronto}
  \country{Canada}                    
}
\email{victorn@cs.toronto.edu}          

\thanks{This is the extended version of PLDI 2019 paper by the same authors which includes the proofs of theorems and additional details.}

\begin{abstract}
We propose a methodology for automatic generation of divide-and-conquer parallel implementations of sequential nested loops. We focus on a class of loops that traverse read-only multidimensional collections (lists or arrays) and compute a function over these collections. Our approach is {\em modular}, in that, the inner loop nest is abstracted away to produce a simpler loop nest for parallelization. Then, the {\em summarized} version of the loop nest is parallelized. The main challenge addressed by this paper is that to perform the code transformations necessary in each step, the loop nest may have to be augmented (automatically) with extra computation to make possible the abstraction and/or the parallelization tasks. We present theoretical results to justify the correctness of our modular approach, and  algorithmic solutions for automation. Experimental results  demonstrate that our approach can parallelize highly non-trivial loop nests efficiently.
\end{abstract}

\keywords{Divide and Conquer Parallelism, Program Synthesis, Homomorphisms}

\maketitle




\section{Introduction} \label{sec:introduction}

The advent of multicore computers and development of APIs like OpenMP \cite{OPENMP}, CUDA \cite{CUDA}, and TBB \cite{TBB} has increased the popularity of parallel programming for performance gains. Despite big advances in parallelizing compilers,  {\em correct and efficient} parallel code is often hand-crafted through a time-consuming and error-prone process. These APIs implement commonly used {\em parallel programming skeletons} that ease the task of parallel programming. Instead of writing a parallel program from scratch, a programmer needs to only specify the key components of a particular skeleton. Divide-and-conquer parallelism is the most commonly used of such skeletons for which the programmer has to specify a {\em split}, a {\em work}, and a {\em  join} function.  We propose a methodology to automatically generate these components.

We focus on a class of divide-and-conquer parallel programs that operate on {\em multidimensional sequences} (e.g. multidimensional arrays, or in general  any collection type with similar recursive structure) in which the divide ({\em split}) operator is assumed to be the inverse of the default sequence {\em concatenation} operator (i.e. divide $s$ into $s_1$ and $s_2$ where $s = s_1 \ccat s_2$). Our input programs are {\em loop nests} that traverse the multidimensional data in accordance with their recursive structure. These programs are assumed to have fundamentally unbreakable data flow dependencies.

\intextsep=0pt
\columnsep=10pt

Consider the code in Figure \ref{fig:mbbs}(a), that implements
a sequential solution to the problem of computing, for a three dimensional $n \times m \times \ell$ array $A$ (with both positive and negative elements), the sum of the elements of a subarray $A[k..n-1,0..m-1,0..\ell-1]$ (for all $0 \le k < n$) which has the maximum sum compared to all other such subarrays. Intuitively, considering the array as a 3D box with height $n$, the goal is to
\begin{wrapfigure}{r}[0pt]{5.85cm}
\includegraphics[scale=0.27]{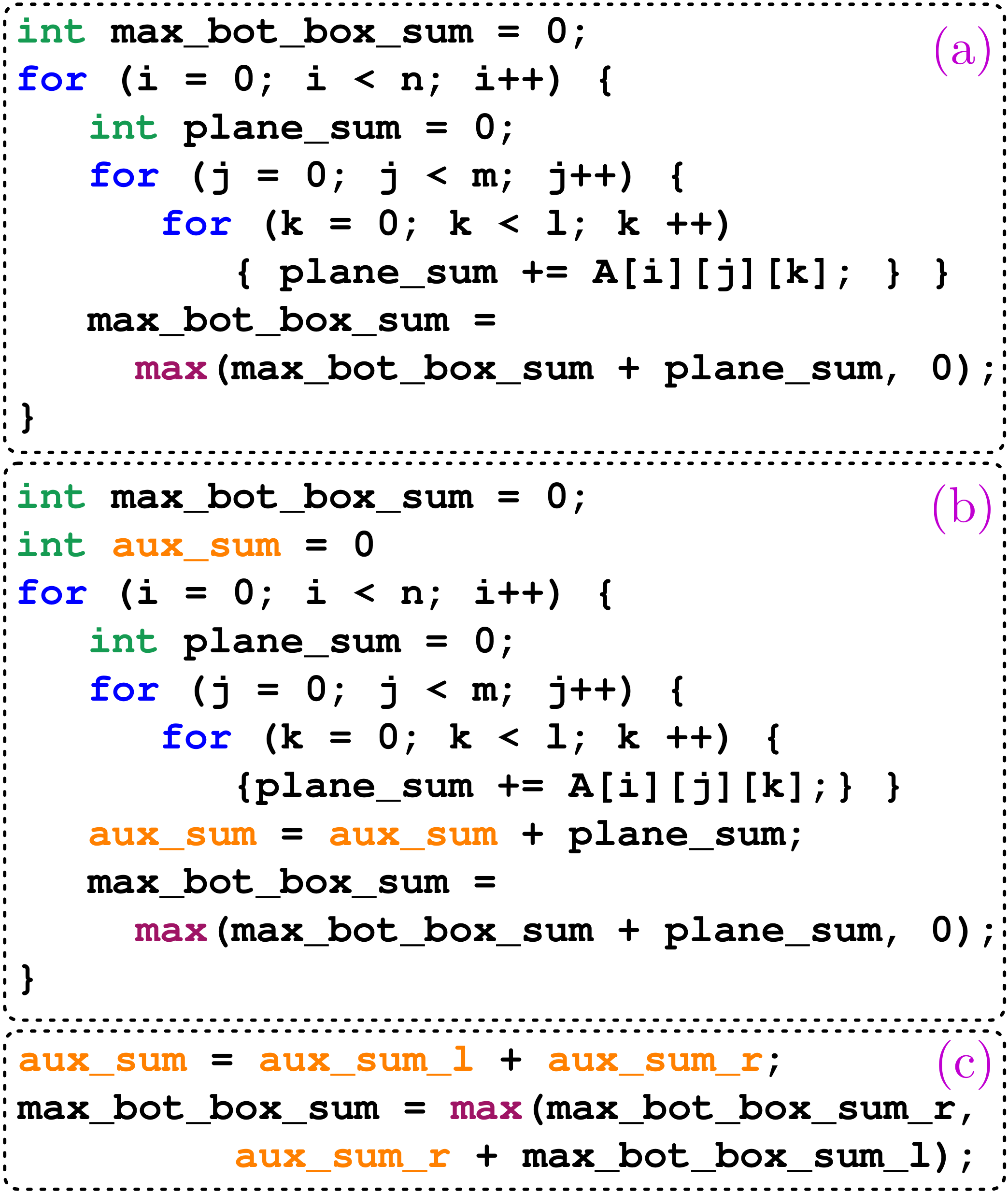}\vspace{-7pt}
\caption{\small Maximum bottom box sum.\label{fig:mbbs}}
\end{wrapfigure}
discover the maximum sum of boxes of different heights, with the same width, length and bottom as the input box.

Note that this optimal sequential implementation runs in {\em single pass} linear time over the input 3D array, at the cost of creating unbreakable loop dependencies. A less efficient solution that would enumerate all boxes would have been easier to parallelize.

It is easy to observe that the code is not (divide-and-conquer) parallelizable. Let us assume it is. There then exists a binary function $\odot$ that can combine results of two instances of the code ($mbbs$) run on two adjacent boxes to produce the same results for the {\em concatenated} box. \vspace{-5pt}
\begin{center}
\includegraphics[scale=0.19]{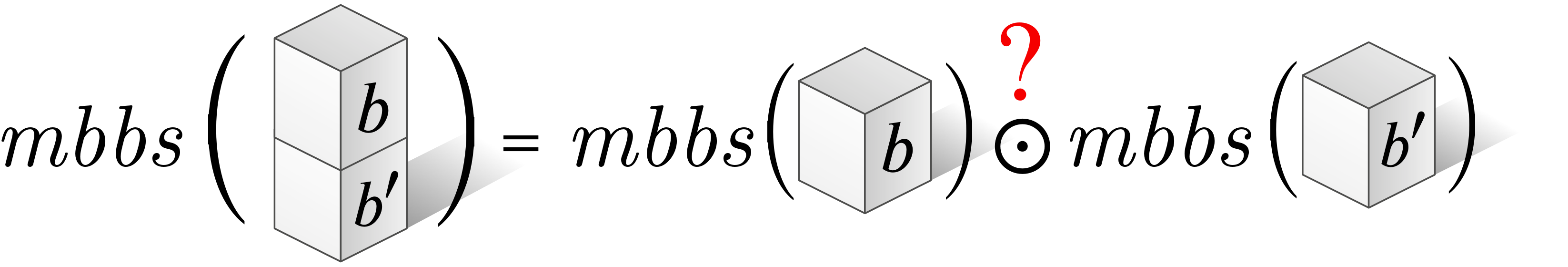}\vspace{-9pt}
\end{center}

Let $b = [5]$ (a $1 \times 1 \times 1$ box) and consider two choices for $b'$, namely $[-3,3]$ and $[0,3]$  ($2 \times 1 \times 1$ boxes). Although $mbbs(b')$ is $3$ in both cases, the join needs to produce two different answers for $mbbs(b\ccat b')$.

Nonexistence of the join operator indicates that $mbbs$, the function computed by the loop, is not a {\em homomorphism}. Now, consider the modified code illustrated in Figure \ref{fig:mbbs}(b). A new accumulator {\tt aux\_sum} is added (in orange), which maintains the sum of the elements in  $A[0..i-1,0..m-1,0..\ell-1]$ at the
\begin{wrapfigure}{r}{2.4cm}
\hspace{-10pt}
\includegraphics[scale=0.29]{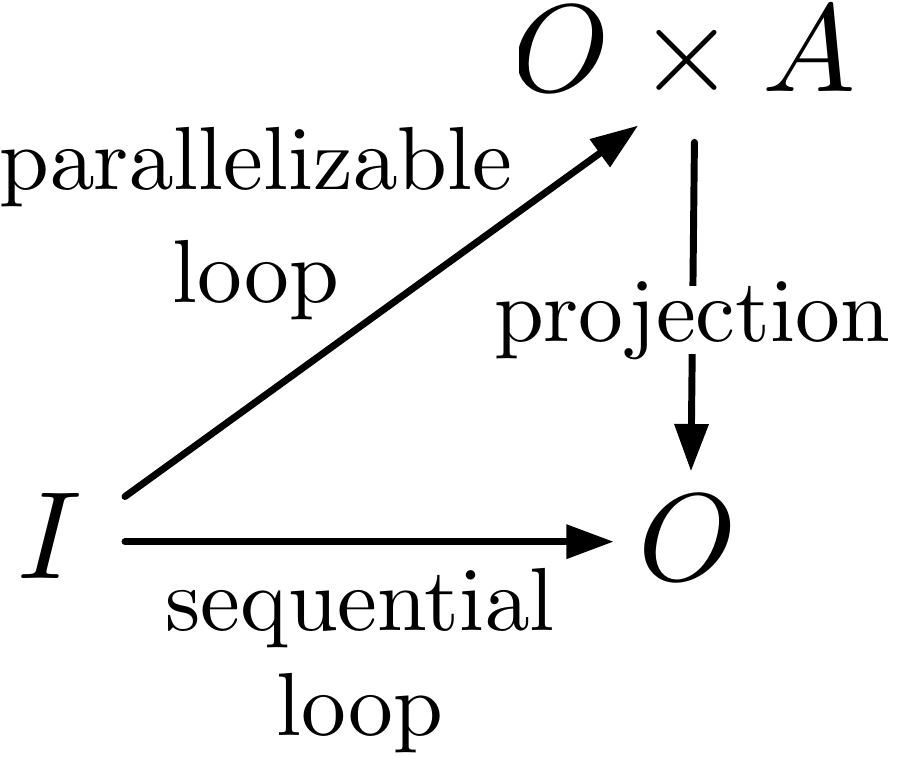}\vspace{-1pt}
\end{wrapfigure}
$i$-th iteration of the outer loop. Note that $mbbs(b)$ is producing a pair of integers now, instead of a single integer. This extending of
a function's signature is called {\em lifting}, in the standard sense of lifting a morphism in category theory, and is
illustrated on the right. $(I,O)$ denotes the input and output of the original sequential loop, and a {\em lifting} of the code additionally computes {\em auxiliary} information denoted by $A$. If the {\em lifted} function is a {\em homomorphism}, then a parallel join exists for it. Figure \ref{fig:mbbs}(c) illustrates the parallel join for the lifted maximum bottom box code.

\vspace{-5pt}
\subsection{Modular Parallelization}\label{sec:mp}

Figure \ref{fig:diagram}(a) illustrates the flow of data in a generic nested loop (of arbitrary depth), where $s_i$ denotes the {\em state} of the loop nest (e.g. a tuple of program variables). The {\em black} arrows correspond to the computation of one instance of the body of outermost loop, while the {\em blue} arrows correspond to the computation of one instance of the inner loop nest.

The goal is to parallelize, {\em divide-and-conquer} style, the outermost loop with the assumption that the dependencies are unbreakable.

In \cite{PLDI17} we proposed a semantic solution to this problem for {\bf simple} (non-nested) loops by {\em lifting} their computations to homomorphisms. To generalize such a semantic solution to nested loops, one comes across the very hard problem of computing a {\em semantic summary} of the functionality of the inner loop nest, to be used in the analysis of the outer loop. Despite big strides in program analysis techniques \cite{gustafsson2006automatic,cordes2009fast}, this type of semantic summary computation remains limited to classes of loops whose invariants (summaries) are within decidable theories, and even then, mostly proof-driven rather than summarizing full functionality.

\begin{figure}[t]
\begin{center}
\includegraphics[scale=0.27]{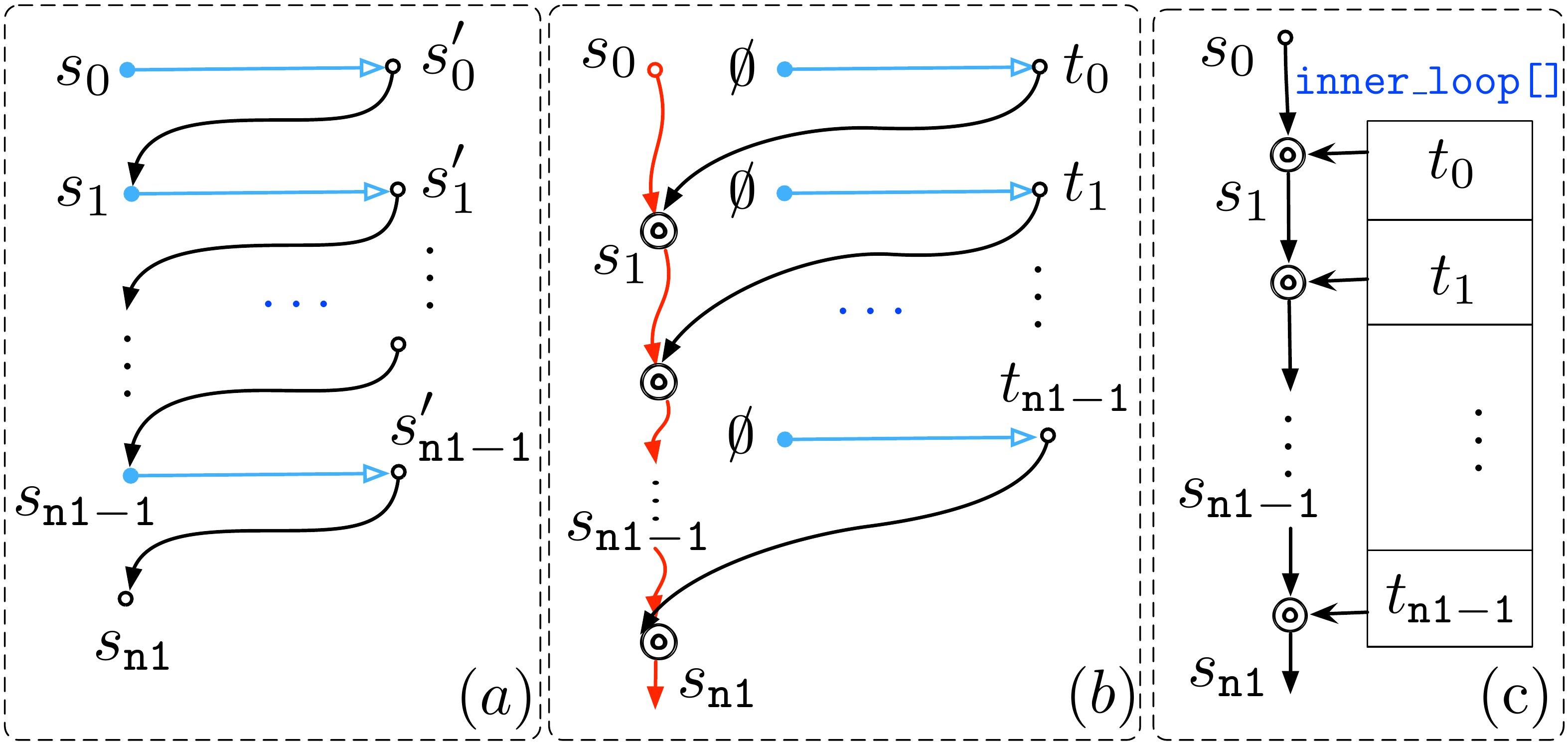} \vspace{-17pt}
\caption{Dependencies in a general sequential nested loop (a) vs. a memoryless one (b), which is summarized in (c).}
\label{fig:diagram}
\end{center}\vspace{-20pt}
\end{figure}

We propose a methodology that circumvents this problem through a modular solution. We divide the dependencies in Figure \ref{fig:diagram}(a) into two categories and resolve them separately. The black arrows force every instance of the inner loop nest to be executed only after the results of all previous instances are ready. Contrast this with the diagram in Figure \ref{fig:diagram}(b), where each instance of the inner loop nest starts from a fixed (constant) initial state $\zero$, and therefore, all instances can be run in parallel. The sequential binary operator $\circledcirc$ merges the results of the inner loop nest ($t_i$) with the current state of the outermost loop ($s_i$) and makes the required adjustments (to get $s_{i+1}$). We call such a loop nest {\em memoryless}. The terminology is inspired by the fact that all the instances of the inner loop nest implement the same function (that starts from the same initial state $\zero$). If a general loop nest is transformed to a memoryless one through the introduction of new computation (i.e.  $\circledcirc$), then this results in the removal of the black arrow dependencies. The inner loop nest  can be executed by a {\em parallel map}. The outermost loop remains sequential. Observe that the loop in Figure \ref{fig:mbbs}(a) is memoryless.

Transforming a general loop to a memoryless one is not always straightforward. Due to lack of information in the loop state, no such binary function (operator) $\circledcirc$ may exist. In these cases, one needs to deduce additional information to be computed by the inner loop nest to facilitate the existence of $\circledcirc$, that is, the inner loop nest has to be {\em lifted}. Transforming a general loop to a memoryless one involves solving two subproblems: (i) producing an implementation for $\circledcirc$, and (ii) discovery of auxiliary computation when such an operator does not exist. Solving these two problems are two of our key contributions (Sections \ref{sec:mjoin} and \ref{sec:mlift}).

\intextsep=0pt
\columnsep=2pt

When the loop is memoryless, the inner loop nest can be abstracted away to get a {\em summarized} (potentially simpler) loop. As shown in Figure \ref{fig:diagram}(c), the
results of the computations of the inner loop nest
are assumed to be stored in a (conceptual) array
(called {\tt inner\_loop[]}), and therefore the loop nest is removed. The {\em summarized loop} fetches the results from {\tt inner\_loop[]} to perform its computation. Any
\begin{wrapfigure}[5]{r}[0pt]{4.3cm}
\includegraphics[scale=0.29]{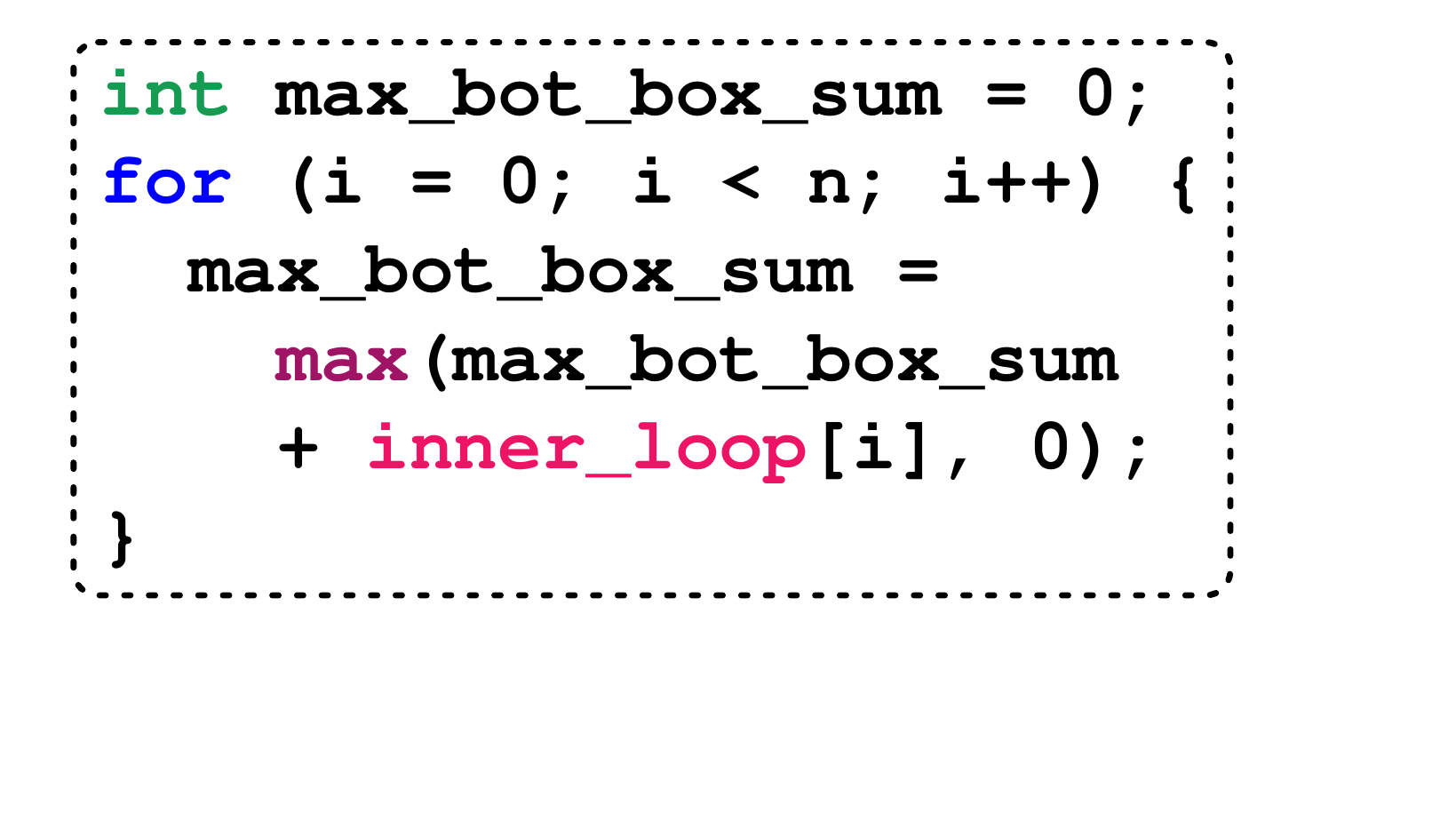}
\end{wrapfigure}
{\em memoryless} loop can be {\em summarized} this way. For example,
the 3-nested loop of Figure
\ref{fig:mbbs}(a) is summarized to a single loop (illustrated on the right).

The crucial observation is that the {\em summarized} loop is {\em efficiently parallelizable} if and only if the original one  is (Theorems \ref{thm:par-cond} and \ref{thm:convl}). Therefore, the problem of parallelizing the original loop is {\em soundly and completely} reducible to the problems of (i) producing the {\em summarized} loop, and (ii) parallelizing it.
Summarization can substantially simplify the parallelization task. For example, the approach in \cite{PLDI17} can parallelize the summarized loop above while it is not applicable to the original loop in Figure \ref{fig:mbbs}(a).

Summarization, however, does not always yield a non-nested loop like the one above, and therefore, the approach in \cite{PLDI17} cannot always parallelize a summarized loop.

To parallelize the summarized loop, two subproblems have to be solved: (a) Automatic {\em lifting} of {\em nested} loops to parallelizable code, and (b) automatic generation of the parallel join for {\em nested} loops. Problem (b) is easier to solve.  In Section \ref{sec:join}, we build on our technique from \cite{PLDI17} to extend it to nested loops. The lifting problem is more complex. We solve it by reducing it to two well-known problems, namely {\em normalization} (in term rewriting systems) and {\em recursion discovery}. In section \ref{sec:alg}, we discuss the reduction and propose simple heuristics for both problems.
Our modular parallelization methodology comprises theoretical results and algorithms for generating all required additional code. Figure \ref{fig:schema} outlines the applications of the theorems and the contributed algorithmic modules, and therefore, serves as detailed summary of our technical contributions.
Due to the undecidability of the problem, some of our algorithms are heuristics. We provide experimental results to demonstrate the effectiveness of these heuristics in fully automatically and efficiently producing divide-and-conquer parallelizations for some highly nontrivial nested loops. Beyond facilitating full automation, we believe that our methodology is also a systematic approach that can guide programmers in writing correct and efficient parallel code manually.



\section{Motivating Examples} \label{sec:overview}

We use two difficult-to-parallelize examples to underline the challenges of parallelizing the class of nested loops targeted in this paper and outline the strengths of our methodology.

\intextsep=0pt
\columnsep=17pt

\subsection{Balanced Parentheses}\label{sec:wbp}\vspace{0pt}
This example demonstrates that transforming a nested loop to a {\em memoryless} one can be complicated. A string is {\em balanced} if the total number of left and right brackets match, and any prefix of the string has at least as many left brackets as right
\begin{wrapfigure}{h}{6.4cm}
  \includegraphics[scale=0.28]{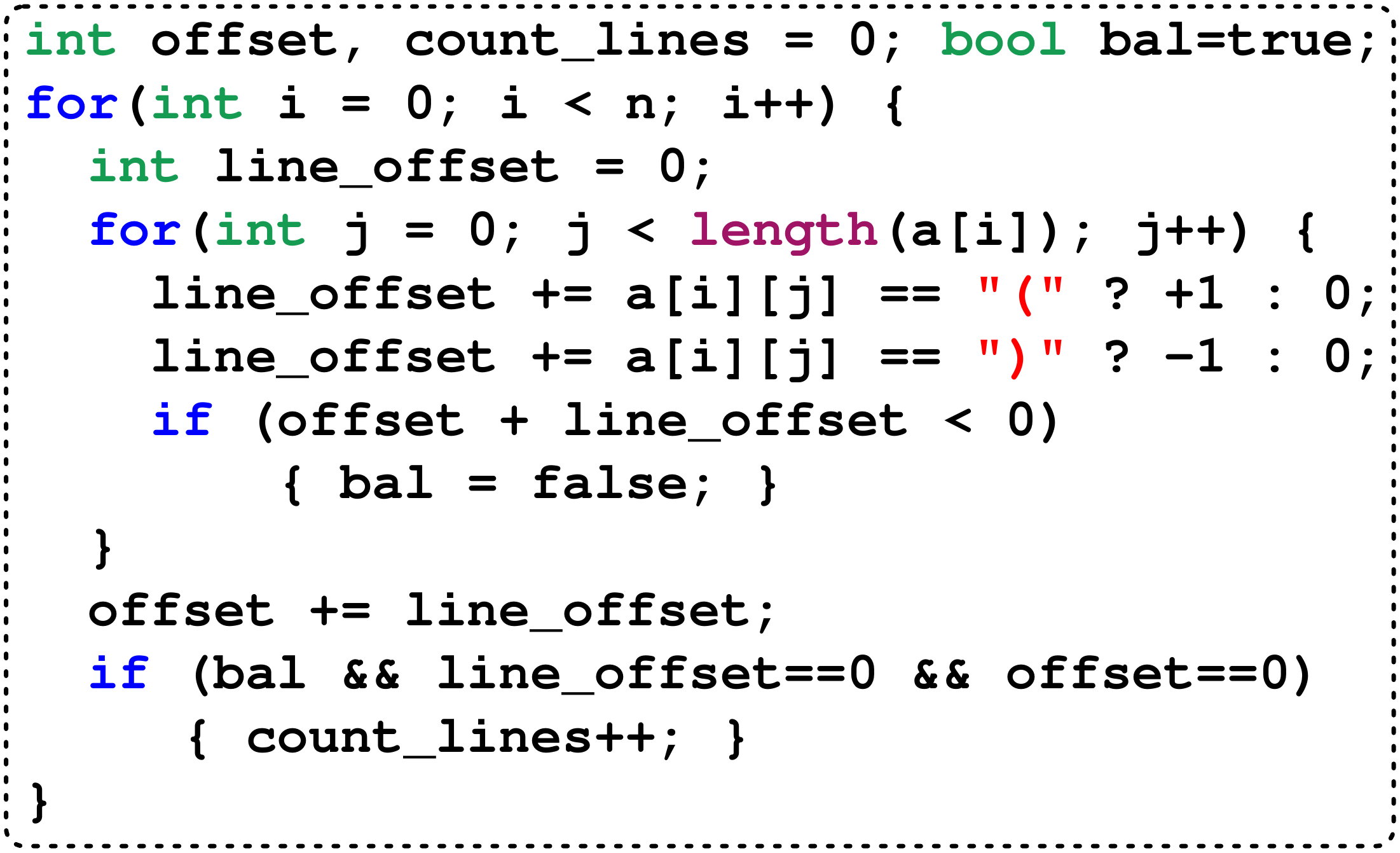}\vspace{-9pt}
  \caption{ Balanced Parentheses.\label{fig:wbp}}
\end{wrapfigure}
ones. Assume that the input is a
two-dimensional array containing a large bracketed math expression, one row per each line.  A line $l$ of input $x$ is {\em self-contained} if we have $x = x_1 \cdot l \cdot x_2$, where $l$ and $x_1$  are both balanced. The code in Figure \ref{fig:wbp} counts the number of {\em self-contained} lines  of its input through a nontrivial algorithm. {\tt offset} maintains the excess of left over right brackets seen so far. {\tt bal} tracks if {\tt offset} has always remained nonnegative.

We encourage the reader to manually parallelize the outer loop to get a sense of the difficulty of this problem.

The loop is not {\em memoryless}; unbreakable dependencies on {\tt bal} and {\tt offset} variables induce the black arrows from the diagram in Figure \ref{fig:diagram}(a). One cannot remove the dependency of the update to {\tt bal} on the value of {\tt offset} without having the inner loop compute an extra value.
Specifically, the minimum value of {\tt line\_offset}, during the execution of the inner loop, should be made available to the outer loop. If this does not cause {\tt offset} to dip below $0$, then {\tt offset + line\_offset} should have remained positive throughout the inner loop execution, and therefore the value of {\tt bal} can be recovered. The code in Figure \ref{fig:wbpm} illustrates the lifted code (modifications are highlighted).
The loop in Figure \ref{fig:wbpm} is {\tt memoryless} and can be summarized as below.
\begin{center}\vspace{-8pt}
\includegraphics[scale=0.28]{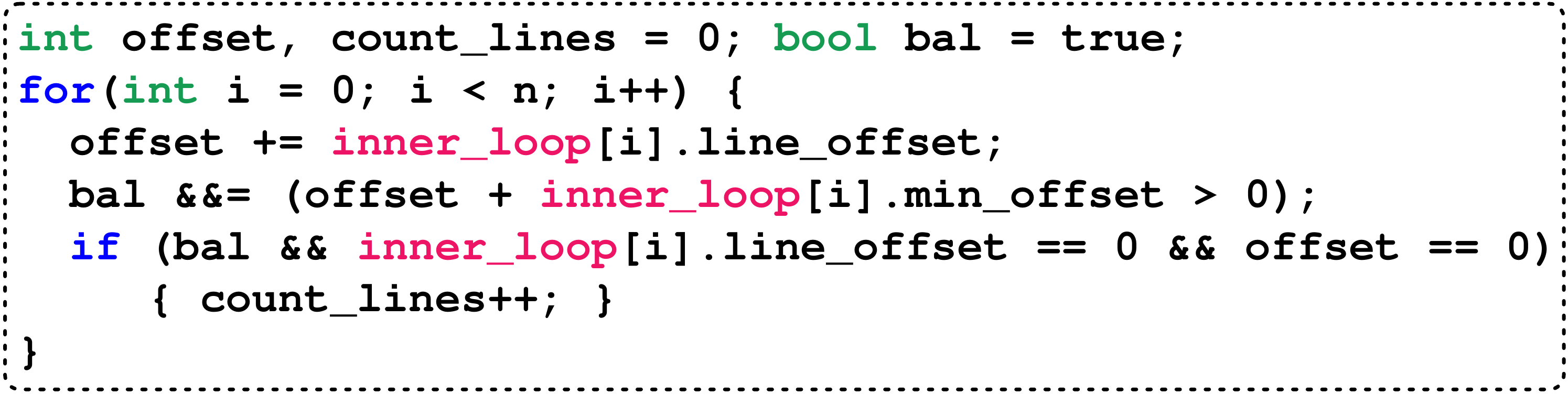}\vspace{-2pt}
\end{center}
In Sections \ref{sec:lift} and \ref{sec:alg}, we discuss how the {\tt min\_offset} accumulator can be discovered automatically. Can the summarized loop (above) be parallelized? No! The reader can verify that a parallel join does not exist. Furthermore, the loop {\em cannot be efficiently lifted} (theoretically impossible); that is, the addition of more scalar accumulators will not transform it to a homomorphism. The transformation of the loop to a memoryless one parallelizes all instances of the inner loop (implementable by a parallel {\em map}). But, the outer loop computation cannot be efficiently turned into a parallel {\em reduction}. Yet, the parallelization of the code through the discovery of the {\em map} alone yields a reasonable speedup (Section \ref{sec:experiments}).

\begin{figure}[t]
\includegraphics[scale=0.29]{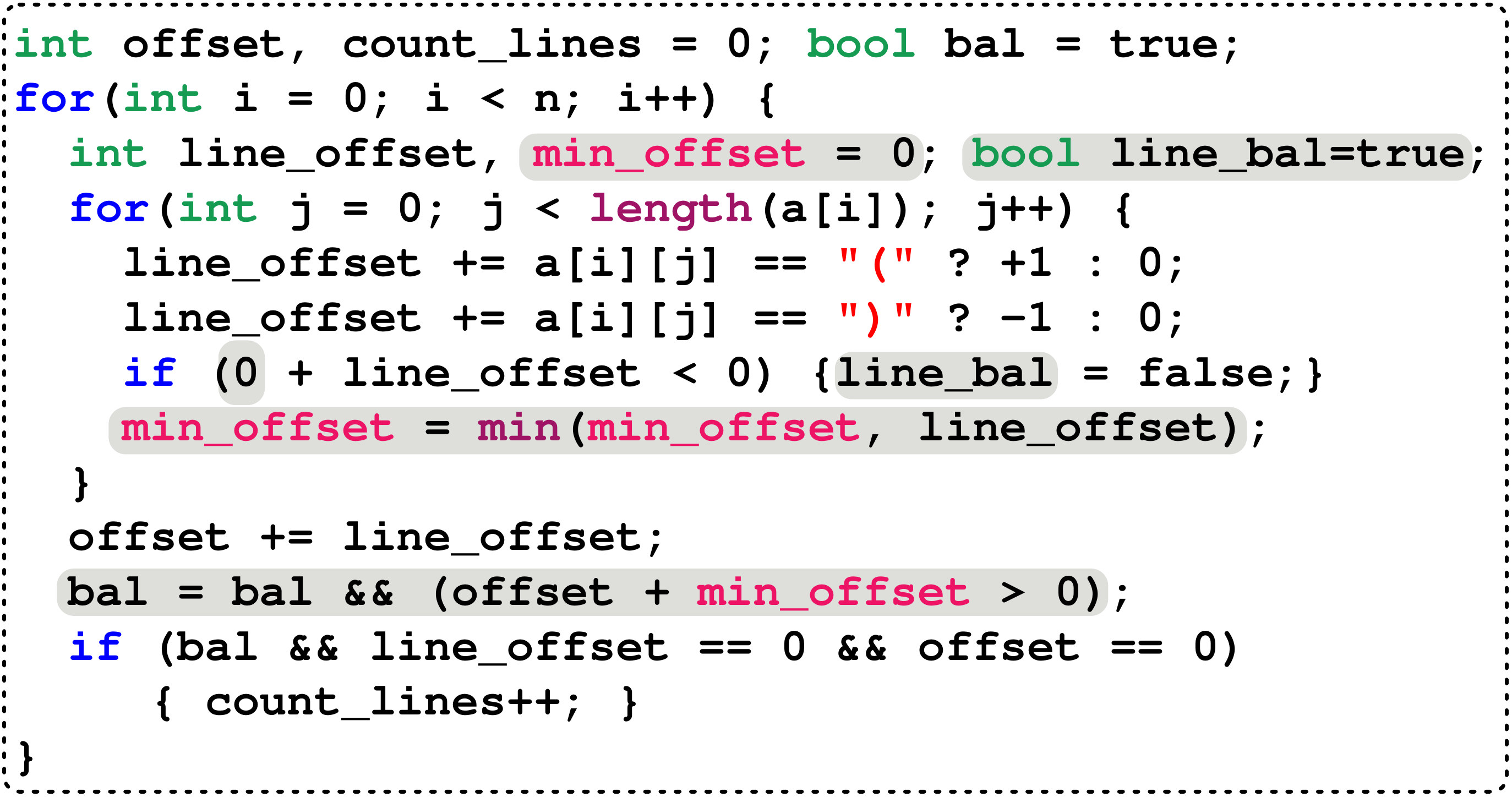}
\caption{Memoryless balanced parentheses.\label{fig:wbpm}}\vspace{-10pt}
\end{figure}\vspace{-10pt}

\begin{figure*}[t]
\begin{center}
\includegraphics[scale=0.29]{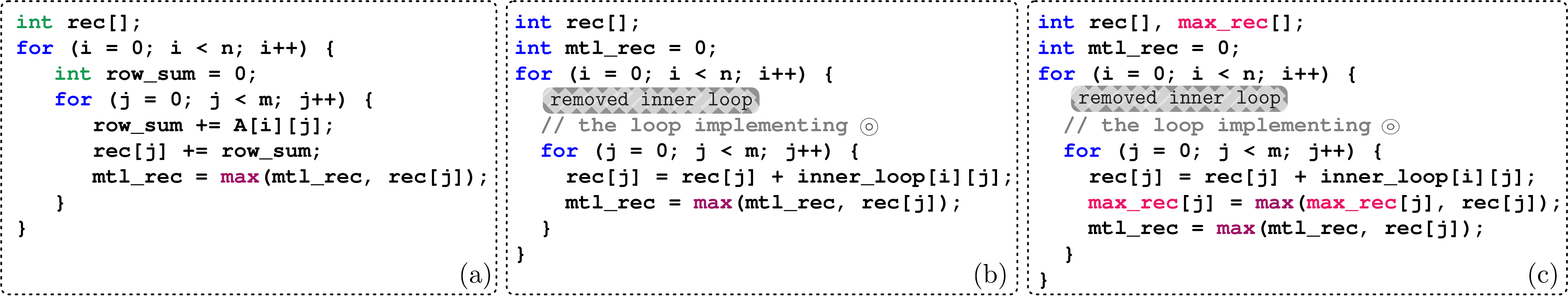}\vspace{-8pt}
\caption{Maximum top-left subarray sum (a), its summarized version (b), and the lifting to parallelizable code (c)}
\label{fig:mtlr}
\end{center}\vspace{-12pt}
\end{figure*}

\subsection{Maximum Top-Left Subarray Sum}\label{sec:mtlr}

This example demonstrates that parallelization of the outer loop may be nontrivial even after a successful summarization.
Consider a two-dimensional array of integers (with both positive and negative) elements. Assume that the goal is to compute the maximum sum of the elements of a subarray $A[0..k,0..\ell]$ for all $0 \le k < n$ and $0 \le \ell < m$, i.e.  all subarrays that include the top-left corner $(0,0)$.

The code in Figure \ref{fig:mtlr}(a) is a clever single-pass implementation of this function. Note that the inner loop has a state (variable) {\tt rec[]} that is the same size as the width of a row ($m$). In {\tt rec[j]}, the loop maintains the sum of all elements in the subarray $A[0..i,0..j]$. The loop is not {\em memoryless} due to the dependencies induced by both {\tt rec[]} and {\tt mtl\_rec}. Again, we encourage the reader to think about how they would parallelize the code manually.

Figure \ref{fig:mtlr}(b) illustrates the memoryless and summarized variation of the code. The transformation is straightforward, but the summarized loop is still a 2-nested loop and not parallelizable (i.e. not a homomorphism); that is, the operator $\circledcirc$ from Figure \ref{fig:diagram}(b) has to be implemented as a simple loop to correctly update variables {\tt rec[]} and {\tt mtl\_rec}. The transformation underlines a subtle point, namely that, the relevant information from the input array is the {\em sum} values of the subarrays starting from the $(0,0)$ and ending at $(i,j)$, and not the values of $\tt A[i][j]$'s. This {\em abstraction} is a key to the simplification of the {\em lifting} of the outer loop to a homomorphism for parallelization.

\begin{figure}[b]
\begin{center}
\includegraphics[scale=0.28]{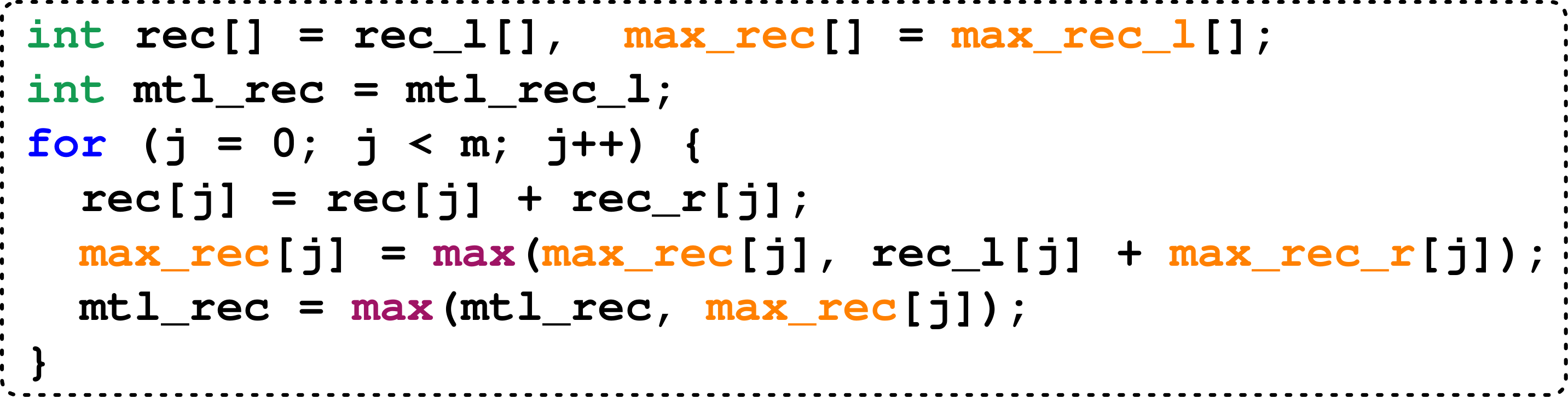}\vspace{-7pt}
\caption{The parallel join for Figure\ref{fig:mtlr}(c). \label{fig:mtlr-join}}
\end{center}
\end{figure}

The code needs to be lifted as illustrated in Figure \ref{fig:mtlr}(c). A new variable {\tt max\_rec[]} has to be introduced where each cell {\tt max\_rec[j]} maintains the maximum value of {\tt rec[j]} (for $0 \leq j < n$).
Discovery of such variables, that is arrays of accumulators, is not required for parallelization of simple loops \cite{PLDI17}. The time complexity budget for a parallel {\em join} operator of a simple loop is constant time, and therefore non-constant sized variables are pointless. For nested loops, however, as this example demonstrates, they may be essential.
In Section \ref{sec:alg}, we propose a new algorithm for discovering liftings like this automatically.

Now, a parallel join operator can combine the value of  {\tt rec[]} from the top thread and {\tt max\_rec[]} from the bottom thread to account for subarrays that intersect two adjacent array chunks, as illustrated in Figure \ref{fig:mtlr-join}.
The two challenges underlined by this example are (i) the synthesis problem of a parallel join operator which is a looping computation, and (ii) the discovery of auxiliary information for lifting which is not constant-sized.



\section{Notation and Background} \label{sec:preliminaries}
This section introduces the notation used in the remainder of the paper. While the formal work is based on studying functions on sequences, the description of the algorithm requires to define our inputs programs and a model for loop bodies which can be translated to a functional form.

\subsection{Sequences and Functions.}
We assume a generic type $\scalars$ that refers to any scalar type used in typical programming languages, such as {\tt int} and {\tt bool} whenever the specific type is not important in the context. Scalars are assumed to be of {\em constant} size, and conversely, any constant-size representable data type is assumed to be scalar. Consequently, all operations on scalars are assumed to have constant time complexity. Type $\seq$ defines the set of all {\em sequences} of elements of type $\scalars$. For any sequence $x$,  $x[i]$ (for $0 \le i < |x|$) denotes the element of the sequence at index $i$, and $x[i .. j]$ denotes the subsequence between indexes $i$ and $j$ (inclusive). The concatenation operator $\ccat: \seq \times \seq \to \seq$ is defined over sequences in the standard way, and is associative.  The sequence type stands in for {\em arrays}, {\em lists}, or any collection data type that admits a linear iterator and an {\em associative} composition operator.

\begin{definition}\label{def:rw}
A function $h: \seq \to D$  is rightward iff there exists a binary operator $\oplus: D \times \scalars \to D$ such that for all $x \in \seq$ and $a \in \scalars$, we have $h( x \ccat [a]) = h(x) \oplus a$.
\end{definition}
\noindent Note that the notion of associativity for $\oplus$ is not well-defined, since it is not a binary operation defined over a set (i.e. the two arguments to the operator have different types). A leftward  function is defined analogously using the recursive equation $h( [a] \ccat x) = a \otimes h(x)$.

Homomorphisms are a well-studied class of mathematical functions. We are interested in a special class of homomorphisms, where the source structure is a set of sequences with the standard concatenation operator.
\begin{definition}\label{def:hom}
A function $h: \seq \to D$ is $\odot$-homomorphic for binary operator $\odot: D \times D \to D$ iff for all sequences $x,y \in \seq$ we have $h(x \ccat y) = h(x) \odot h(y)$.
\end{definition}

Note that $\odot$ is necessarily associative since concatenation is associative (over sequences). Moreover, $h([])$ (where $[]$ is the empty sequence) is the unit of $\odot$, since $[]$ is the unit of concatenation. If $\odot$ has no unit, then $h([])$ is undefined.
There is formal connection between homomorphisms and divide-and-conquer style parallelism, when the divide operator is the inverse of concatenation:

\begin{proposition}(from \cite{Gibbons96}) \label{prop:map-reduce}
A function $f$ is a homomorphism if and only if it can be written as a composition of a map and a reduction.
\end{proposition}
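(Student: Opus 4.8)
The plan is to prove both directions of the biconditional in Proposition~\ref{prop:map-reduce}, establishing that a function $f : \seq \to D$ is a homomorphism if and only if $f = \mathit{reduce}(\odot) \circ \map(g)$ for some base function $g$ and associative operator $\odot$. The easier direction is the ``if'' part. Suppose $f = \mathit{reduce}(\odot) \circ \map(g)$, where $\map(g)$ applies $g$ elementwise to turn a sequence $[a_1, \dots, a_n]$ into $[g(a_1), \dots, g(a_n)]$, and $\mathit{reduce}(\odot)$ folds the associative operator $\odot$ across the result. First I would observe that for any two sequences $x, y \in \seq$, the map distributes over concatenation, i.e. $\map(g)(x \ccat y) = \map(g)(x) \ccat \map(g)(y)$, since $g$ acts pointwise and does not see the split. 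Then, because $\odot$ is associative, reducing the concatenated list factors as a reduction of the two halves combined by a single application of $\odot$: $\mathit{reduce}(\odot)(u \ccat v) = \mathit{reduce}(\odot)(u) \odot \mathit{reduce}(\odot)(v)$. Composing these two facts gives $f(x \ccat y) = f(x) \odot f(y)$, which is exactly Definition~\ref{def:hom}, so $f$ is $\odot$-homomorphic.

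The ``only if'' direction is the more interesting one, and it is essentially constructive. Assume $f$ is $\odot$-homomorphic for some binary operator $\odot : D \times D \to D$. The key step is to define the map function $g : \scalars \to D$ by setting $g(a) = f([a])$, i.e. $g$ sends each scalar to the result of $f$ applied to the singleton sequence containing it. I would then show by induction on the length of a sequence $x = [a_1, \dots, a_n]$ that $f(x) = g(a_1) \odot g(a_2) \odot \cdots \odot g(a_n)$. The base case $n = 1$ holds by the definition of $g$. For the inductive step, I would write $x = x' \ccat [a_n]$ and apply the homomorphism property from Definition~\ref{def:hom} to obtain $f(x) = f(x') \odot f([a_n]) = f(x') \odot g(a_n)$, then invoke the induction hypothesis on $f(x')$. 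This establishes that $f(x)$ is exactly what $\mathit{reduce}(\odot)$ computes when applied to $\map(g)(x) = [g(a_1), \dots, g(a_n)]$, so $f = \mathit{reduce}(\odot) \circ \map(g)$, completing the direction.

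The point I expect to require the most care is the treatment of the empty sequence and the unit element, rather than the inductive argument itself. As noted in the discussion following Definition~\ref{def:hom}, $f([])$ must be the unit of $\odot$, and if $\odot$ has no unit then $f([])$ is undefined; correspondingly, the reduction of an empty list is only well-defined when a unit is available. I would handle this by either restricting attention to nonempty sequences (which matches the operational reading of map-reduce over nonempty collections) or by explicitly adjoining a unit $e$ to $D$ and defining $\mathit{reduce}(\odot)$ on the empty list to be $e$, checking that this is consistent with $f([]) \odot d = d \odot f([]) = d$ for all $d$ in the image of $f$. The essential mathematical content of the proposition — the correspondence between associativity of $\odot$ and the compositional structure of map-reduce — lives entirely in the two distribution lemmas of the first paragraph and the induction of the second; the unit bookkeeping is the only place where a careless statement could make the biconditional fail at the boundary case.
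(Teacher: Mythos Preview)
Your proof is correct and follows the standard argument for this classical result. Note, however, that the paper does not actually prove this proposition: it is stated with attribution to \cite{Gibbons96} and used as background, with no proof given in the paper itself. Your argument---defining $g(a) = f([a])$ for the forward direction and using the distributivity of $\map$ over $\ccat$ together with associativity of $\odot$ for the reverse---is exactly the standard proof one finds in the Bird--Meertens tradition, and your attention to the unit/empty-sequence boundary case is appropriate.
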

\noindent In the context of this paper, parallelization is formally the above transformation to a map and a reduction composition.

\subsection{Model of a loop body}
Our input programs are imperative whereas the representation of the loop nests for the theoretical results in this paper and for algorithmic units is functional.  The input program is translated to nested systems of equations, which can easily be converted to a recursive functional form. Here, we quickly outline the steps of this transformation and define the program models at each stage.

\paragraph{Input programs}
Figure \ref{fig:syntax} presents the syntax of the input sequential programs. We assume an imperative language with basic constructs for branching and looping. Variables are of scalar types \texttt{int} or \texttt{bool} and we can build nested sequences from these types.

\begin{figure}[htbp]
\begin{center}
  \framebox{
    \begin{minipage}{7.7cm}
      \begin{minipage}[t]{\linewidth}
        $v \in \LhVar ::= v'[e] $ \hfill $v' \in \LhVar, e\in \Exp$\\
        \hspace*{55pt}$|\ x$ \hfill $x \in \Var$\\
      \end{minipage}%
      \hfill
      \begin{minipage}[t]{\linewidth}
        $e \in \Exp ::= e \circ e'$ \hfill $e,e' \in \Exp$\\
        \hspace*{45pt}$|\ e \olessthan e'$ \hfill $e,e' \in \Exp$\\
        \hspace*{45pt}$|\  be \curlywedge be' | \neg be$ \hfill $be,be' \in \Exp$\\
        \hspace*{45pt}$|\  v $ \hfill $v \in \LhVar$\\
        \hspace*{45pt}$|\  \ite{be}{e}{e'}$  \\
        \hspace*{45pt}$|\  k$ \hfill $ k \in \mathbb{Z}, \mathbb{Q}, \mathbb{R}$\\
        \hspace*{45pt}$|\  \true \ |\ \false$ \hfill
      \end{minipage}            %
      \hfill
      \begin{minipage}[t]{\linewidth}\small
        $\Pr ::= c;c'$ \hfill $c,c' \in \Pr$ \\
        \hspace*{27pt}$|v \texttt{:=} e$ \hfill $v \in \LhVar, e \in \Exp$ \\
        \hspace*{27pt}$| \texttt{if} \: (e) \: \{ c_\top \} \: \texttt{else} \: \{c_\bot\}$ \hfill $be \in \Exp, c_\top, c_\bot \in \Pr$\\
        \hspace*{27pt}$| \loopfor{i \in \mathcal{I}}{ c }$ \hfill $i \in \iter$
      \end{minipage}
    \end{minipage}}%
\caption{\small Program Syntax \label{fig:prog_syntax}.  The binary $\circ$
      operator represents any arithmetic operation $(+,-,*,/)$,  $\olessthan$ operator
      represents any comparator $(<, \le, >, \ge, = , \neq)$. $\mathcal{I}$ is an iteration domain, and $\curlywedge$ operator
      represents any boolean operation $(\wedge, \vee)$.
      \label{fig:syntax}}
  \end{center}
\end{figure}

For readability in our paper, we use simple iterators and integer indexes (instead of the generic $i \in \mathcal{I}$). In principle, any collection with an iterator and a split function that implements the inverse of concatenation works. There has been a lot of research on iteration spaces and iterators (e.g. \cite{ZuckPFGH02} in the context of translation validation and \cite{Kejariwal04} in the context of partitioning) that formalize complex traversals by abstract iterators.

\paragraph{State and Input Variables}  Let $\Var$ be the set of all variables that appear in the loop nest. We partition $\Var$ into two sets of variables: $\SVar$ denotes the set of {\em state variables} which are those that appear on the left-hand side of an assignment statement (anywhere, even outside the loop nest). $\IVar$ denotes the set of {\em input variables} and $\IVar = \Var - \SVar$. Note that state variables may be subscripted array accesses.

\paragraph{Nested systems of equations}
\begin{figure}[h]
  \begin{minipage}{5.8cm}\small
   \framebox{\begin{minipage}{\linewidth}
   $E = \left(
     \begin{matrix}
       v_1 = Exp_1(\SVar, \IVar)\\
       \cdots\\
       (s_{i_1}, s_{i_2}, \dots , s_{i_p}) = \loopfor{j \in \mathcal{J}}{E'}\\
       \cdots \\
       v_q = Exp_q(\SVar, \IVar)\\
     \end{matrix}
   \right)$
 \end{minipage}}
 \caption{\small Nested systems of equations: $E'$ is nested in $E$.}
 \label{fig:loop-body}
\end{minipage}
\end{figure}
A loop body is modelled by a system of ordered recurrence equations, where each equation is either a {\em simple} equation or a {\em loop} equation. Given state variables $\SVar = \{s_1, \dots, s_q\}$ and input variables $\IVar$, a simple equation is of the form $v_i = Exp_i (\SVar, \IVar)$ where $v_i \in \LhVar$ and  the right hand side is a constant-time computable expression of the input program (see Figure \ref{fig:syntax}). A loop equation is of the form of the middle line of Figure \ref{fig:loop-body}, where $\{s_{i_1}, s_{i_2}, \dots , s_{i_p}\}$ are all the variables modified by the loop body,  $j \in \mathcal{J}$ is an arbitrary iterator,  and $E'$ is the body of the nested loop.

The body of any loop in our input language can be translated to the system of recurrence equations defined above.

\paragraph{Conversion to a system of equations}
 Converting the body of a loop nest to a system of ordered recurrence equations (of the type outlined by Figure \ref{fig:loop-body}) is a process that involves a transformation of the loop and conditional statements, and a mapping of  simple assignments ($v \texttt{:=} e$) to equations.

For a conditional statement $\texttt{if} \: (e) \: \{ c_\top \} \: \texttt{else} \: \{c_\bot\}$ where $e$ is an expression of the input program and $c_\top$ and $c_\bot$ are two programs, we apply the conversion procedure recursively to each of the programs, and obtain two systems of ordered recurrence equations $E_{\top}$ and $E_{\bot}$. For each variable $v_i$ that appears either in $E_{\top}$ or $E_{\bot}$ on the left hand side of an equation, we add an equation of the form $v_i = \tern{e}{Exp_{\top}}{Exp_{\bot}}$ in the current system, where the expressions $Exp_\top$ and $Exp_\bot$ are the expressions on the right hand side of $v_i = \dots$ in $E_\top$ and $E_\bot$ respectively. If the equation assigning $v_i$ is not present in one of the branches, the expression on the right hand side is just the variable itself (the branch does not modify it).

For a loop  $\loopfor{i \in \mathcal{I}}{ c }$ where $c$ is a program, we apply the conversion procedure to $c$ and obtain a system $E'$. In the parent system, we add the equation $(s_{i_1}, s_{i_2}, \dots, s_{i_q}) = \loopfor{i \in \mathcal{I}}{ E'}$ where $(s_{i_1}, s_{i_2}, \dots, s_{i_q})$ are the state variables modified by the body $c$. If only one cell of a collection is assigned in the loop body $c$, we consider that the whole collection has been modified.

\paragraph{Conversion to functional form}

Given a loop body in the form of a system of ordered recurrence equations, one can produce a function (implemented in a simple functional language with let-bindings) by replacing each equation by a binding and creating a recursive function for each of the inner loops. We choose to represent arrays by lists, and an assignment to a cell in the system of equations is translated by binding a list where the corresponding element has been modified.


\section{Multidimensional Collections} \label{sec:mfunction}

Type $\seq^n$ is inductively defined as the set of all $n$-dimensional {\em sequences}  (for $n \ge 1$), with the base case of $\seq^0 = \scalars$ (set of scalars). We generalize the standard sequence concatenation operator $\ccat$ to a family of operators $\ccat: \seq^n \times \seq^{n} \to \seq^n$ (for all $n \in \mathbb{N}^+$). For any $\sigma \in \seq^{n-1}$, we have $[\sigma] \in \seq^n$ which is an $n$-dimensional sequence with a single element $\sigma$.

\subsection{Functions over Multidimensional Collections}
In Section \ref{sec:preliminaries}, we noted that loop nests are translated to functional form. We use this functional form as the formal representation for all of our theoretical results.

\begin{definition}(Multidimensional Rightward)\label{def:mrw}
A function $f: \seq^n \to D$ ($n > 1$) is rightward iff there exists a family of rightward (or leftward) functions ${\mathbbm G}:  D \to (\seq^{n-1}  \to D)$ and an operator $\otimes: D \times D \to D$ such that for all $\sigma \in \seq^n, \delta \in \seq^{n-1}$, we have
 $f(\sigma \ccat [\delta]) =  f(\sigma) \otimes \mathbbm{G}(f(\sigma))(\delta)$.
\end{definition}
The base case of $n = 1$ falls on the classic Definition \ref{def:rw}. A rightward function's computation is illustrated in the diagram in Figure \ref{fig:diagram}(a). Note that the value of $f(\sigma)$ (as the selector in the family of functions) serves as a type of {\em carry over state} and corresponds to the data flow represented by the black arrows in Figure \ref{fig:diagram}(a). The family of functions can be viewed as only differing in their recursion base case.

When $f$ corresponds to a loop nest, the family of rightward functions $\mathbbm{G}$ represents all the instances of the inner loop nest (in isolation from the outermost loop) and the operator $\otimes$  represents the (loop free) computation performed in the body of the outer loop. The domain $D$ corresponds to all valuations of the state variables ($\SVar$) of the loop nest.

A special case of Definition \ref{def:mrw} is when the family of functions collapses into exactly one function, which corresponds to {\em memoryless} loops as introduced in Section \ref{sec:mp}. We can formally define {\em memoryless functions} by removing the dependency on the context as follows:

\begin{definition}(Memoryless)\label{def:memoryless}
A function $f: \seq^n \to D$  is (rightward) memoryless iff there exists a rightward (or leftward) function $g: \seq^{n-1} \to D$ and a binary operator $\oplus: D \times D \to D$ such that for all $\sigma \in \seq^n,\delta \in \seq^{n-1}$ we have $
 f(\sigma \ccat [\delta]) = f(\sigma) \oplus g(\delta)$.
\end{definition}

The key difference between the formulation in Definition \ref{def:mrw}, and that of Definition \ref{def:memoryless} is the computation performed over $\delta$ (i.e. function $g$) has no dependency on the partially computed value of $f(\sigma)$; hence the use of terminology {\em memoryless}. Figure \ref{fig:diagram}(b) illustrates the computation of a memoryless function. As the example in Section \ref{sec:wbp} demonstrated, not all rightward functions are memoryless.

\begin{proposition}\label{prop:inmap}
For every rightward memoryless function $f$ (from Definition \ref{def:memoryless}), we have
$f(\sigma) = \foldl(\oplus) \circ \map(g)(\sigma)$.
\end{proposition}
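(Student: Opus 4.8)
The plan is to prove the identity by induction on the length of the outermost sequence $\sigma \in \seq^n$, viewing $\sigma$ as a sequence $[\delta_0, \delta_1, \ldots, \delta_{k-1}]$ of $(n-1)$-dimensional elements $\delta_i \in \seq^{n-1}$. The two ingredients are the recursive characterization of $f$ from Definition~\ref{def:memoryless}, namely $f(\sigma \ccat [\delta]) = f(\sigma) \oplus g(\delta)$, together with the standard equational laws describing how $\map$ and $\foldl$ interact with appending a single element to the right of a list:
\begin{align*}
\map(g)(\sigma \ccat [\delta]) &= \map(g)(\sigma) \ccat [g(\delta)], \\
\foldl(\oplus)(L \ccat [d]) &= \foldl(\oplus)(L) \oplus d.
\end{align*}
I would take the seed of the left fold to be $f([])$, so that $\foldl(\oplus)$ applied to the empty list returns $f([])$.

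For the base case $\sigma = []$, we have $\map(g)([]) = []$ and $\foldl(\oplus)([]) = f([])$ by the choice of seed, so both sides agree. For the inductive step, I would write $\sigma = \sigma' \ccat [\delta]$ and assume the claim for the shorter sequence $\sigma'$. Applying Definition~\ref{def:memoryless} gives $f(\sigma) = f(\sigma') \oplus g(\delta)$, and the induction hypothesis rewrites $f(\sigma')$ as $\foldl(\oplus) \circ \map(g)(\sigma')$. On the other side, the two append-laws above yield
\[
\foldl(\oplus) \circ \map(g)(\sigma' \ccat [\delta]) = \foldl(\oplus)\bigl(\map(g)(\sigma') \ccat [g(\delta)]\bigr) = \bigl(\foldl(\oplus) \circ \map(g)(\sigma')\bigr) \oplus g(\delta),
\]
which matches $f(\sigma') \oplus g(\delta)$ exactly, closing the induction.

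I do not expect a genuine obstacle: the proposition is essentially the observation that the defining equation of Definition~\ref{def:memoryless} is already in the shape of a left fold over the mapped elements. The only points requiring care are (i) fixing the convention for the seed of $\foldl$ — here $f([])$, which is forced by instantiating the memoryless equation at $\sigma = []$ to obtain $f([\delta]) = f([]) \oplus g(\delta)$ — and (ii) correctly stating and invoking the two append-laws for $\map$ and $\foldl$, which are routine structural facts about lists. I would also note that the rightward (or leftward) nature of $g$ plays no role in this argument: $g$ is used purely as a black-box function on $\seq^{n-1}$, so the decomposition holds regardless of how $g$ itself is computed.
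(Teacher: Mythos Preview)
Your proposal is correct and is exactly the natural way to establish the identity: the paper itself omits the proof, stating only that it is ``straightforward,'' and your induction on the outermost sequence length using the defining equation of Definition~\ref{def:memoryless} together with the append-laws for $\map$ and $\foldl$ is precisely the routine argument being alluded to.
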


The proof of the above proposition is straightforward. It suggests that all instances of $g$ (the inner loop nest) can be parallelized, through the $\map$, even if their results have to be combined sequentially in the outermost loop with $\foldl$.

\subsection{Multidimensional Homomorphisms}

Definition \ref{def:hom} applies to multidimensional rightward functions in a straightforward way. Function $h: \seq^n \to D$ is  $\odot$-homomorphic for the binary operator $\odot: D \times D \to D$ iff for all sequences $\sigma, \sigma' \in \seq^n$, we have $h(\sigma \ccat \sigma') = h(\sigma) \odot h(\sigma')$.
An interesting link exists between the structure of a multidimensional rightward function and its homomorphic properties, which is captured by the proposition below:

\begin{proposition}\label{prop:nec}
If a function $h: \seq^n \to D$ is a homomorphism, then it is memoryless.
\end{proposition}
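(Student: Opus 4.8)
The plan is to read the two witnesses required by Definition~\ref{def:memoryless} straight off the homomorphism equation, and then to spend the actual effort on the one nontrivial obligation: that the extracted inner function is rightward or leftward. First I would instantiate Definition~\ref{def:hom} with $\sigma' = [\delta]$, so that for every $\sigma \in \seq^n$ and $\delta \in \seq^{n-1}$,
\[
 h(\sigma \ccat [\delta]) = h(\sigma) \odot h([\delta]).
\]
This already has the shape of the memoryless identity, so the obvious candidates are $\oplus := \odot$ and $g := \lambda\delta.\,h([\delta])$. With these choices the defining equation $h(\sigma \ccat [\delta]) = h(\sigma) \oplus g(\delta)$ holds for free, and nothing about the combining operator remains to be checked.

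Second, I would discharge the requirement that $g$ be rightward or leftward. Here I would use that, in the multidimensional setting, Definition~\ref{def:hom} is applied to functions that are already rightward in the sense of Definition~\ref{def:mrw}; thus $h$ comes equipped with a family $\mathbbm{G}: D \to (\seq^{n-1} \to D)$ of rightward/leftward functions and an operator $\otimes$. Evaluating the rightward identity of Definition~\ref{def:mrw} at the empty outer sequence $\sigma = []$ and using $[] \ccat [\delta] = [\delta]$ gives
\[
 g(\delta) = h([\delta]) = h([]) \otimes \mathbbm{G}(h([]))(\delta).
\]
Writing $c := h([])$ and $G_0 := \mathbbm{G}(h([]))$, the function $G_0$ is rightward or leftward simply because it is a member of the family $\mathbbm{G}$. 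If $c$ happens to be a left unit of $\otimes$ (for instance when $\otimes$ coincides with $\odot$, whose unit is $h([])$), then $g = G_0$ and we are done. To avoid relying on this, I would instead take $g := G_0$ as the witness and relocate the constant prefix into the combining operator, defining $\oplus(d,e) := d \odot (c \otimes e)$. Then
\[
 h(\sigma) \oplus g(\delta) = h(\sigma) \odot \bigl(c \otimes G_0(\delta)\bigr) = h(\sigma) \odot h([\delta]) = h(\sigma \ccat [\delta]),
\]
so the memoryless equation still holds, and now the inner witness $g = G_0$ is manifestly rightward or leftward.

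The hard part is precisely this second step: the homomorphism hypothesis only constrains behaviour along the \emph{outermost} dimension, so on its own it says nothing about how $h$ behaves when a single row $[\delta]$ is grown in an \emph{inner} dimension. The rightward/leftward structure of $g$ must therefore be imported from the assumed rightward form of $h$, and the awkward point is the constant factor $c \otimes (\cdot)$, which can destroy rightwardness if left in front of $G_0$; the clean resolution is to push it into $\oplus$ as above. I would also flag, as a check on where the hypotheses are actually used, that the rightward assumption on $h$ is genuinely necessary: a plain outer-dimension homomorphism need not be memoryless (for example, counting the rows of a matrix that are palindromes is $+$-homomorphic at the top level, yet its inner function $\delta \mapsto h([\delta])$, the indicator that $\delta$ is a palindrome, is neither rightward nor leftward), which is exactly the situation that Definition~\ref{def:mrw} rules out.
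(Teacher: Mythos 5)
Your proof is correct, and its core is exactly the paper's argument: instantiate the homomorphism equation at $\sigma' = [\delta]$ and read off $g(\delta) = h([\delta])$ with $\oplus = \odot$. The paper's proof stops there.

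Where you go further is in discharging the clause of Definition~\ref{def:memoryless} that requires the witness $g$ to be rightward or leftward --- a condition the paper's one-line proof leaves implicit. Your resolution (pull $G_0 = \mathbbm{G}(h([]))$ out of the rightward structure of $h$ via Definition~\ref{def:mrw} at $\sigma = []$, take $g := G_0$, and absorb the constant prefix $c \otimes (\cdot)$ into $\oplus$ by setting $\oplus(d,e) := d \odot (c \otimes e)$) is sound and makes explicit that the proposition really does rely on $h$ being multidimensional rightward, not merely outer-level homomorphic; the surrounding text of the paper does restrict attention to rightward functions, which is why the paper can afford to elide this. One small caveat on your closing sanity check: the palindrome-counting example correctly shows that the \emph{natural} witness $\delta \mapsto h([\delta])$ is neither rightward nor leftward, but concluding that $h$ itself is not memoryless would additionally require ruling out every other admissible pair $(g,\oplus)$ --- with an unbounded scalar codomain one can in principle encode arbitrary prefixes into a single value, which is precisely the degeneracy the paper later controls with complexity budgets. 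Since you present this only as a flag on where the hypotheses are used, it does not affect the validity of the proof.
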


\begin{proof}
Since $h$ is $\odot$-homomorphic, for all sequences $\sigma,\sigma' \in \seq^n$ we have:
\[h(\sigma \ccat \sigma') = h(\sigma) \odot h(\sigma')\]
and therefore, more specifically, for all $\sigma\in \seq^n$ and $\delta \in \seq^{n-1}$ we have:
\[h(\sigma \ccat [\delta]) = h(\sigma) \odot h([\delta])\].
Now, let $g: \seq^{n-1} \to D'$ be defined so that $g(\delta)  = h([\delta])$ and let $\oplus = \odot$ in Definition \ref{def:memoryless}; we can conclude that  $h$ is memoryless.
\end{proof}


The converse of Proposition \ref{prop:nec} does not hold.

\begin{example}
Recall the maximum bottom box example from Section \ref{sec:introduction}. The function corresponding to Figure \ref{fig:mbbs} is memoryless, but as discussed, not a homomorphism.
\end{example}

For a memoryless function to be a homomorphism, an extra condition is required which is outlined below.

\begin{proposition}\label{prop:conv}
If a function $f: \seq^n \to D$ is (rightward) memoryless and defined
by function $g$ and binary operator $\oplus$ (of Definition \ref{def:memoryless}),
{\bf and} if the function $h: \seq_{D} \to D$ defined as
\begin{align*}
h([]) &= f([]) \\
\forall a \in D: h(x \ccat [a]) &= h(x) \oplus a
\end{align*}
is $\odot$-homomorphic for some binary operator $\odot: D \times D \to D$, then $f$ is $\odot$-homomorphic. We refer to function $h$ as the {\bfseries summarized} version of $f$.
\end{proposition}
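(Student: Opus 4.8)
The plan is to reduce the homomorphism property of $f$ to that of its summarized version $h$ by exhibiting $f$ as a composition. The central observation is the factorization
\[ f = h \circ \map(g), \]
where $\map(g)$ sends an $n$-dimensional sequence $\sigma = [\delta_1, \ldots, \delta_k] \in \seq^n$ (with each $\delta_i \in \seq^{n-1}$) to the sequence $[g(\delta_1), \ldots, g(\delta_k)] \in \seq_D$ of inner-loop results. Once this identity is in hand, the conclusion follows from two structural facts: that $\map(g)$ commutes with concatenation, and that $h$ is $\odot$-homomorphic by assumption.

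First I would establish the factorization by induction on the top-level length $k$ of $\sigma$. The base case is immediate, since $\map(g)([]) = []$ and $h([]) = f([])$ by the definition of $h$. For the inductive step, I would take $\sigma \ccat [\delta]$ and compute
\[ f(\sigma \ccat [\delta]) = f(\sigma) \oplus g(\delta) \]
using the memoryless property of $f$ (Definition \ref{def:memoryless}), then substitute the inductive hypothesis $f(\sigma) = h(\map(g)(\sigma))$ and rewrite the right-hand side as
\[ h(\map(g)(\sigma)) \oplus g(\delta) = h\bigl(\map(g)(\sigma) \ccat [g(\delta)]\bigr) = h\bigl(\map(g)(\sigma \ccat [\delta])\bigr), \]
where the first equality is the rightward recurrence defining $h$ and the second uses that $\map(g)$ distributes over appending a single element. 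Equivalently, this factorization is just the content of Proposition \ref{prop:inmap}, reading its $\foldl(\oplus)$ (with initial accumulator $f([]) = h([])$) as the function $h$.

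With $f = h \circ \map(g)$ available, the homomorphism property becomes a short calculation. Since $\map(g)$ applies $g$ pointwise, it satisfies $\map(g)(\sigma \ccat \sigma') = \map(g)(\sigma) \ccat \map(g)(\sigma')$ for all $\sigma, \sigma' \in \seq^n$. Combining this with the assumed $\odot$-homomorphy of $h$ yields
\[ f(\sigma \ccat \sigma') = h\bigl(\map(g)(\sigma) \ccat \map(g)(\sigma')\bigr) = h(\map(g)(\sigma)) \odot h(\map(g)(\sigma')) = f(\sigma) \odot f(\sigma'), \]
which is exactly $\odot$-homomorphy of $f$.

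I expect the only delicate point to be the bookkeeping in the factorization step: matching the base value $f([])$ of the left fold implicit in $f$ with the base case $h([]) = f([])$ of the summarized function, and making sure the left-associated, \emph{a priori} non-associative application of $\oplus$ inside $f$ lines up exactly with the rightward recurrence defining $h$ (so that no unjustified reparenthesization slips in). Everything after the factorization is routine, and notably the argument never needs associativity of $\oplus$ itself.
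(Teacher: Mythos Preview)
Your proof is correct and follows essentially the same approach as the paper, which embeds this argument in the $\Leftarrow$ direction of the proof of Theorem~\ref{thm:par-cond}: there the paper unfolds $f(\sigma\ccat\sigma')$ inline to the nested $\oplus$-expression, identifies it with $h(x\ccat y)$ for $x=\map(g)(\sigma)$ and $y=\map(g)(\sigma')$, applies $\odot$-homomorphy of $h$, and folds back. Your presentation is slightly cleaner in that you isolate the identity $f=h\circ\map(g)$ explicitly (and correctly note it is the content of Proposition~\ref{prop:inmap}) before the final calculation, but the underlying argument is the same.
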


Function $h$ corresponds to the concept of a {\em summarized} loop as introduced in Section \ref{sec:mp}. In fact, we can prove that the sufficient conditions in Proposition \ref{prop:conv} are also necessary.

\begin{theorem}\label{thm:par-cond}
The following two statements are equivalent:
\begin{enumerate}
\item Multidimensional rightward function $f$ is $\odot$-homomorphic for some binary operator $\odot: D \times D \to D$.
\item $f$ is memoryless {\bfseries and} function $h: \seq_{D} \to D$, the summarized version of $f$ (see Prop. \ref{prop:conv}) is $\odot$-homomorphic.
\end{enumerate}
\end{theorem}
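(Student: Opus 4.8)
The plan is to prove the two implications separately. The direction $(2) \Rightarrow (1)$ is already established: it is exactly Proposition~\ref{prop:conv}, since statement $(2)$ asserts that $f$ is memoryless and that its summarized version $h$ is $\odot$-homomorphic, which are precisely the hypotheses of that proposition, and its conclusion is that $f$ is $\odot$-homomorphic, i.e. statement $(1)$. So the real work is entirely in the forward direction $(1) \Rightarrow (2)$.

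For $(1) \Rightarrow (2)$, assume $f$ is $\odot$-homomorphic. First I would obtain that $f$ is memoryless directly from Proposition~\ref{prop:nec}, which states that every homomorphism is memoryless; this also fixes the witnessing data, namely $g(\delta) = f([\delta])$ and $\oplus = \odot$ as in the proof of that proposition. The remaining and main task is to show that the summarized function $h$ is itself $\odot$-homomorphic for the \emph{same} operator $\odot$. Here $h: \seq_D \to D$ is the function on sequences over the domain $D$ defined by $h([]) = f([])$ and $h(x \ccat [a]) = h(x) \oplus a$, where $\oplus = \odot$.

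The key step is to establish a correspondence between computations of $h$ and computations of $f$. Concretely, I would show that for any multidimensional sequence $\sigma = [\delta_1] \ccat \cdots \ccat [\delta_k] \in \seq^n$, if we form the $D$-sequence $\bar\sigma = [g(\delta_1), \dots, g(\delta_k)] \in \seq_D$, then $h(\bar\sigma) = f(\sigma)$. This is a straightforward induction on $k$ using the defining recurrence of $h$ together with the memoryless recurrence $f(\sigma \ccat [\delta]) = f(\sigma) \oplus g(\delta)$ from Definition~\ref{def:memoryless}, the two recurrences being identical in shape by construction. Once this dictionary is in place, homomorphy of $h$ on concatenations of such image sequences follows from homomorphy of $f$: given two $D$-sequences that arise as $\bar\sigma$ and $\bar{\sigma'}$, we compute $h(\bar\sigma \ccat \bar{\sigma'}) = f(\sigma \ccat \sigma') = f(\sigma) \odot f(\sigma') = h(\bar\sigma) \odot h(\bar{\sigma'})$, using $\odot$-homomorphy of $f$ in the middle equality.

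The step I expect to be the main obstacle is the \emph{surjectivity/genericity} concern: the definition of $h$ quantifies over \emph{all} $a \in D$ and requires homomorphy for \emph{all} $D$-sequences $x, y \in \seq_D$, not merely those in the image of the map $\sigma \mapsto \bar\sigma$. The correspondence above only controls $h$ on sequences built from values of the form $g(\delta) = f([\delta])$. To close the gap I would argue that the only elements of $D$ that are semantically reachable as inner-loop summaries are exactly those of the form $f([\delta])$, so that $h$ is effectively determined on the relevant domain; equivalently, one restricts $\seq_D$ to sequences over the reachable subset $\{\,f([\delta]) : \delta \in \seq^{n-1}\,\} \subseteq D$, on which the correspondence is total. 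Care must be taken to state the theorem's equivalence relative to this reachable domain (or to note that values of $h$ outside it are irrelevant to the summarized-loop semantics), and this is the point where the intuitive "summarized loop" picture from Section~\ref{sec:mp} must be made precise.
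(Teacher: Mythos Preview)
Your proposal is correct and follows essentially the same route as the paper: for $(2)\Rightarrow(1)$ you invoke Proposition~\ref{prop:conv} (the paper spells out the same unfolding argument directly), and for $(1)\Rightarrow(2)$ both you and the paper use Proposition~\ref{prop:nec} to get memorylessness and then establish the correspondence $h(\bar\sigma)=f(\sigma)$ by unfolding, transferring $\odot$-homomorphy of $f$ to $h$ through that dictionary. The surjectivity concern you flag is real, and the paper handles it exactly as you suggest in your final paragraph: it simply assumes each element $y_i$ of a $D$-sequence is of the form $g(\delta_i)$ for some $\delta_i\in\seq^{n-1}$, i.e., it tacitly restricts $\seq_D$ to sequences over the reachable image of $g$, without further comment.
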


\begin{proof}

\begin{itemize}
\item[$\Rightarrow$:] By Proposition \ref{prop:nec}, we can conclude that $f$ is memoryless. Let $x, y \in \seq_D$,
$y = y_1 \ccat \dots \ccat y_k$,  $y_i = g(\delta_i)$ for some $\delta_i \in \seq^{n-1}$,
$x = x_1 \ccat \dots \ccat x_m$,  and $x_i = g(\gamma_i)$ for some $\gamma_i \in \seq^{n-1}$.

\begin{align*}
h(x \ccat y) &= h(x \ccat [y_1  \dots y_k]) \\
              &= h(x \ccat [y_1 \dots y_{k-1}]) \oplus y_k \\
              &= (h(x \ccat [y_1 \dots y_{k-2}]) \oplus y_{k-1}) \oplus y_k \\
              &= \dots \\
              &= ( \dots (h(x) \oplus y_1) \oplus \dots ) \oplus y_k\\
              &= \dots \\
              &= ( \dots (h([]) \oplus x_1) \oplus \dots ) \oplus y_k              \\
              &= ( \dots (f([]) \oplus g(\gamma_1)) \oplus \dots ) \oplus g(\delta_k) \\
			  &= \dots\\
			  &= f(\gamma_1 \ccat \dots \ccat \gamma_m \ccat \delta_1 \ccat \dots \delta_k)\ \\
			  &= f(\gamma_1 \ccat \dots \ccat \gamma_m) \odot f(\delta_1 \ccat \dots \delta_k) \\
			  &= \dots \\
			  &= h(x) \odot h(y)
\end{align*}
\item[$\Rightarrow$:] Let $\sigma = \gamma_1 \ccat \dots \ccat \gamma_m$, $\sigma' = \delta_1 \ccat \dots \delta_k$,
$y = y_1 \ccat \dots \ccat y_k$,  $y_i = g(\delta_i)$,
$x = x_1 \ccat \dots \ccat x_m$,  and $x_i = g(\gamma_i)$.

\begin{align*}
f(\sigma \ccat \sigma') &= f(\gamma_1 \ccat \dots \ccat \gamma_m \ccat delta_1 \ccat \dots delta_k) \\
						&= \dots\\
						&= ( \dots (f([]) \oplus g(\gamma_1)) \oplus \dots ) \oplus g(\delta_k) \\
						&= ( \dots (h([]) \oplus x_1) \oplus \dots ) \oplus y_k \\
						&= \dots \\
						&= h(x \ccat y) \\
						&= h(x) \odot h(y) \\
						&= \dots \\
						&= f(\gamma_1 \ccat \dots \ccat \gamma_m) \odot f(\delta_1 \ccat \dots \delta_k) \\
						&= f(\sigma) \odot f(\sigma')
\end{align*}

\end{itemize}

\end{proof}


Theorem \ref{thm:par-cond} states the necessary and sufficient conditions for a recursive function to be parallelizable. For one-dimensional sequences, the statement becomes trivial when the summarized version of the function and the function itself coincide.

The condition of memorylessness captures the essence of modularity of our approach.  Instead of determining parallelizability of $f$ through a direct discovery of a join ($\odot$) for $f$, Theorem \ref{thm:par-cond} lets us check if $f$ is memoryless first,  and then discover a join for a simplified (summarized) version of $f$ (i.e $h$). Recall the diagram in Figure \ref{fig:diagram}(b). Memorylessness of $f$ corresponds to the existence of the {\em map} part a parallel computation of $f$. Parallelizability of $h$ corresponds to the existence of the {\em reduction} part of a parallelizaiton of $f$. The combination of the existence of both the map and the reduction is equivalent to $f$ being homomorphic (according to Proposition \ref{prop:map-reduce}). Theorem \ref{thm:par-cond} makes this formal.


\section{Manufacturing Homomorphisms} \label{sec:lift}

If a function is not a homomorphism, then the first step to parallelization is to {\em lift} it to a homomorphism.

\begin{definition}(Lifting)
Let $f: \seq^n \to D$ be a rightward multidimensional function. $\lift{f}{D'}: \seq^n \to D \times D'$ is a lifting of $f$ if and only if $\lift{f}{D'}$ is rightward and $f = \pi_D \circ \lift{f}{D'}$, where $\pi_D$ is the standard projection down to $D$.

\end{definition}
This definition is mostly consistent with the standard definition of lifting in category theory, other than the additional condition of rightward computability of the extension.

Two types of liftings of a non-homomorphic function $f$ are of interest in this paper: (1) a lifting of a non-memoryless $f$ to a memoryless function; we call this the {\em memoryless lift}, and (2) a lifting of a non-homomorphic $f$ to a homomorphism; this is called a {\em homomorphism lift}.

\subsection{Homomorphism Lift}\label{sec:hom-lift}

Every non-homomorphic function can be made homomorphic by a rather trivial lifting. The observation, previously made in \cite{Gorlatch99}, is formalized below:

\begin{proposition}\label{prop:hom-triv}
Given a rightward function $f: \seq^n \to D$, the function $f \times \iota$ (function product) is a homomorphism where $\iota: \seq^n \to \seq^n$ is the identity function.
\end{proposition}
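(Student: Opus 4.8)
The plan is to show that the function product $f \times \iota$ satisfies the homomorphism equation of Definition \ref{def:hom}, namely that there exists a binary operator $\odot$ on the codomain $D \times \seq^n$ such that $(f \times \iota)(\sigma \ccat \sigma') = (f \times \iota)(\sigma) \odot (f \times \iota)(\sigma')$ for all $\sigma, \sigma' \in \seq^n$. The central idea is that the identity component $\iota$ retains the \emph{entire} input sequence in the output, so that the join operator $\odot$ has access to both original inputs and can simply reconstruct the answer from scratch via $f$ applied to the concatenation of the two stored sequences.

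Concretely, I would first observe that $(f \times \iota)(\sigma) = (f(\sigma), \sigma)$ by definition of the function product and of $\iota$. I would then define the candidate join operator $\odot : (D \times \seq^n) \times (D \times \seq^n) \to (D \times \seq^n)$ by
\[
(d_1, x) \odot (d_2, y) = (f(x \ccat y),\ x \ccat y).
\]
Note that this operator discards the $D$-components $d_1, d_2$ entirely and recomputes $f$ on the concatenated sequences; this is the trivial but crucial move. I would then verify the homomorphism equation directly: for the left side, $(f \times \iota)(\sigma \ccat \sigma') = (f(\sigma \ccat \sigma'),\ \sigma \ccat \sigma')$; for the right side, $(f \times \iota)(\sigma) \odot (f \times \iota)(\sigma') = (f(\sigma), \sigma) \odot (f(\sigma'), \sigma') = (f(\sigma \ccat \sigma'),\ \sigma \ccat \sigma')$, where the last step is just the definition of $\odot$. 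The two sides coincide, establishing that $f \times \iota$ is $\odot$-homomorphic.

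The remaining obligation is to confirm that $f \times \iota$ is genuinely rightward, since the proposition's statement presumes $f$ is rightward and we should stay within that framework. This follows by checking Definition \ref{def:mrw} (or Definition \ref{def:rw} in the base case) for the product: given that $f$ is rightward with family $\mathbbm{G}$ and operator $\otimes$, and that $\iota$ is trivially rightward (appending an element to a sequence is itself a rightward operation, with the appending operator serving the role of $\otimes$), the product of two rightward functions is rightward with the componentwise combination of their operators. This is a routine componentwise verification.

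The main obstacle here is conceptual rather than technical: there is essentially no obstacle to the \emph{correctness} of the construction, which is why the proposition is described as ``rather trivial.'' The only subtlety to flag is that this trivial lifting is useless for actual parallelization, because the join $\odot$ recomputes $f$ on the full concatenated input and therefore runs in the same (non-constant) time as $f$ itself — it provides no speedup. Thus, while the proof is a one-line verification, the real content of the surrounding development (and the reason this proposition is stated only as a baseline) is that one seeks \emph{efficient} liftings where the auxiliary domain $D'$ is small and $\odot$ is cheap, rather than this maximally wasteful choice $D' = \seq^n$.
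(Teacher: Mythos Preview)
Your proof is correct and follows essentially the same approach as the paper: define the join $(d_1,x)\odot(d_2,y)=(f(x\ccat y),\,x\ccat y)$ and verify the homomorphism equation directly. The additional paragraph checking that $f\times\iota$ is rightward is not needed, since Definition~\ref{def:hom} does not require it and the proposition only asserts that $f\times\iota$ is a homomorphism.
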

Intuitively, the extension to the function remembers the entire input, and the join performs the original computation over the concatenated inputs from scratch, ignoring the partially computed results.

\begin{proof}
It is straightforward to see that $f \times \iota$ is a $\circledcirc$-homomorphic with the join operator $\circledcirc : (D \times \seq^n) \times (D \times \seq^n) \to D \times \seq^n$ which is defined as

\[\forall a,b \in D, \sigma, \sigma' \in \seq^n:\ (a, \sigma) \circledcirc (b,\sigma') = (f(\sigma \ccat \sigma'), \sigma \ccat \sigma')\]

since

\begin{align*}
f\times\iota(\sigma \ccat \sigma') &= (f(\sigma \ccat \sigma'), \sigma \ccat \sigma') \\
						&= (f(\sigma), \sigma) \circledcirc (f(\sigma'), \sigma')
						&= f\times\iota(\sigma) \circledcirc f\times\iota(\sigma')
\end{align*}
\end{proof}


Note that this trivial lifting does not really correspond to a parallelization of the function. Formally, it provides us with an associative {\em reduction} (hence the applicability of Proposition \ref{prop:map-reduce}). Practically, it is analogous to a sequential computation.
Proposition \ref{prop:hom-triv} is trivial but significant in that it states that a function can always be made homomorphic. It is then important to seek {\em an efficient lifting} of a non-homomorphic function to a homomorphism for the purpose of code parallelization. In Section \ref{sec:eff-hom}, we formulate {\em efficient liftings}.

Here, we state a result which parallels Theorem \ref{thm:par-cond}, provides the theoretical guarantee that it is {\em sound and complete} to use the {\em summarized} loop for lifting instead of the original. Consider the diagram below:

\begin{center}
\includegraphics[scale=0.25]{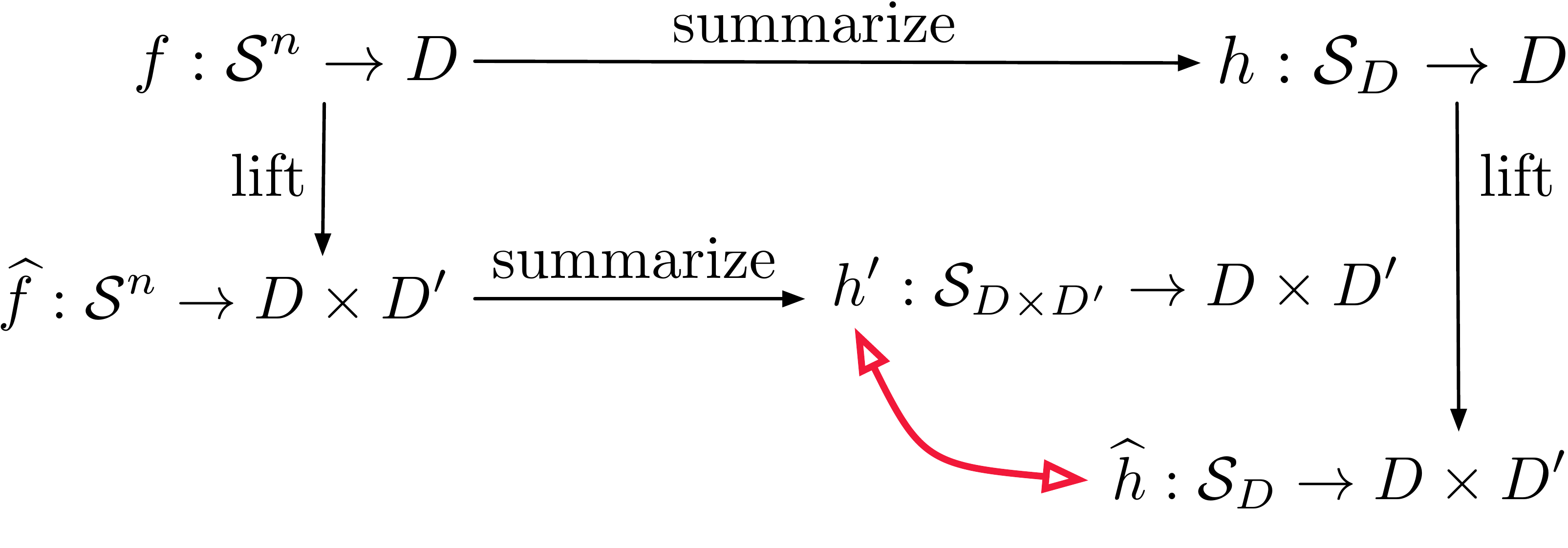}
\end{center}

$f$ is summarized and then lifted on the top, whereas it is first lifted and then summarized on the bottom part of the diagram.  Note that $\widehat{h}$ and ${h'}$ do not have the same function signature; they agree on their ranges, but their domains are sequences of two different types.
Therefore, this is not a clean commutative diagram.
The key insight is that the two functions are identical up to a limitation of $h'$ that forgets the extra information in its input sequences from $D'$; information that is provably redundant for the computation of $h'$. The diagram commutes after this restriction is applied to $h'$ to get to $\widehat{h}$.

The main ingredients of a lift, that is what the extra information $D'$ is and how it should be computed, are both discoverable through a lifting of the simple function $h$ in place of $f$.

\begin{theorem}\label{thm:convl}
Let  $f: \seq^n \to D$ be a (rightward) memoryless function,  and summarized as $h: \seq_{D} \to D$. There exists a homomorphic lifting $\widehat{h}: \seq_{D} \to D \times D'$ of $h$ if and only if there exists a homomorphic lifting $\widehat{f}: \seq^n \to D \times D'$ of $f$. Moreover, $\widehat{h}$ coincides with a summarization of $\widehat{f}$.
\end{theorem}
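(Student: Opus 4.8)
The backbone of my argument would be Theorem~\ref{thm:par-cond} together with the factorization $f = h \circ \map(g)$ that memorylessness already supplies. Since $h$ is by definition (Proposition~\ref{prop:conv}) exactly the left fold of $\oplus$ started from $f([])$, Proposition~\ref{prop:inmap} gives $f(\sigma) = h(\map(g)(\sigma))$ for every $\sigma \in \seq^n$. The plan is to build the missing lifting in each direction by composing with $\map(g)$ in one case and by forgetting the $D'$ component of the input sequence in the other, and then to certify that the result is a homomorphism \emph{not} by producing a join by hand but by invoking Theorem~\ref{thm:par-cond} for the lifted function. This keeps the two directions symmetric and routes all of the homomorphism bookkeeping through a single already-proved result.

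For the implication from a lift of $h$ to a lift of $f$, given a homomorphic lift $\widehat{h}$ of $h$ (rightward with operator $\oplus'$, homomorphic with join $\odot'$, and $\pi_D \circ \widehat{h} = h$), I would set $\widehat{f} := \widehat{h} \circ \map(g)$. The projection identity $\pi_D \circ \widehat{f} = \pi_D \circ \widehat{h} \circ \map(g) = h \circ \map(g) = f$ is immediate, so $\widehat{f}$ is a lift of $f$. Unfolding the outer recurrence, $\widehat{f}(\sigma \ccat [\delta]) = \widehat{h}(\map(g)(\sigma) \ccat [g(\delta)]) = \widehat{h}(\map(g)(\sigma)) \oplus' g(\delta) = \widehat{f}(\sigma) \oplus' g(\delta)$, so $\widehat{f}$ is memoryless (its inner function is $g$ padded into $D \times D'$ and its operator is $\oplus'$ ignoring the padded component), and its summarization is exactly $\widehat{h}$ up to the dummy $D'$ entries in the input. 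Because $\widehat{h}$ is $\odot'$-homomorphic, so is this summarization, and Theorem~\ref{thm:par-cond} then yields that $\widehat{f}$ is a homomorphism. The same computation delivers the ``moreover'' clause: the summarization of $\widehat{f}$ coincides with $\widehat{h}$ once the redundant $D'$ entries of its argument are discarded.

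For the converse, given a homomorphic lift $\widehat{f}$ of $f$, I would first apply Proposition~\ref{prop:nec} and Theorem~\ref{thm:par-cond} to $\widehat{f}$: being a homomorphism it is memoryless, and its summarization $h' : \seq_{D \times D'} \to D \times D'$ is a homomorphism. I would then define $\widehat{h}$ on $\seq_D$ by reading $h'$ off the $D$-parts of its input, i.e.\ by restricting $h'$ to input sequences whose $D'$ component has been forgotten. Homomorphy of $\widehat{h}$ transfers for free, since the map $\seq_D \to \seq_{D \times D'}$ that reinstates the (irrelevant) $D'$ component preserves concatenation while $h'$ is a homomorphism; rightwardness and the base value $\widehat{h}([]) = \widehat{f}([])$ transfer the same way.

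The main obstacle is precisely the domain mismatch flagged before the theorem: $h'$ lives on $\seq_{D \times D'}$ whereas any lift of $h$ must live on $\seq_D$, so the restriction above is legitimate only if the $D'$ component of $h'$'s argument is genuinely redundant. Establishing this redundancy, and simultaneously verifying $\pi_D \circ \widehat{h} = h$, forces me to choose the memoryless decomposition of $\widehat{f}$ compatibly with that of $f$ — so that the $D$-component of $\widehat{\oplus}$ reduces to $\oplus$ and the inner function projects onto $g$ — and then to argue by induction on sequence length that the values produced by $\widehat{h}$ stay in the image of $\widehat{f}$, where the projection identity $\pi_D \circ \widehat{f} = f$ is actually available. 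Once this alignment is secured, both implications and the coincidence claim follow by straightforward unfolding of the rightward recurrences, and no fresh join operator is ever constructed.
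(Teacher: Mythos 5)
Your proposal is correct and matches the paper's proof in all essentials: the paper's lift $f'$ of $f$ is shown to satisfy $f'(\sigma)=\widehat{h}\left([g(\delta_1)]\ccat\cdots\ccat[g(\delta_m)]\right)$, i.e.\ it is exactly your $\widehat{h}\circ\map(g)$, and the converse likewise recovers $\widehat{h}$ from the summarization of $\widehat{f}$ modulo the element-wise projection $\bar{\pi}_D$ that forgets the redundant $D'$ entries of the input sequence. The only differences are cosmetic: you certify homomorphy by invoking Theorem~\ref{thm:par-cond} where the paper argues directly from associativity of the join $\odot$ (which is what the proof of Theorem~\ref{thm:par-cond} amounts to anyway), and you explicitly flag the well-definedness of the choice of $\delta$ with $g(\delta)=a$ in the reinstating map, a point the paper leaves implicit in its definition of $h'(x\ccat [a])=h'(x)\odot g'(\delta)$.
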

Additionally, the theorem guarantees that auxiliary code synthesized for the summarized loop constitutes a lifting of the original loop.




In order to give a proof of Theorem \ref{thm:convl}, we state and prove each path in the diagram as a separate proposition, which correspond to the {\em if} and the {\em only if} directions of Theorem \ref{thm:convl}.

\begin{proposition}\label{aprop:lift}
Let  $f: \seq^n \to D$ be a rightward) memoryless function defined by helper function $g$ (from Definition \ref{def:memoryless}), and let $h: \seq_{D} \to D$ be its summarized version defined through $\oplus$. If $h$ can be lifted to a $\odot$-homomorphic function
$h': \seq_{D} \to D \times D'$, then  there exists a lifting $f': \seq^n \to D \times D'$ of $f$ that is $\odot$-homomorphic, and $h$ is equivalent to the summary of $f'$ up to the projection of its input sequence down to domain $D$.
\end{proposition}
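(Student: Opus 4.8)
The plan is to obtain the lifting $f'$ by transporting the given lift $h'$ backwards through the summarization map, and then to read off each of the three required properties --- that $f'$ is a lifting, that it is $\odot$-homomorphic, and that its summary recovers $h$ --- from the corresponding property of $h'$. The whole argument is a transport-of-structure along a single auxiliary map, so the first task is to make that map explicit.

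For $\sigma = \gamma_1 \ccat \dots \ccat \gamma_m \in \seq^n$ with each $\gamma_i \in \seq^{n-1}$, I would define the summarization map $S : \seq^n \to \seq_{D}$ by $S(\sigma) = g(\gamma_1) \ccat \dots \ccat g(\gamma_m)$, i.e. by applying the helper $g$ slicewise. Two facts about $S$ drive everything and I would record them first: (i) $S$ is a homomorphism for concatenation, $S(\sigma \ccat \sigma') = S(\sigma) \ccat S(\sigma')$, since concatenation at level $n$ merely appends the two slice-sequences; and (ii) $f = h \circ S$, which is exactly the chain of equalities already carried out in the forward direction of the proof of Theorem \ref{thm:par-cond} (repeated use of the memoryless equation for $f$, followed by the defining equations of the summary $h$).

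I then set $f' := h' \circ S$. The lifting condition is immediate: $\pi_D \circ f' = (\pi_D \circ h') \circ S = h \circ S = f$, using that $h'$ lifts $h$ (so $\pi_D \circ h' = h$) together with fact (ii). For homomorphism I would compute, for $\sigma,\sigma' \in \seq^n$, that $f'(\sigma \ccat \sigma') = h'(S(\sigma) \ccat S(\sigma'))$ by fact (i), and hence $= h'(S(\sigma)) \odot h'(S(\sigma')) = f'(\sigma) \odot f'(\sigma')$ by the $\odot$-homomorphy of $h'$. The last requirement in the definition of a lifting, namely rightwardness of $f'$, then comes for free: an $\odot$-homomorphic function is memoryless by Proposition \ref{prop:nec}, and a memoryless function is a special case of a rightward one (the family $\mathbbm{G}$ of Definition \ref{def:mrw} collapsing to a single function, as noted after Definition \ref{def:memoryless}). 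Thus $f'$ is a genuine $\odot$-homomorphic lifting of $f$.

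For the final clause I would compute the summary $\widehat{h} : \seq_{D \times D'} \to D \times D'$ of $f'$ from its canonical memoryless presentation (helper $\delta \mapsto f'([\delta]) = h'([g(\delta)])$ and operator $\odot$) and compare it with $h'$, which is rightward (Definition \ref{def:rw}) with some operator $\oplus' : (D \times D') \times D \to D \times D'$. The key algebraic identity is that, on the image of $h'$, absorbing a summary element $h'([g(\delta)])$ with $\odot$ agrees with absorbing the raw slice value $g(\delta)$ with $\oplus'$: for $p = h'(x)$ we have $p \odot h'([g(\delta)]) = h'(x \ccat [g(\delta)]) = p \oplus' g(\delta)$, the first step by $\odot$-homomorphy and the second by rightwardness. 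Iterating this identity shows that $\widehat{h}$ on a summary sequence coincides with $h'$ on the corresponding sequence of raw slice values, so that $\widehat{h}$ and $h'$ agree once the redundant $D'$-components in the input of $\widehat{h}$ are projected away --- which is the ``equivalent up to projection of the input sequence'' claim --- and composing with $\pi_D$ then yields $h$. I expect this last step to be the main obstacle: $\widehat{h}$ and $h'$ share neither a domain ($\seq_{D \times D'}$ versus $\seq_{D}$) nor, a priori, the same combining operator ($\odot$ versus $\oplus'$), so the real content is proving that the extra $D'$-information threaded through the summary of $f'$ is provably redundant and recoverable, with the bookkeeping around the base value $f([]) = h([])$ requiring the most care. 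The remaining steps are routine.
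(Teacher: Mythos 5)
Your proof is correct and follows essentially the same route as the paper: your $f' = h' \circ S$ is exactly the function the paper defines by the recurrence $f'(\sigma \ccat [\delta]) = f'(\sigma) \odot h'([g(\delta)])$ and then shows to satisfy the closed form $f'(\sigma) = h'([g(\delta_1)] \ccat \dots \ccat [g(\delta_m)])$, and your treatment of the final clause (comparing the $\odot$-defined summary of $f'$ with $h'$ on sequences whose elements lie in the image of $\delta \mapsto h'([g(\delta)])$, where the identity $h'([g(\delta)]) = g'(\delta)$ makes the induction close) matches the paper's concluding argument. Making the summarization map $S$ explicit and recording up front that it is a concatenation homomorphism with $f = h \circ S$ is a slightly cleaner packaging that turns the lifting and homomorphy checks into one-line compositions, but the constructed object and all key steps coincide with the paper's.
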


\begin{proof}
Assume there exists a lifting of $h$ called $h'$ that is $\odot$-homomorphic. Then, for all sequences $x,y \in \seq_D$ we have:

$$h'(x \ccat y) = h'(x) \odot h'(y)$$

and therefore, more specifically, for all $x \in \seq_D$ and $d \in D$ we have:

$$h'(x \ccat [d]) = h'(x) \odot h'([d])$$.

Now, let $g': \seq^{n-1} \to D \times D'$ be defined so that $g'(\delta)  = h'([g(\delta)])$. Define $f'$ as:

\begin{align*}
f'([]) &= h'([]) \\
f'(\sigma \ccat [\delta]) &= f'(\sigma) \odot g'(\delta)
\end{align*}
which is by definition {\em memoryless}. For all $\delta \in \seq^{n-1}$, we have:
\begin{align*}
\pi_D \circ g'(\delta)  &= \pi_D \circ h' \circ g(\delta) \\
                        &= h \circ g(\delta) \\
                        &= f([\delta]) \\
                        &= g(\delta)
\end{align*}
based on the assumption that $h([])$ is defined and therefore has to be the unit of $\odot$. It is easy to show (by induction and definition) that for all $\delta_1, \dots, \delta_m \in \seq^n-1$ where $\sigma = [\delta_1] \ccat \dots \ccat [\delta_m]$, we have:

\begin{align}
f'(\sigma) &= h'\left([g(\delta_1)] \ccat \dots\relax \ccat [g(\delta_m)]\right) \label{aeq1}
\end{align}

Let $\sigma, \sigma' \in \seq^n$ and $\sigma' = [\delta_1] \ccat \dots \ccat [\delta_m]$ where $\delta_i \in \seq^{n-1}$. We have:
\begin{align*}
f'(\sigma \ccat \sigma') &= \left( \cdots  \left( f'(\sigma) \odot g'(\delta_1) \right) \odot \cdots \right) \odot g'(\delta_m)  \\
&= f'(\sigma) \odot \left(g'(\delta_1) \odot \cdots  \odot g'(\delta_m)\right) \\
&= f'(\sigma) \odot f'(\sigma')
\end{align*}
by associativity of $\odot$. Therefore, $f'$ is also $\odot$ homomorphic.  $f'$ is a lifting of $f$ since:
\begin{align*}
\pi_D \circ f'([]) &= \pi_D \circ h'([]) = h([]) = f([]) \\
\pi_D \circ f'(\sigma)  &= \pi_D \circ h'([g(\delta_1)] \ccat \dots \ccat [g(\delta_m)]) \\
                        &=  h([g(\delta_1)] \ccat \dots \ccat [g(\delta_m)]) \\
                        &=  f(\sigma)
\end{align*}

It remains to show that $h'$ is a summary of $f'$ up to projection. Let $\bar{h}: \seq_{D\times D'} \to D\times D'$ be the summary of $f'$ defined through $\odot$ that is
\begin{align*}
\bar{h}([]) &= f'([])\\
\forall y \in \seq_{D \times D'}, b \in D \times D':\ \bar{h}(y \ccat [b]) &= \bar{h}(y) \odot b
\end{align*}

Observe that $h'$ and $\bar{h}$ have the same range, but the sequences in the domain of $\bar{h}$ have strictly more information in each element of the sequence than those in $h'$. The claim that we want to prove is that
\[h' = \bar{h} \circ \bar{\pi}_D\]
where $\bar{\pi}_D$ is the natural extension of the projection function from elements to sequence of elements.

We have $h'([]) = f'([]) = \bar{h}([])$, by definition. This serves as our induction base case. Let $y \in \seq_{D \times D'}$, and assume that  $h' \circ \bar{\pi}_D(y) = \bar{h}(y)$. Let $a \in D \times D'$ and $a = g'(\delta)$ for some $\delta \in \seq^{n-1}$:
\begin{align*}
h' \circ \bar{\pi}_D(y \ccat [a]) &= h'(\bar{\pi}_D(y) \ccat \pi_D(a))\\
				&= h'\circ \bar{\pi}_D(y) \odot h'(\pi_D(a)) \\
				&= h'\circ \bar{\pi}_D(y) \odot h'(\pi_D(g'(\delta))) \\
				&= h'\circ \bar{\pi}_D(y) \odot h'(g(\delta)) \\
				&= h'\circ \bar{\pi}_D(y) \odot g'(\delta) \\
				&= \bar{h}(y) \odot a \\
				&= \bar{h}(y \ccat [a])
\end{align*}

\end{proof}

\begin{proposition}
Let  $f: \seq^n \to D$ be a (rightward) memoryless function defined by helper function $g$ (from Definition \ref{def:memoryless}), and let $h: \seq_{D} \to D$ be its summarized version. If $f$ can be lifted to a homomorphism
$f': \seq_{D} \to D \times D'$, then  there exists a lifting $h': \seq^n \to D \times D'$ of $h$ that is a homomorphism, and $h'$ is equivalent to the summary of $f'$ up to projection of its input sequence down to domain $D$.
\end{proposition}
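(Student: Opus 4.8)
The plan is to dualize the argument of Proposition~\ref{aprop:lift}: there we started from a homomorphic lift of the summary $h$ and built one for $f$, whereas here I would start from the given homomorphic lift $f': \seq^n \to D \times D'$ of $f$ and construct one for $h$. First I would expose the memoryless structure of $f'$. Since $f'$ is $\odot$-homomorphic, Proposition~\ref{prop:nec} gives that $f'$ is memoryless, with helper $g': \seq^{n-1} \to D \times D'$ defined by $g'(\delta) = f'([\delta])$ and combining operator $\odot$. I would then record the compatibility with the original data: because $f = \pi_D \circ f'$ and $f$ is memoryless through $g$ and $\oplus$, projecting the recurrence $f'(\sigma \ccat [\delta]) = f'(\sigma) \odot g'(\delta)$ yields both $\pi_D \circ g'(\delta) = g(\delta)$ (using, as in the forward proof, that $f([])$ is the unit of the relevant operator) and the fact that $\pi_D$ intertwines $\odot$ with $\oplus$ on the values actually produced by $f'$.

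Next I would form the summary $\bar h: \seq_{D \times D'} \to D \times D'$ of $f'$ in the sense of Proposition~\ref{prop:conv}. Since $f'$ is memoryless and $\odot$-homomorphic, the direction of Theorem~\ref{thm:par-cond} already available to us shows that $\bar h$ is itself $\odot$-homomorphic. The candidate $h'$ is then this summary transported to the domain $\seq_D$: concretely I would define
\begin{align*}
h'([]) &= f'([]),\\
h'(x \ccat [g(\delta)]) &= h'(x) \odot f'([\delta]).
\end{align*}
This is precisely the relation $h' = \bar h \circ \bar\pi_D$ (with $\bar\pi_D$ the elementwise projection on sequences) that was \emph{established} at the end of the proof of Proposition~\ref{aprop:lift}; here I would instead use it as the definition of $h'$.

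With $h'$ in hand the remaining obligations are routine inductions mirroring the forward direction. Homomorphy of $h'$ follows from homomorphy of $\bar h$ together with $h' = \bar h \circ \bar\pi_D$ and the observation that $\bar\pi_D$ is a concatenation homomorphism, so that $h'(x \ccat y) = \bar h(\bar\pi_D(x) \ccat \bar\pi_D(y)) = h'(x) \odot h'(y)$. The lifting property $\pi_D \circ h' = h$ follows by induction on the length of the input sequence, using $\pi_D \circ f'([\delta]) = g(\delta)$, the fact that $\pi_D$ carries $\odot$ to $\oplus$ on reachable values, and the defining recurrence of $h$. The final clause, that $h'$ coincides with the summary of $f'$ up to projection, is then immediate from the chosen definition.

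The main obstacle is the domain-mismatch subtlety flagged just before the statement of Theorem~\ref{thm:convl}: $\bar h$ lives over $\seq_{D \times D'}$ while $h$, and hence $h'$, must live over $\seq_D$, and in general $\bar h([b]) = f'([]) \odot b$ genuinely depends on the $D'$-component of $b$, so $\bar h$ is \emph{not} constant on all fibers of $\bar\pi_D$ and $h' = \bar h \circ \bar\pi_D$ cannot define a function on all of $\seq_D$. Well-definedness must therefore be secured on the \emph{reachable} inputs only, namely the elements $b = g'(\delta) = f'([\delta])$, for which $\pi_D(b) = g(\delta)$. The crux is to prove that over exactly these inputs the discarded $D'$-information is redundant for $\bar h$, so that $h'(x \ccat [g(\delta)]) = h'(x) \odot f'([\delta])$ is independent of the chosen representative $\delta$ and closes up into a $\odot$-homomorphism on the reachable part of $\seq_D$ on which $h$ is actually evaluated. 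This is the reverse reading of the redundancy argument carried out at the end of Proposition~\ref{aprop:lift}, and once it is established the three verifications above go through mechanically.
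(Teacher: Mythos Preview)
Your proposal is essentially the same construction as the paper's: both define $g'(\delta)=f'([\delta])$ and then set
\[
h'([])=f'([]),\qquad h'(x\ccat[g(\delta)])=h'(x)\odot g'(\delta),
\]
prove $\pi_D\circ h'=h$ by the induction you describe, and identify $h'$ with the summary $\bar h$ of $f'$ up to the elementwise projection. The one substantive difference is how homomorphy of $h'$ is obtained. You route it through Theorem~\ref{thm:par-cond} (to get $\bar h$ homomorphic) and the relation $h'\circ\bar\pi_D=\bar h$; the paper instead observes directly that, once $h'$ is defined by the recurrence above, associativity of $\odot$ alone gives $h'(x\ccat y)=h'(x)\odot h'(y)$---no detour through $\bar h$ is needed. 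Your route is correct but longer, and it forces you to confront the well-definedness of $h'$ (choice of $\delta$ with $g(\delta)=a$) as a genuine obstacle; the paper simply writes the definition with ``$a=g(\delta)$'' and proceeds, implicitly restricting to reachable inputs without pausing on this point. You are right that this subtlety deserves acknowledgment, and your identification of it is sharper than the paper's treatment; but note that the paper's direct associativity argument sidesteps the need to use $\bar h$ at all for the homomorphism claim.
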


\begin{proof}
Assume there exists a lifting $f'$ of $f$ which is $\odot$ homomorphic. Note that:
\begin{align*}
f'(\sigma \ccat \delta) &= f'(\sigma) \odot f'([\delta])
\end{align*}
Let $g': \seq^{n-1} \to D \times D'$ be defined as $g'(\delta) = f'([\delta])$.

Define $h'$ as:
\begin{align*}
h'([]) &= f'([]) \\
\forall x \in \seq_{D}, a \in D (\mbox{ s.t. }a = g(\delta)): h'(x \ccat a) &= h'(x) \odot g'(\delta)
\end{align*}
Let us argue that $h'$ is $\odot$-homomorphic and a lifting of $h$. The former is immediately implied by associativity of $\odot$. For the latter, we need to show that $\pi_D \circ h' = h$ (rightward computability of $h'$ is implied by the computability of $\odot$). Observe that:
\begin{align*}
\pi_D \circ h'([]) &= \pi_D \circ f'([]) = f([]) = h([]) \\
\end{align*}
Let $x \in \seq_D$ and $x = d_1 \ccat \dots d_m$ where $d_i = g(\delta_i)$ . Then:
\begin{align*}
\pi_D \circ h'(x) &= \pi_D(h'(x)) \\
                  &= \pi_D(f'([\delta_1] \dots \dots \ccat [\delta_m])) \\
                  &= f([\delta_1] \ccat \dots \ccat [\delta_m]) &&\text{($f'$ is a lifting of $f$)}\\
                  &= h([g(\delta_1)] \ccat \dots \ccat [g(\delta_m)]) &&\text{(by equation \ref{aeq1})} \\
                  &= h(x)
\end{align*}
Finally, it remains to show that $h'$ is a summarized version of $f'$ up to projection. Let $\bar{h}: \seq_{D\times D'} \to D\times D'$ be the summary of $f'$ defined through $\odot$ that is
\begin{align*}
\bar{h}([]) &= f'([])\\
\forall y \in \seq_{D \times D'}, b \in D \times D':\ \bar{h}(y \ccat [b]) &= \bar{h}(y) \odot b
\end{align*}

The claim that we want to prove is that
\[h' = \bar{h} \circ \bar{\pi}_D\]
where $\bar{\pi}_D$ is the natural extension of the projection function from elements to sequence of elements. The argument is identical to the one made at the end of the proof of Proposition \ref{aprop:lift} to prove the same claim.

\end{proof}


\subsection{Memoryless Lift}\label{sec:mem-lift}

When a rightward function $f: \seq^n \to D$ is not memoryless, a {\em lifting} may be required to add extra information to the signature of the function (state of the loop) so that functions $g$ and $\circledcirc$ from Definition \ref{def:memoryless} exist. Every non-memoryless function can be made memoryless by a rather trivial lifting.
\begin{proposition}\label{prop:lift-triv}
Given a rightward function $f: \seq^n \to D$, the function $f \times \iota'$ (function product) is memoryless where $\iota': \seq^{n} \to \seq^{n-1}$ is defined as $
\forall \delta \in \seq^{n-1}, \sigma \in \seq^n:\ \iota'(\sigma \ccat [\delta]) = \delta$.
\end{proposition}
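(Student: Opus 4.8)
The plan is to exhibit the two ingredients required by Definition~\ref{def:memoryless} directly and then verify the defining equation by a short calculation. For the target function $f \times \iota'$, which maps $\seq^n \to D \times \seq^{n-1}$, I would take the helper function $g: \seq^{n-1} \to D \times \seq^{n-1}$ to be $g(\delta) = (f([\delta]), \delta)$, and define the binary operator $\oplus: (D \times \seq^{n-1}) \times (D \times \seq^{n-1}) \to D \times \seq^{n-1}$ so that, on a pair from the partially computed value and a pair coming from $g$, it recovers the rightward step of $f$ on the first coordinate and simply projects onto the second coordinate of its right argument. Concretely I would set $(a, \tau) \oplus (b, \delta) = (a \otimes \mathbbm{G}(a)(\delta), \delta)$, where $\otimes$ and the family $\mathbbm{G}$ are exactly those witnessing that $f$ is rightward (Definition~\ref{def:mrw}).

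First I would record that $\iota'$ is well defined and rightward: given $\sigma \ccat [\delta]$ it returns the last block $\delta$, which is the standard rightward projection, and the induced step operator on the second coordinate just discards its accumulator and keeps the incoming element. Next I would verify the memorylessness equation. Evaluating the right-hand side of Definition~\ref{def:memoryless} at $\sigma$ and $\delta$ gives
\begin{align*}
(f \times \iota')(\sigma) \oplus g(\delta) &= (f(\sigma), \iota'(\sigma)) \oplus (f([\delta]), \delta) \\
&= \bigl(f(\sigma) \otimes \mathbbm{G}(f(\sigma))(\delta),\ \delta\bigr).
\end{align*}
On the left-hand side, the first coordinate is $f(\sigma \ccat [\delta]) = f(\sigma) \otimes \mathbbm{G}(f(\sigma))(\delta)$ by the rightward equation for $f$, and the second coordinate is $\iota'(\sigma \ccat [\delta]) = \delta$ by the definition of $\iota'$. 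Hence the two sides agree, so $f \times \iota'$ is memoryless with witnesses $g$ and $\oplus$.

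The one genuinely delicate point, and the step I expect to require the most care, is that Definition~\ref{def:memoryless} demands $g$ itself be a rightward (or leftward) function on $\seq^{n-1}$, not merely an arbitrary map. The second coordinate of $g$ is the identity-on-the-last-block projection, which is rightward for the same reason $\iota'$ is; the first coordinate $\delta \mapsto f([\delta])$ is rightward because it is the composition of the $\seq^{n-1} \to \seq^n$ singleton-injection with the rightward $f$, and a product of rightward functions into a product domain is again rightward (its step operator acts coordinatewise). I would therefore close the argument by noting that $g$ is a product of two rightward functions and hence rightward, which is precisely what Definition~\ref{def:memoryless} requires, completing the proof that $f \times \iota'$ is a memoryless lift of $f$.
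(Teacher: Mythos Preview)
Your proof is essentially the paper's: the same operator $\oplus$ (which ignores the $D$-component of its right argument and re-runs $\mathbbm{G}$ on the carried $\delta$) and the same four-line verification of the memoryless equation. The only difference is cosmetic: the paper takes $g(\delta) = (\zero,\delta)$ for an arbitrary constant $\zero \in D$, whereas you take $g(\delta) = (f([\delta]),\delta)$; since $\oplus$ discards that first component anyway, either choice works.

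One small caution on your closing paragraph: your argument that $\delta \mapsto f([\delta])$ is rightward ``because it is the composition of the singleton-injection with the rightward $f$'' is not a valid inference in general --- rightwardness of $f$ on $\seq^n$ speaks about appending whole $(n{-}1)$-dimensional blocks, not about steps \emph{within} a single block, so the composition need not be rightward on its own. Your product $g$ is still rightward, but for a different reason: the second coordinate carries $\delta$ in full, so the step operator can simply recompute the first coordinate from scratch. The paper's choice of a constant first coordinate sidesteps this subtlety entirely.
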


\begin{proof}
Since $f$ is rightward, there exists a binary operator $\otimes$ and family of functions $\mathbbm{G}$ such that for all $\sigma \in \seq^n$ and $\delta \in \seq^{n-1}$:
\[f(\sigma \ccat \delta ) = f(\sigma) \otimes \mathbbm{G}(f(\sigma))(\delta)\]
Note that the signature of the lifted function $f \times \iota'$ is $\seq^{n} \to D \times \seq^{n-1}$.
Let  $\oplus : (D \times \seq^{n-1}) \times (D \times \seq^{n-1}) \to D \times \seq^{n-1}$ be defined as:
\[\forall a,b \in D, \delta,\delta' \in \seq^{n-1}:  (a, \delta) \oplus (b,\delta') = (a \otimes \mathbbm{G}(a)(\delta'), \delta')\]

Let $\iota'': \seq^{n-1} \to \seq^{n}$ be defined as $\forall \delta \in \seq^{n-1}: \iota''(\delta) = [\delta]$ and let $g: \seq^{n-1} \to D \times \seq^{n-1}$ to be function that on all inputs $\delta$ returns $(\zero, \delta)$ for some constant value $\zero \in D$.

It is straightforward to see that $f \times \iota'$ is memoryless with the loop join operator $\oplus$ and helper function $g$ since:


\begin{align*}
f \times \iota'(\sigma \ccat [\delta]) & = (f(\sigma \ccat [\delta]), \iota'(\sigma \ccat [\delta])) \\
                        &= (f(\sigma) \otimes \mathbbm{G}(f(\sigma))(\delta), \delta) \\
                        &= (f(\sigma), \iota'(\sigma)) \oplus (\zero, \delta) \\
                        &= f \times \iota'(\sigma) \oplus g(\delta)
\end{align*}
Therefore, by definition \ref{def:memoryless}, we can conclude that $f \times \iota'$ is memoryless.
\end{proof}

\paragraph{Complexity preservation of the trivial memoryless lift.}
It is easy to intuitively see why a trivial lift like the above does not increase the time complexity of computation of $f$. To argue for this, it is easier to think about the loops (instead of functions). Imagine the original function corresponds to the loop:

\lstset{
  language=C,
  classoffset=0,
  basicstyle=\ttfamily\bfseries,
  keywordstyle=\color{blue}\ttfamily,
  stringstyle=\color{red}\ttfamily,
  commentstyle=\color{gray}\ttfamily,
  morecomment=[l][\color{magenta}]{\#},
  emph={int},
  emphstyle=\color{ForestGreen},
  classoffset=1,
  morekeywords={seq,int,bool},
  keywordstyle=\color{ForestGreen},
  classoffset=2,
  morekeywords={lemma,calc,assert,ensures,max,min},
  keywordstyle=\color{RedViolet},
  classoffset=3,
  morekeywords={HomomorphismSum,max_rec},
  keywordstyle=\color{WildStrawberry},
  frame=single
}

{\footnotesize
  \begin{lstlisting}
for(int i = 0; i < n; i++) {
  for(int j = 0; j < m; j++) {
    ...
  }
}
  \end{lstlisting}
}

which has complexity $O(nm)$. Then the lifted one would correspond to the loop:

{\footnotesize
  \begin{lstlisting}
for(int i = 0; i < n; i++) {
  for(int j = 0; j < m; j++) {
    ...
  }
  ...
  for(int j = 0; j < m; j++) {
    ...
  }
}
  \end{lstlisting}
}
where the second copy of the inner loop effectively redoes the computation of the inner loop. This still has the complexity $O(nm)$ albeit with larger constants.


In this trivial lifting, the extension to the function remembers the last line of the input $\sigma \ccat [\delta]$, that is $\delta$, in a new component and the join effectively processes $\delta$ from scratch, ignoring the partially computed results by the inner loop computation.

It is essential, however, that the {\em cheapest} possible (non-trivial) lifting is used, to gain optimal parallelism. Recall the balanced bracket example from Section \ref{sec:wbp}. The lifting (additions of {\tt min\_offset} and {\tt line\_bal} state variables) in that example is an instance of a non-trivial lifting. Proposition \ref{prop:lift-triv}, in contrast, would suggest a simple admissible lifting which would not lead to as much parallelism.

\subsection{Algorithmic Memoryless Lift}\label{sec:mlift}
Algorithmically, the problems of lifting a function to a homomorphism or to a memoryless function are related.
When a function is not memoryless, it means that there is not enough information for a {\em memoryless join} operator ($\circledcirc$) to exist in the style of the diagram in Figure \ref{fig:diagram}(b). Where the {\em homomorphic lifting} algorithm asks what extra computation is required for the results of two instances of the entire loop nest to be joined together,  an algorithm for {\em memoryless lifting} asks what extra computation is required for an instance of the loop nest to be joined with an instance of the {\em inner loop nest}. Considering that the two functions share the same signature, the problem is formally that of joining an inner loop nest to an arbitrary state $\vec{s}$, which is the same problem as the homomorphism lift of the  inner loop nest. The following proposition makes this observation precise.

\begin{proposition}\label{prop:mlift}
A multidimensional rightwards function $f$ defined through a family of functions $\mathbbm{G}$ (as in Definition \ref{def:mrw}) can be lifted to a memoryless function if every member of $\mathbbm{G}$ can be lifted to a $\circledcirc$-homomorphism for some $\circledcirc$.
\end{proposition}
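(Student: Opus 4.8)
The plan is to \emph{synthesize} the memoryless lift of $f$ directly from the homomorphic lifts of the members of $\mathbbm{G}$. Fix a canonical value $\zero \in D$ and recall that, by Definition \ref{def:mrw}, $f(\sigma \ccat [\delta]) = f(\sigma) \otimes \mathbbm{G}(f(\sigma))(\delta)$, so the only obstruction to memorylessness is the appearance of $f(\sigma)$ as the \emph{selector} of the inner computation $\mathbbm{G}(f(\sigma))$. By hypothesis each member $\mathbbm{G}(d)$ admits a homomorphic lift $\lift{\mathbbm{G}(d)}{D'}: \seq^{n-1} \to D \times D'$ that is $\circledcirc$-homomorphic. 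Since (as noted after Definition \ref{def:mrw}) the members of $\mathbbm{G}$ differ only in their recursion base case, I would first argue that these lifts can be taken to share a single auxiliary domain $D'$ and a single join operator $\circledcirc$, and that the lift selected by $d$ is obtained from the lift of $\mathbbm{G}(\zero)$ by replacing only the base case with a lifted encoding of $d$.

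The central step is the following \emph{base-case recovery} claim: for every $s \in D$ and every $\delta \in \seq^{n-1}$,
\[
\lift{\mathbbm{G}(s)}{D'}(\delta) = \widehat{s} \circledcirc \lift{\mathbbm{G}(\zero)}{D'}(\delta),
\]
where $\widehat{s} := \lift{\mathbbm{G}(s)}{D'}([]) \in D \times D'$ is the lifted encoding of the starting state $s$ (so $\pi_D(\widehat{s}) = \mathbbm{G}(s)([])$). Intuitively, starting the lifted inner loop nest from state $s$ rather than from $\zero$ is the same as first joining a block whose summary is $\widehat{s}$; this is precisely the content of the members differing only in their base case, combined with $\circledcirc$-homomorphy. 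With this claim in hand I would set $g'(\delta) := \lift{\mathbbm{G}(\zero)}{D'}(\delta)$, which manifestly does \emph{not} depend on $f(\sigma)$ and is rightward (being a homomorphism), so it is exactly the helper function required by Definition \ref{def:memoryless} for the lifted function.

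It then remains to assemble $f'$ and its memoryless join $\oplus: (D \times D') \times (D \times D') \to (D \times D')$. I would define $f' = \langle f, \alpha \rangle$, where $\alpha: \seq^n \to D'$ accumulates the auxiliary component, and set
\[
(s,p) \oplus (t,q) := \big(\, s \otimes \pi_D\big((s,p) \circledcirc (t,q)\big),\ \pi_{D'}\big((s,p) \circledcirc (t,q)\big)\,\big).
\]
Using the recovery claim with $\widehat{s} = f'(\sigma)$, a short computation gives $\pi_D\big(f'(\sigma) \circledcirc g'(\delta)\big) = \mathbbm{G}(f(\sigma))(\delta)$, so the first component of $f'(\sigma) \oplus g'(\delta)$ is $f(\sigma) \otimes \mathbbm{G}(f(\sigma))(\delta) = f(\sigma \ccat [\delta])$; hence $\pi_D \circ f' = f$ and $f'$ is rightward. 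An induction on the structure of $\sigma$ then shows that $\alpha(\sigma)$ is exactly the lifted encoding $\widehat{f(\sigma)}$ demanded by the recovery claim, closing the argument and establishing $f'(\sigma \ccat [\delta]) = f'(\sigma) \oplus g'(\delta)$, which is Definition \ref{def:memoryless}.

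The hard part is the base-case recovery claim together with the uniformity it presupposes: the proposition only grants, for each member \emph{separately}, \emph{some} homomorphic lift and \emph{some} $\circledcirc$, whereas the construction needs a common $D'$ and a common $\circledcirc$ across the whole family, and needs the accumulated auxiliary state $\alpha(\sigma)$ to coincide with the encoding $\widehat{f(\sigma)}$ that plays the role of a base case. Turning the informal observation that the members of $\mathbbm{G}$ differ only in their base case into this precise statement is where the real work lies; I expect this, rather than the routine verification of the memoryless equations, to be the main obstacle.
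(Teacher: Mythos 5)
Your architecture matches the paper's: fix a canonical $\zero \in D$, work from a homomorphic lift of $\mathbbm{G}(\zero)$, express the inner computation started at an arbitrary state $s$ as ``join a lifted encoding of $s$ with $\lift{\mathbbm{G}(\zero)}{D'}(\delta)$,'' take $g'(\delta) = \lift{\mathbbm{G}(\zero)}{D'}(\delta)$ as the helper (which indeed no longer depends on $f(\sigma)$), and assemble the memoryless join from $\circledcirc$ and $\otimes$. The gap is exactly where you say it is: the base-case recovery claim. As you state it, it is a property of lifts that the hypothesis hands you \emph{independently, one per member}, each with \emph{some} auxiliary domain and \emph{some} join; nothing forces those lifts to share a $D'$, share a $\circledcirc$, or satisfy $\lift{\mathbbm{G}(s)}{D'}(\delta) = \widehat{s} \circledcirc \lift{\mathbbm{G}(\zero)}{D'}(\delta)$. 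You cannot prove that claim about the given lifts, and attempting to is the wrong move.

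The paper's resolution is to stop trying to prove it and instead make it a \emph{definition}. It uses the hypothesis only for the single member $\mathbbm{G}(\zero)$, obtaining $\lift{g_\zero}{D'}$ with join $\boxdot$, and then \emph{defines} $\lift{\mathbbm{G}(d)}{D'}(\delta) := d' \boxdot \lift{g_\zero}{D'}(\delta)$, where $d'$ is a lifted encoding of $d$ (i.e. $\pi_D(d') = d$). This is automatically $\boxdot$-homomorphic, so the family shares one auxiliary domain and one join by construction; what remains to \emph{verify} is that this defined function is a lifting of $\mathbbm{G}(d)$, i.e. that $\pi_D \circ \lift{\mathbbm{G}(d)}{D'} = \mathbbm{G}(d)$. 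The paper does this with a weak inverse $w$ of $\lift{g_\zero}{D'}$: writing $d' = \lift{g_\zero}{D'}(w(d'))$, homomorphy gives $d' \boxdot \lift{g_\zero}{D'}(\delta) = \lift{g_\zero}{D'}(w(d') \ccat \delta)$, whose $D$-projection unfolds to $(\dots(d \ominus \delta_1)\dots \ominus \delta_n) = \mathbbm{G}(d)(\delta)$ because the members of $\mathbbm{G}$ share the same step operator $\ominus$ and differ only in their base case. (This does silently assume $d'$ can be chosen in the image of $\lift{g_\zero}{D'}$ so that the weak inverse applies --- the standard surjectivity caveat in third-homomorphism-theorem arguments, which the paper glosses over too.) If you replace your recovery \emph{claim} with this recovery \emph{construction}, the rest of your argument --- the definition of $g'$, of $\oplus$ from $\circledcirc$ and $\otimes$, and the induction showing the auxiliary component accumulates the correct encoding --- goes through as you sketched it.
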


\begin{proof}
  Consider a multidimensional rightward function $f: \seq^n \to D$ that is not memoryless, defined by a family of functions $\mathbbm{G}: D \to (\seq^{n-1} \to D)$ as in Definition \ref{def:mrw}.
  The function is effectively defined using the recursive equation $f(\sigma \ccat [\delta]) =  f(\sigma) \otimes \mathbbm{G}(f(\sigma))(d)$.

Let us show that lifting $\mathbbm{G}$ to a family of homomorphisms (Definition \ref{def:hom-family}) is sufficient to lift $f$ to a memoryless function. For any $d \in D$, $\mathbbm{G}(d)$ is defined by:
\begin{align}
  \mathbbm{G}(d)([]) &= d \nonumber\\
  \mathbbm{G}(d)(\delta \ccat [\gamma]) &= \mathbbm{G}(d)(\delta) \ominus \gamma \nonumber
\end{align}

Imagine that we lift $g_\zero = \mathbbm{G}(\zero)$ for some $\zero \in D$ to a homomorphism. We will have a $\lift{g_{\zero}}{D'}$ such that there exists a $\boxdot$ operator that satisfies for all $\delta, \delta' \in S^{n-1}$:

\[ \lift{g_\zero}{D'}(\delta \ccat \delta') =  \lift{g_{\zero}}{D'}(\delta) \boxdot \lift{g_{\zero}}{D'}(\delta') \]

We define the lifting of $\mathbbm{G}(d)$ by using the homomorphic lifting of $g_\zero$:

\[ \lift{\mathbbm{G}(d)}{D'}(\delta) =  d' \boxdot \lift{g_\zero}{D'}(\delta) \]

where $\pi_D (d') = d$ and $\pi_{D'}(d') = \pi_{D'}(\lift{g_{\zero}}{D'}([]))$.

$\lift{\mathbbm{G}(d)}{D'}$ is naturally a homomorphism since $\lift{g_\zero}{D'}$ is one.
We can verify that it is a lifting of $\mathbbm{G}(d)$ by projecting to $D$. We use $w$ the weak inverse of $\lift{g_\zero}{D'}$ defined by $\forall d' \in D', \lift{g_\zero}{D'}(\delta) \circ w \ ( d') = d'$.\\
\begin{align}
  \pi_D \circ \lift{\mathbbm{G}(d)}{D'} (\delta \ccat [a]) &= \pi_D (d' \boxdot  \lift{g_\zero}{D'}(\delta)) \nonumber\\
                                                &= \pi_D (\lift{g_\zero}{D'}(w(d') \ccat \delta)) \nonumber\\
                                                &= g_\zero (w(d') \ccat \delta) \nonumber\\
                                                &= (\ldots (g_\zero (w(d')))
                                                  \ominus \delta_1 \ldots \ominus \delta_n) \nonumber\\
                                                &= (\ldots (d
                                                  \ominus \delta_1) \ldots \ominus \delta_n) \nonumber\\
                                                &= \mathbbm{G}(d)(\delta) \nonumber
\end{align}

Since $\lift{\mathbbm{G}(d)}{D'}$ is a lifting of $\mathbbm{G}(d)$ we can use its projection on $D$ to redefine $f$:

\[ f(\sigma \ccat [\delta]) = \vec{s} \otimes (\pi_D \circ (\lift{\mathbbm{G}(d)}{D'}([]) \boxdot \lift{g_\zero}{D'}(\delta'))) \]

$f$ can be lifted to a memoryless function, explicitly by defining a lifted operator $\lift{\otimes}{D'} : D' \times D' \rightarrow D$ such that for any $\vec{t}, \vec{t}' \in D'$, with $\vec{s} = \pi_D \circ \vec{t}$ and $\vec{s}' = \pi_D \circ \vec{t}'$
\begin{align}
  \pi_D \circ (\vec{t} \ \lift{\otimes}{D'} \ \vec{t}') &= \vec{s} \otimes (\pi_D \circ (\lift{\mathbbm{G}(d)}{D'}([]) \boxdot \vec{t}')) \nonumber \\
  \pi_{D'} \circ (\vec{t} \ \lift{\otimes}{D'} \ \vec{t}') &= \pi_{D'} \circ \vec{t}' \nonumber
\end{align}

The lifted function $\lift{f}{D'}$ is defined by:
\[ \lift{f}{D'}(\sigma \ccat [\delta]) = \lift{f}{D'}(\sigma) \ \lift{\otimes}{D'} \ \lift{g_\zero}{D'}(\delta) \]
which matches the definition of a memoryless function. Remark that it is a valid lifting of $f$ since $\pi_D \circ \lift{f}{D'} = f$ by construction of the lifted operator.
\end{proof}




\section{Algorithmic Parallelization} \label{sec:par}
In Sections \ref{sec:mfunction} and \ref{sec:lift}, we presented the theoretical foundations of our approach. Theorem \ref{thm:par-cond} guarantees that it is {\em sound and complete} to parallelize the summarized loop in place of the original loop nest. Proposition \ref{prop:lift-triv} guarantees that any loop nest can be transformed into one that is summarizable. Finally, Theorem \ref{thm:convl} guarantees that a summarized loop can be {\em soundly and completely} lifted to a homomorphism in place of the original loop. In this section, we outline our algorithmic approach to parallelization.

\subsection{Efficient Divide-and-Conquer Solution}\label{sec:eff-hom}
Consider a loop nest $L$ of depth $n$ where the number of iterations of every loop is bounded by a parameter $m$. Assuming no function calls are made, the loop nest has a time complexity of $O(m^n)$. Since the translation to functional form preserves time complexity, this is also the time complexity of the function $h_L: \seq^n \to D$ corresponding to the loop nest. For a parallel implementation of $h_L$ based on a join operator $\odot$ to have reasonable speedups over constantly many processors, the (sequential) complexity of the implementation based on the join should not be higher than that of the original code. Constantly many processors cannot compensate for a variable increase in complexity.

\begin{proposition}\label{prop:tcomp}
Let $h_L \in O(m^n)$ be $\odot$-homomorphic. The sequential implementation of $h_L$ based on $\odot$ is in $O(m^n)$ if $\odot \in O(m^{n-1})$.
\end{proposition}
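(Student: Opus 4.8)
The plan is to bound the total work performed by the sequential execution of the divide-and-conquer scheme that the join $\odot$ induces. By Proposition \ref{prop:map-reduce}, since $h_L$ is $\odot$-homomorphic it can be realized as a $\map$ over the outermost sequence followed by a reduction with $\odot$. Running this on a single processor, I would charge the base (map) computations at the leaves and the joins at the internal nodes of the recursion tree, and show each total is $O(m^n)$.

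First I would set up the cost recurrence. Let $k$ denote the length of the outermost sequence, so $k \le m$ for the top-level call. To evaluate $h_L$ on a length-$k$ sequence, we split it as $s_1 \ccat s_2$, recursively evaluate both halves, and combine the two results with a single application of $\odot$, giving
\begin{align*}
T(k) &= 2\,T(k/2) + c\cdot m^{n-1}, \qquad k > 1,\\
T(1) &= c\cdot m^{n-1}.
\end{align*}
The additive term $c\cdot m^{n-1}$ is the cost of one join, which is $O(m^{n-1})$ by hypothesis. The base case $T(1)$ is the cost of evaluating $h_L$ on a single outermost element $[\delta]$: this amounts to running the depth-$(n-1)$ inner loop nest on the slice $\delta \in \seq^{n-1}$, whose cost is $O(m^{n-1})$ since each of its $n-1$ loops is bounded by $m$.

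The observation that makes the recurrence trivial to solve is that the per-join cost $m^{n-1}$ is a \emph{constant} with respect to the recursion variable $k$. Consequently, rather than invoking the master theorem (whose standard form assumes a constant base case), I would simply count nodes: a binary recursion over a length-$m$ sequence has exactly $m$ leaves and $m-1$ internal nodes, each leaf costing $O(m^{n-1})$ and each internal node performing one $\odot$ at cost $O(m^{n-1})$. Summing yields $T(m) = m\cdot O(m^{n-1}) + (m-1)\cdot O(m^{n-1}) = O(m^n)$, which is exactly the claim.

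The main obstacle, and the only point requiring care, is justifying that the per-join cost stays bounded by $m^{n-1}$ \emph{independently of $k$}; in particular, that intermediate results do not accumulate size as more rows are merged. This is precisely where the homomorphic structure is used: $\odot$ has the fixed signature $D \times D \to D$, so every partial result is a single element of the codomain $D$, whose representation size is governed by the inner dimensions (bounded by $m$) and does not grow with the number of outer rows already combined. I would also note that the bound is insensitive to the balance of the split, since any split strategy still produces at most $m$ leaves and $m-1$ joins; balance affects only the parallel depth, not the sequential work, so the $O(m^n)$ total is unaffected.
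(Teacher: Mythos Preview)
Your argument is correct: counting $m$ leaves at cost $O(m^{n-1})$ each plus $m-1$ joins at cost $O(m^{n-1})$ each gives $O(m^n)$ total work, and your observation that the split balance is irrelevant for sequential work is apt.

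The paper takes a different, more structural route. Rather than writing down a recurrence, it observes that for the original rightward function $f(\sigma \ccat [\delta]) = f(\sigma) \otimes \mathbbm{G}(f(\sigma))(\delta)$ to be $O(m^n)$, the per-step operator $\otimes$ must already be $O(m^{n-1})$ (and elements of $D$ must have space $O(m^{n-1})$, and each $g \in \mathbbm{G}$ must be $O(m^{n-1})$). The $\odot$-based sequential implementation has exactly the same shape with $\odot$ in the role of $\otimes$; hence replacing an $O(m^{n-1})$ operator by another $O(m^{n-1})$ operator preserves the $O(m^n)$ bound. This argument is terser and simultaneously yields the necessity direction (which the paper actually emphasizes in the last line) and the space bound on $D$ that feeds into Corollary~\ref{cor:ltcomp}. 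Your direct node-counting is more self-contained and makes the sufficiency direction stated in the proposition completely explicit; the paper's version is more of a two-way structural analogy. Your remark that intermediate results do not grow with $k$ because $\odot : D \times D \to D$ has fixed signature corresponds exactly to the paper's bullet that $d \in D$ has space $O(m^{n-1})$.
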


\begin{proof}
It is straightforward to see that for a rightward function $f: \seq^n \to D$ with time complexity $O(m^n)$ defined through the recursive equation $f(\sigma \ccat [\delta]) =  f(\sigma) \otimes \mathbbm{G}(f(\sigma, d))(\delta)$
we have:
\begin{itemize}
\item Every $g \in \mathbbm{G}$ is a  (leftward or) rightward function of complexity $O(m^{n-1})$.
\item Any $d \in D$ is strictly of space complexity $O(m^{n-1})$.\footnote{This is under the assumption that the data is fully read. So, this excludes, for example, operations on lists performed through reference manipulation without reading the entire list content.}
\item $\otimes$ is computable in time $O(m^{n-1})$.
\end{itemize}
Since if any of the upper bounds are violated, then one can show that the time complexity of $f$ would surpass $O(m^n)$. Now, if $h_L$ is a homomorphism and we want the parallel computation based on the homomorphism's {\em join} operator $\odot$ to have the same complexity as $h_L$. $\odot$ replaces $\otimes$, and we can conclude $\odot \in O(m^{n-1})$.

\end{proof}

This observation leads to a formal definition of parallelizability.

\begin{definition}(Parallelizability)\label{def:par}
A rightward (respectively leftward) function $h_L \in O(m^n)$ is efficiently parallelizable if and only if it is $\odot$-homomorphic and $\odot \in O(m^{n-1})$.
\end{definition}

The deduced upper bound on $\odot$ is crucial to justify the algorithmic choices made in Sections \ref{sec:join}, where the time complexity budget for {\em join} informs the choices of {\em syntax} for syntax-guided synthesis \cite{Alur15}.
Similarly, there are time and space complexity budgets for an efficient {\em lifting}.

\begin{corollary}\label{cor:ltcomp}
  If a function $h_L \in O(m^n)$ is lifted to $\lift{h_L}{D'} \in O(m^n)$, then any $d' \in D'$ has space complexity $O(m^{n-1})$.
\end{corollary}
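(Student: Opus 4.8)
The plan is to read the result directly off the space-complexity bound on the codomain that is established \emph{inside} the proof of Proposition~\ref{prop:tcomp}, now applied to the lifted function rather than to $h_L$ itself. In effect this corollary is a one-step specialization, so the work is entirely in identifying which premise of Proposition~\ref{prop:tcomp} the lifted function satisfies and which intermediate conclusion of its proof to invoke.

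First I would record the two facts that make $\lift{h_L}{D'}$ eligible for that bound. By the definition of a lifting, $\lift{h_L}{D'}: \seq^n \to D \times D'$ is itself a rightward function, hence it is presented by a recursive equation of the shape used in Definition~\ref{def:mrw}. By the hypothesis of the corollary, $\lift{h_L}{D'} \in O(m^n)$. Thus $\lift{h_L}{D'}$ meets exactly the premises under which the proof of Proposition~\ref{prop:tcomp} reasons: a rightward function on $\seq^n$ whose time complexity is $O(m^n)$.

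Next I would invoke the intermediate observation from that proof, namely that any element of the codomain of such a function has space complexity $O(m^{n-1})$ (otherwise the carry-over state threaded across the $m$ outer iterations would push the time complexity above $O(m^n)$). Applied to $\lift{h_L}{D'}$, whose codomain is $D \times D'$, this yields that every pair $(d,d') \in D \times D'$ has space complexity $O(m^{n-1})$. I would then close by separating the components: the space of a pair is the sum of the spaces of its parts, so the $O(m^{n-1})$ bound on the pair bounds each factor, and in particular any $d' \in D'$ has space complexity $O(m^{n-1})$.

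The only place the argument could slip—and hence the step I expect to need the most care—is this extraction: Proposition~\ref{prop:tcomp} is \emph{stated} as a time bound on the join operator $\odot$, so I must make explicit that the space bound on the codomain is a genuine lemma proved along the way, independent of $\odot$, and that it holds for \emph{any} rightward $O(m^n)$ function, the lifted one included. Once that is spelled out, no further calculation is required.
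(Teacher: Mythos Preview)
Your proposal is correct and follows essentially the same approach as the paper, which simply states that the result ``follows directly from that of Proposition~\ref{prop:tcomp}.'' You have merely unpacked which intermediate conclusion of that proof is being invoked (the $O(m^{n-1})$ space bound on codomain elements) and made explicit the trivial step of projecting from $D\times D'$ down to $D'$; the paper leaves both implicit.
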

The proof follows directly from that of Proposition \ref{prop:tcomp}, which also imposes the time complexity of $O(m^{n-1})$ for computing $d'$. The time and space complexity bounds for $d'$ inform the syntactic form of the auxiliary accumulators and the computation that produces them.  In Section \ref{sec:incompleteness}, we provide a variation of the example from Section \ref{sec:mtlr}, and a proof that any lifting of that function to a homomorphism has a space complexity beyond the budget specified in Corollary \ref{cor:ltcomp}. This information-theoretic proof is very involved, but makes the important point that an efficient lifting may not always exist, and consequently neither does a {\em complete} lifting algorithm.

\subsection{Incompleteness}\label{sec:incompleteness}

\intextsep=0pt
\columnsep=10pt

\begin{figure*}[ht]
  \begin{center}
    $A = \left(
      \begin{array}{c c c}
        f([1:1]) & -L & -L \\
        -L - f([1:1]) & L + f([2:2]) & 0 \\
        0 & -L -f([2:2]) & L + f([3:3]) \\
        \hline
        L + f([1:2])/2 & L + f([1:2])/2 & -L - f([3:3])\\
        -L - f([1:2])/2 & f([2:3])/2 - f([1:2])/2 & L + f([2:3])/2 \\
        \hline
        L + f([1:3])/3 & f([1:3])/3 - f([2:3])/2 & f([1:3])/3 - f([2:3])/2
      \end{array}
    \right).$
    \caption{Definition of Matrix A}
    \label{fig:matrix}
  \end{center}
\end{figure*}

Consider a two-dimensional array of integers (with both
\begin{wrapfigure}{h}{0cm}
\includegraphics[scale=0.30]{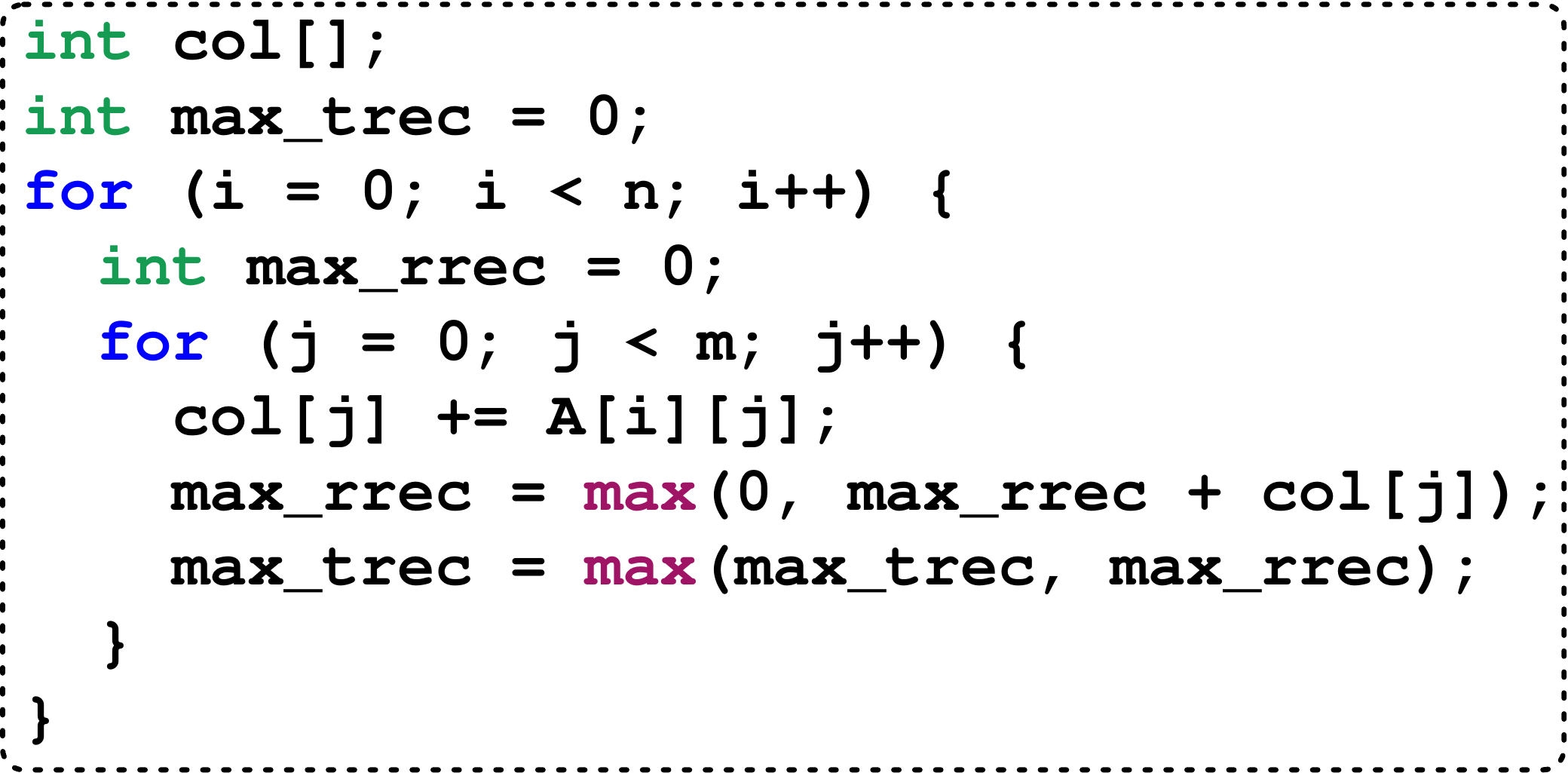}\vspace{-7pt}
\caption{\small Maximum top subarray sum.\label{fig:mtr}}
\end{wrapfigure}
 positive and negative) elements. Assume that the goal is to compute the maximum sum of the elements of a subarray $A[0..\ell, j..k]$ for all $0 \le \ell <n$ and $0 \le j \le k < n$, i.e.  all subarrays that start from the top row of the original array, but can include a subset of its rows and columns.
The code in Figure \ref{fig:mtr} is a clever single-pass implementation of this function. We will prove that this code, although very similar syntactically to the one in Figure \ref{fig:mtlr}, does not admit an efficient divide-and-conquer parallelization.

\def\Jn{\mathbb{I}_{n}}
\def\Jnm#1{\mathbb{I}_{n}^{#1}}
\def\Jnr{\Jnm{r}}

Let $\mathbb{R}^{+}$ be the set of positive real numbers. For integers $i \leq j$, let $[i:j]$ denote the set $\{i, \dots, j\}$, and for each integer $n$ and $r \leq n$ define \[\Jnm{r} := \left\{[i: j]: 1 \leq i \leq j \leq n\ |\ j < i + r\right\}.\]
We write $\Jn$ for $\Jnm{n}$. Note that $|\Jnm{r}| = O(nr)$ and in particular $|\Jn| = O(n^{2})$.

We call a function $f:\Jnr\rightarrow \mathbb{R}^+$ {\em graded} if for any $J, J' \in \Jnr$ with $|J| < |J'|$:
\[f(J) / |J| > f(J') / |J'|.\]

\begin{lemma}
\label{grading}
Given an arbitrary function $f: \Jnr \rightarrow \mathbb{R}^{+}$ and  $X \geq \max_{J, J' \in \Jnr} |f(J) - f(J')|$, the function $\tilde{f}$ defined as
\begin{equation}
\tilde{f}(J) = |J|(n - |J|) X + |J| f(J),
\end{equation}
is graded.
\end{lemma}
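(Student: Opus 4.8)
I want to show that for arbitrary $f : \Jnr \to \mathbb{R}^+$ and $X \geq \max_{J,J'} |f(J) - f(J')|$, the function $\tilde{f}(J) = |J|(n-|J|)X + |J|f(J)$ is graded, i.e. whenever $|J| < |J'|$ we have $\tilde{f}(J)/|J| > \tilde{f}(J')/|J'|$.

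**Plan.** The plan is to divide through by the sizes immediately, since the grading condition is a statement about the per-element averages $\tilde{f}(J)/|J|$. Writing $p = |J|$ and $q = |J'|$ with $1 \le p < q \le r$, I compute

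\begin{align*}
\frac{\tilde{f}(J)}{|J|} &= (n-p)X + f(J), \\
\frac{\tilde{f}(J')}{|J'|} &= (n-q)X + f(J').
\end{align*}

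Subtracting, the claim $\tilde{f}(J)/|J| - \tilde{f}(J')/|J'| > 0$ reduces to
\[
(q - p)X + \bigl(f(J) - f(J')\bigr) > 0.
\]
So the entire lemma collapses to this one scalar inequality, and the polynomial prefactor $|J|(n-|J|)X$ was engineered precisely so that the $X$-term survives division by $|J|$ as the linear expression $(n-|J|)X$.

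**Finishing the inequality.** Since $p < q$ are integers, $q - p \ge 1$, so $(q-p)X \ge X$. By the hypothesis on $X$ we have $X \ge \max_{J,J'} |f(J) - f(J')| \ge |f(J) - f(J')| \ge f(J') - f(J) = -(f(J) - f(J'))$. Hence $(q-p)X + (f(J) - f(J')) \ge X - (f(J') - f(J)) \ge 0$. To get the \emph{strict} inequality I need to rule out equality: this requires $X > 0$, or more precisely that the chain $\ge$ is strict somewhere. The cleanest way is to observe that $f$ maps into $\mathbb{R}^+$ (strictly positive reals), so if $f$ is nonconstant then $X > 0$ and the first inequality $(q-p)X \ge X > \ldots$ can absorb any slack; if $f$ is constant then $f(J) = f(J')$ and the difference equals $(q-p)X \ge X \ge 0$, where I still need $X > 0$.

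**The main obstacle.** The only delicate point — and the step I expect to need the most care — is justifying the \textbf{strictness} in the degenerate case. If $f$ is constant, then $\max_{J,J'}|f(J)-f(J')| = 0$ and the bare hypothesis $X \ge 0$ permits $X = 0$, in which case $\tilde{f}(J)/|J| = (n-|J|)X + f(J) = f(J)$ is also constant and grading fails. So I expect the lemma as literally stated to rely on an implicit assumption that $X$ is \emph{strictly} positive (or that $n - |J'| < n - |J|$ contributes strictly, i.e. that $X>0$ is intended), consistent with the notation $\mathbb{R}^+$ for strictly positive reals used throughout this section. Under $X > 0$ the argument is immediate: $(q-p)X + (f(J)-f(J')) \ge X - |f(J)-f(J')| \ge X - X$, but more usefully $(q-p)X \ge X$ with $q-p \ge 1$ gives strict dominance over the bounded fluctuation $|f(J)-f(J')| \le X$ as soon as we note $(q-p)X + (f(J)-f(J')) = (q-p-1)X + \bigl(X + f(J) - f(J')\bigr) > 0$, since $X + f(J) - f(J') \ge X - |f(J)-f(J')| \ge 0$ and the term $(q-p-1)X \ge 0$, with at least one of the two being strict because $X>0$ forces $X + (f(J)-f(J')) = X - (f(J')-f(J))$; I will make this precise by simply noting $X > f(J') - f(J)$ is strict whenever $X$ strictly exceeds the maximum gap, which I will adopt as the reading of the hypothesis. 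I would state the proof for $X > 0$ and flag that this is what grading genuinely requires.
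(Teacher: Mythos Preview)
Your approach is essentially identical to the paper's: compute $\tilde f(J)/|J| - \tilde f(J')/|J'| = (|J'|-|J|)X + f(J) - f(J') \ge X + f(J) - f(J')$, and conclude this is positive. The paper's proof is a one-liner doing exactly this.

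Your extended discussion of the strictness issue is well taken and goes beyond what the paper says. The paper simply writes $X + f(J) - f(J') > 0$ without comment, but as you note, the stated hypothesis $X \ge \max_{J,J'}|f(J)-f(J')|$ only yields $\ge 0$, and if $f$ is constant and $X=0$ the conclusion fails. In the paper's actual use of the lemma $X$ is always chosen strictly larger than the maximum gap (e.g.\ $X = 2^k+1$ when the range of $f$ is $\{1,\dots,2^k\}$), so the issue is moot in context, but you are right that the lemma as literally stated needs $X$ strictly greater than the maximum gap (or at least $X>0$) for the grading to be strict.
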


\begin{proof}
If $|J| < |J'|$:
\[ \frac{\tilde{f}(J)}{|J|} - \frac{\tilde{f}(J')}{|J'|} = (|J'| - |J|) X + f(J) - f(J') \geq X + f(J) - f(J') > 0.\]
\end{proof}

Given a matrix $A = [a_{ij}]$ with $n$ columns, the {\em column-interval maximum prefix sum} of $A$ is a real-valued function $\mu_{A}$ defined over $\Jn$, as
\[\mu_{A}(J) := \max_{k \geq 1}\sum_{i = 1}^{k}\sum_{j \in J}a_{ij},\]
and the {\em maximum top subarray sum}  of $A$ is
\[\mu(A) := \max_{J \in \Jn}\mu_{A}(J).\]

\begin{figure*}
\begin{center}
\includegraphics[scale=0.21]{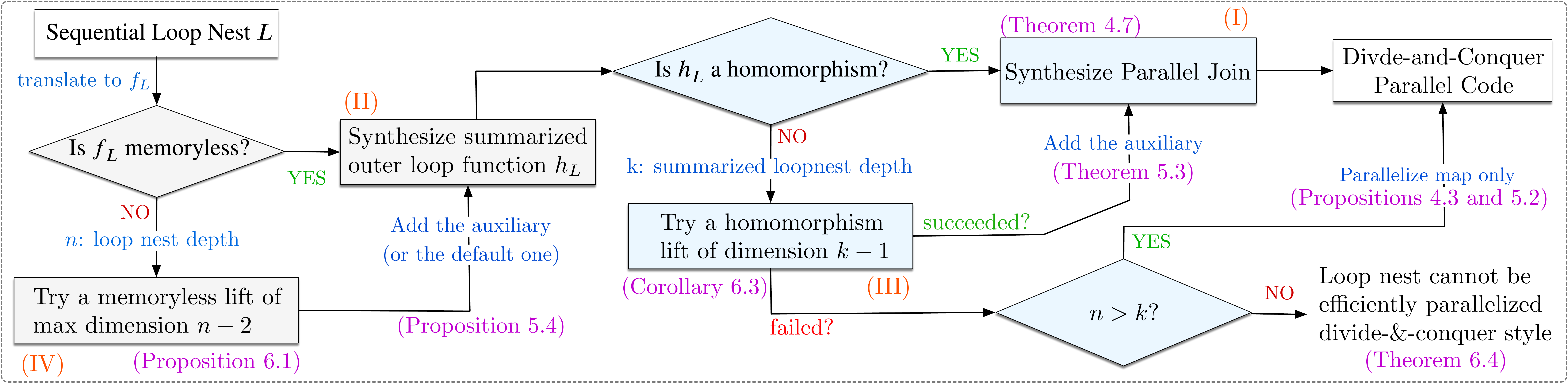}
\caption{Parallelization Schema\label{fig:schema}}
\end{center}
\end{figure*}

\begin{lemma}
\label{reduction}
For any graded function $f:\Jnr \rightarrow \mathbb{R}^{+}$, there is an $n$-column matrix $A$ with $O(nr)$ rows for which $\mu_{A} = f$.
\end{lemma}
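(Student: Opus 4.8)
The plan is to construct $A$ by prescribing its column prefix sums rather than its entries directly. For a matrix $A = [a_{ij}]$ write $C_j(k) = \sum_{i=1}^{k} a_{ij}$ for the prefix sum of column $j$ through row $k$, so that for $J = [p:q]$ we have $\mu_A(J) = \max_{k \ge 1} \sum_{j=p}^{q} C_j(k)$. I would fix a large constant $L$ (concretely, any $L > \max_{J \in \Jnr} f(J)$) and allot exactly one row of $A$ to each interval of $\Jnr$; since $|\Jnr| = O(nr)$, the resulting matrix has $O(nr)$ rows as required, and $n$ columns by design.

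The core of the construction is a telescoping trick. Enumerate $\Jnr = \{J_1, \dots, J_N\}$ in any order and define a sequence of \emph{target states} $T_0, \dots, T_N \in \mathbb{R}^{n}$ by $T_0 = \mathbf{0}$ and, for $t \ge 1$,
\[
T_t[j] = \begin{cases} f(J_t)/|J_t| & j \in J_t, \\ -L & j \notin J_t. \end{cases}
\]
Setting the $t$-th row of $A$ to the componentwise difference $T_t - T_{t-1}$ makes the prefix sums telescope, so that $C_j(k) = T_k[j]$ exactly. Then at the row $k = t$ belonging to the interval $J_t$, each column of $J_t$ carries $f(J_t)/|J_t|$ and the sum over $J_t$ is precisely $f(J_t)$, giving $\mu_A(J_t) \ge f(J_t)$ immediately.

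It remains to prove the upper bound $\mu_A(J) \le f(J)$, and this is where the \emph{graded} hypothesis does all the work. For a fixed $J$ and an arbitrary row $k$, the telescoping identity gives
\[
\sum_{j \in J} C_j(k) = |J \cap J_k|\,\frac{f(J_k)}{|J_k|} - |J \setminus J_k|\,L,
\]
and I would split into three cases. If $J = J_k$ the value is exactly $f(J)$. If $J \subsetneq J_k$ then $|J| < |J_k|$, so gradedness yields $f(J_k)/|J_k| < f(J)/|J|$ and hence $\sum_{j \in J} C_j(k) = |J|\,f(J_k)/|J_k| < f(J)$. In every remaining case $J \not\subseteq J_k$, so at least one $-L$ term survives; since the surviving positive part is at most $f(J_k) \le \max_J f(J) < L$, the sum falls below $0 < f(J)$. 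Taking the maximum over $k$ then gives $\mu_A(J) = f(J)$ for every $J \in \Jnr$, i.e. $\mu_A = f$.

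The hard part, and the reason the lemma requires $f$ to be graded, is precisely the containment case $J \subsetneq J_k$: here the $-L$ penalty is absent, so the density inequality $f(J_k)/|J_k| < f(J)/|J|$ is exactly what prevents a longer interval's row from overshooting the target value of a shorter interval it contains. I would take care to verify that every row of $A$ corresponds to some $J_k$, so that the case analysis is genuinely exhaustive over all $k$, and that the single choice of $L$ dominates $\max_{J \in \Jnr} f(J)$ uniformly so the non-containment case is handled for all $J$ simultaneously.
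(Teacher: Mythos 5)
Your proof is correct and follows essentially the same route as the paper's: the paper's assignment $a_{ij} = f(J)/|J| - s_j$ (resp.\ $-L - s_j$) is exactly your telescoping prescription of column prefix sums, and your three-way case analysis ($J = J_k$, $J \subsetneq J_k$ via gradedness, $J \not\subseteq J_k$ via the $-L$ penalty) matches the paper's, merely organized by set containment rather than by row position. The only cosmetic difference is that the paper orders rows by interval length while you allow an arbitrary enumeration, which is harmless since the argument depends only on the relationship between $J$ and $J_k$.
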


\begin{proof}

Choose $L > \max_{J\in\Jnr}f(J)$. We construct the matrix $A$ by stacking $r$ {\em groups} of rows, where the $i$th group is a set of $n - i + 1$ rows, each of which corresponding to one of the distinct sets $J \in \Jnr$ with $|J| = i$, thus the total number of rows adding up to $\sum_{i = 1}^{r} (n - i + 1) = O(nr)$.

Let $k_J$ denote the row that corresponds to the set $J \in \Jnr$. The entries in row $k_J$ are determined as follows: for any column $j$, let $s_{j} = \sum_{i = 1}^{k_{J} - 1}a_{ij}$. Then we set
\[a_{ij} = \left\{\begin{array}{lll}f(J)/|J| - s_{j} & \qquad & j \in J\\
-L - s_{j} &&\text{otherwise.}\end{array}\right.\]
For example, for $n = 3$ and $f$ defined on $\Jn$ the matrix $A$ would be as illustrated in Figure \ref{fig:matrix}.

The matrix for $\Jnr$ would simply keep the first $r$ groups of the one for $\Jn$.

We claim that for any $J \in \Jnr$, $\mu_{A}(J) = f(J)$. For any row number $k$ define for a column $j$,
\[s_j(k) := \sum_{i = 1}^{k} a_{ij}, \]
and for an $J \in \Jn$ define
\[s_{J}(k) := \sum_{j \in J}s_{j}(k). \]

It can be readily confirmed from our construction that for any $J \in \Jnr$, and any column $j$:
\[s_{j}(k_{J}) = \left\{\begin{array}{lll}
f(J) / |J| &\qquad& j \in J\\
-L &&\text{otherwise.}
\end{array}\right.\]

In particular, we have $s_{J}(k_{J}) = f(J)$, immediately implying that $\mu_{A}(J) \geq f(J)$. To prove $\mu_{A}(J) = f(J)$, we show next that for any $k \neq k_{J}$, $s_J(k) < f(J)$.
By our construction, if $k = k_{J'} \neq k_{J}$, for some $J' \in \Jnr$ then
\begin{equation}
s_{J}(k) = \sum_{j \in J'\cap J}f(J')/|J'| - \sum_{j \in J \setminus J'}L  =  |J' \cap J|f(J')/|J'| - L|J \setminus J'|.
\end{equation}
If $k < k_{J}$, then
\[
s_k(J) = |J' \cap J|f(J')/|J'| - L|J \setminus J'| \leq f(J') - L|J \setminus J'| < 0,
\]
where the last inequality follows because $J' \neq J$ and $f(J') < L$. In fact, the above inequality holds true by the same reasoning, even if $k > k_{J}$ but $J \setminus J' \neq \emptyset$. To complete the proof, suppose $k > k_{J}$ and $J \subset J'$, then
\[
s_{k}(J) = |J|f(J')/|J'| < f(J),
\]
where this final inequality follows from the fact that $f$ is graded.
\end{proof}

The construction in the proof of the above Lemma can be regarded as an {\em encoding}: any function $f:\Jnr\rightarrow \mathbb{R}^{+}$, can be encoded by a tuple $(A, X)$, where $A$ is the constructed $O(nr) \times n$ matrix and $X$ is strict upper bound on the absolute value of the difference between values of $f$ from the proof of the Lemma \ref{grading} needed for turning $f$ into the graded $\tilde{f}$. The decoding for input $J \in \Jnr$ is done by as:
\[f(J) = \mu_{A}(J) / |J| - (n - |J|) X. \]

Let us now restrict the range of values of $f$ to the set of positive integers $\{1, \dots, 2^{k}\}$ representable by $O(k)$ bits. Thus $X = 2^{k} + 1$ would be a valid upper bound to form values $\tilde{f}$ which would then be $O(k)$ bit integers themselves. Observe also that by construction $\tilde{f}(J)$ is always divisible by $|J|$. Following the arithmetics in Lemma \ref{reduction}, it can be verified that the entries of matrix $A$ will all be $O(k)$-bit integers. We can thus state the following.

\begin{lemma}
\label{compression}
Let $\mathcal{C}$ be an algorithm that given an $m \times n$ matrix $A$, with $O(k)$-bit integer entries and with $m \leq n(n+1)/2$, produces a data structure $\mathcal{C}(A)$ that can be used (independently of $A$) to evaluate $\mu_{A}(J)$ for every $J \in \Jn$, then $\mathcal{C}(A)$ needs $\Omega(km)$ bits of space.
\end{lemma}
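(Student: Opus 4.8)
The plan is to use an information-theoretic (counting / incompressibility) argument: I will exhibit a large family of matrices that any such $\mathcal{C}$ must distinguish, and then conclude by pigeonhole that at least one data structure in the family must be large.

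First I would fix $r$ so that $m = |\Jnr| = \sum_{i=1}^{r}(n - i + 1)$. This is exactly the number of rows produced by the construction in Lemma \ref{reduction} (the $i$th group contributes the $n-i+1$ intervals of length $i$), and $m \leq |\Jn| = n(n+1)/2$ as the hypothesis requires. I would then consider the family $\mathcal{F}$ of all functions $f: \Jnr \rightarrow \{1, \dots, 2^{k}\}$; there are exactly $|\mathcal{F}| = (2^{k})^{|\Jnr|} = 2^{km}$ of them. For each $f \in \mathcal{F}$, I form the graded function $\tilde{f}$ of Lemma \ref{grading} with $X = 2^{k}+1$, and let $A_f$ be the matrix produced by the construction of Lemma \ref{reduction} applied to $\tilde{f}$. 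As already observed in the discussion preceding the statement, $A_f$ is an $m \times n$ matrix with $O(k)$-bit integer entries and $m \leq n(n+1)/2$, so it is a legal input to $\mathcal{C}$.

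The crux is that the map $f \mapsto \mathcal{C}(A_f)$ is injective. Given only $\mathcal{C}(A_f)$ together with the fixed global parameters $n, r, k$ (hence $X$), one can evaluate $\mu_{A_f}(J)$ for every $J \in \Jn$, and in particular for every $J \in \Jnr \subseteq \Jn$; by Lemma \ref{reduction} this value equals $\tilde{f}(J)$, and the decoding identity $f(J) = \mu_{A_f}(J)/|J| - (n - |J|)X$ then recovers $f(J)$ for all $J \in \Jnr$. Thus $\mathcal{C}(A_f)$ determines $f$ uniquely, so distinct $f$ yield distinct data structures. I expect this injectivity step to be the main point requiring care, specifically the claim that the decoding depends only on $\mathcal{C}(A_f)$ and the fixed parameters and not on $A_f$ itself: it is precisely the hypothesis that $\mathcal{C}(A)$ can be used independently of $A$ that makes this decoding legitimate.

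Finally I would apply the pigeonhole principle. The $2^{km}$ functions in $\mathcal{F}$ produce $2^{km}$ pairwise-distinct bit strings $\mathcal{C}(A_f)$. Since there are only $2^{km} - 1$ bit strings of length strictly less than $km$, at least one $\mathcal{C}(A_f)$ must have length at least $km$. Hence there is an input matrix on which $\mathcal{C}$ produces a data structure occupying $\Omega(km)$ bits, which is the asserted worst-case lower bound.
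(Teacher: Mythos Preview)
Your proposal is correct and follows essentially the same counting/encoding argument as the paper: use Lemmas~\ref{grading} and~\ref{reduction} to encode arbitrary functions $f:\Jnr\to\{1,\dots,2^k\}$ as admissible matrices, observe that $\mathcal{C}(A_f)$ must determine $f$ via the decoding formula, and conclude the $\Omega(km)$ bound by pigeonhole. The only minor wrinkle is your assumption that $m$ equals $|\Jnr|$ \emph{exactly} for some $r$; in general you should take the largest $r$ with $|\Jnr|\le m$ (giving $|\Jnr|=\Theta(m)$ and possibly padding with zero rows), but the paper's own proof is equally informal on this point, writing simply $r=O(m/n)$.
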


\begin{proof}
Let $r = O(m / n)$. From the above discussion, we can encode any mapping from $\Jnr$ to $k$-bit integers using an $m \times n$ matrix $A$ of $O(k)$-bit integers (plus $O(k)$ bits to represent $X$ from Lemma \ref{grading}). Since there are $2^{O(knr)}$ such mappings, $\mathcal{C}(A)$ must use at least the logarithm of that many bits to be able to distinguish different functions from each other, meaning it must have size $\Omega(knr) = \Omega(km)$.
\end{proof}

\begin{theorem}
For any divide and conquer algorithm for computing $\mu(\cdot)$ of $n$-column matrices of $O(k)$-bit integers, the output of a sub-problem of $r \leq n(n+1)/2$ rows has size $\Omega(kr)$. In particular, solutions to subproblems of size $O(n^{2})$ require $\Omega(kn^{2})$ bits.
\end{theorem}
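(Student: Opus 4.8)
The plan is to show that the sub-result a divide-and-conquer algorithm records for an $r$-row sub-problem is exactly a data structure of the type bounded in Lemma \ref{compression}, so that the $\Omega(kr)$ lower bound is inherited directly; the stated $\Omega(kn^{2})$ bound for subproblems of size $O(n^{2})$ is then the special case $r = \Theta(n^{2})$, which is admissible because $|\Jn| = n(n+1)/2$.

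First I would isolate the only structural property of divide-and-conquer that the argument needs. Embedding the $r$-row block as the middle of a larger input $A_0 \ccat A' \ccat A_2$ (so that both surrounding blocks are available), the final value $\mu(A)$ depends on $A'$ only through the sub-result $o(A')$ that the algorithm computes for that contiguous range of rows, since the join operator sees $A'$ only through $o(A')$. Hence $o(A') = o(A'')$ would force $\mu(A_0 \ccat A' \ccat A_2) = \mu(A_0 \ccat A'' \ccat A_2)$ for every choice of surrounding blocks. To invoke Lemma \ref{compression} it therefore suffices to show that, from $o(A')$ and a target interval $J$ alone, one can recover $\mu_{A'}(J)$: this exhibits $o(A')$ as a data structure $\mathcal{C}(A')$ that evaluates $\mu_{A'}$ independently of $A'$.

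The heart of the proof, and the step I expect to be the main obstacle, is the construction of surrounding blocks that read off a single chosen value $\mu_{A'}(J^{*})$. Cutting the prefix-sum at the two horizontal boundaries gives, for every column interval $J$,
\[\mu_{A_0 \ccat A' \ccat A_2}(J) = \max\bigl(\mu_{A_0}(J),\ S_0(J) + \mu_{A'}(J),\ S_0(J) + S'(J) + \mu_{A_2}(J)\bigr),\]
where $S_0(J)$ and $S'(J)$ are the full column sums of $A_0$ and $A'$ over $J$ and are additive in the columns of $J$. I would take $A_2$ to consist of uniformly very negative entries, annihilating the third (extension) term for all $J$, and take $A_0$ to be the single row with value $+M$ on the columns of $J^{*}$ and $-M'$ elsewhere, where $M$ exceeds the range of the values taken by $\mu_{A'}$ and $M' \gg Mn$. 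A short case analysis then shows that $S_0(J) + \mu_{A'}(J)$ is \emph{uniquely} maximized at $J = J^{*}$: any $J$ meeting a column outside $J^{*}$ is driven negative by $-M'$, while for a proper sub-interval $J \subsetneq J^{*}$ the gap $M(|J^{*}| - |J|) \ge M$ dominates the bounded difference of $\mu_{A'}$-values, exactly as in the grading argument of Lemma \ref{grading}. Consequently $\mu(A_0 \ccat A' \ccat A_2) = M|J^{*}| + \mu_{A'}(J^{*})$, and subtracting the known offset $M|J^{*}|$ recovers $\mu_{A'}(J^{*})$.

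With this selection gadget the proof closes immediately: $o(A')$ determines $\mu_{A'}(J)$ for every $J \in \Jn$, so it is a legitimate $\mathcal{C}(A')$, and Lemma \ref{compression} with $m = r$ yields $|o(A')| = \Omega(kr)$; the underlying count of $2^{\Omega(kr)}$ distinct realizable functions comes from Lemma \ref{reduction}. The technical points I would be careful about are keeping the gadget entries within the $O(k)$-bit regime (taking $M = 2^{O(k)}$ and $M' = 2^{O(k)}\,\mathrm{poly}(n)$ suffices once $k \gtrsim \log n$), and checking that the first term $\mu_{A_0}(J)$ never overtakes the winning term, which holds because $\mu_{A_0}(J) = S_0(J) \le M|J^{*}|$ while $\mu_{A'}(J^{*}) > 0$. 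Recovering $\mu_{A'}$ on $\Jnr$ alone already suffices for the count, so the bound is insensitive to the behaviour of $\mu_{A'}$ on longer intervals.
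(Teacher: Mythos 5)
Your proposal is correct and takes essentially the same route as the paper: both arguments force the join to reveal $\mu_{A'}(J^{*})$ by adversarially planting a single row in the neighbouring block with large positive entries on the columns of $J^{*}$ and large negative entries elsewhere, and then invoke Lemma \ref{compression} to get the $\Omega(kr)$ bound. The only differences are refinements of detail --- you add the very negative trailing block $A_{2}$ so that the subproblem may sit in the middle rather than at the end, and you spell out the case analysis and the $O(k)$-bit bookkeeping that the paper's proof leaves implicit.
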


\begin{proof}
Let $A_{0}$ and $A_{1}$ be two consecutive subproblems with $A_{1}$ consisting of $r$ rows. Let $A_{01}$ represent the concatenation of $A_{0}$ and $A_{1}$. We show that setting a single row of $A_{0}$ adversarially is enough to force the join of $A_{0}$ and $A_{1}$ to compute $\mu_{J}(A_{1})$ for any $J \in \Jnr$. Lemma \ref{compression} then implies that the output computed for $A_{1}$ must have size $\Omega(kr)$. Let $[i:j] \in \Jnr$ and let $L > \max_{J \in \Jnr}\mu_{A_{1}}(J)$. We set the entries one row of $A_{0}$ to  $L$ for columns in $[i:j]$ and $-L$ for the remaining columns. All other rows of $A_{0}$ are set to zero. Since $\mu_{A_{1}}(J) \ll L$ for all $J$. $\mu(A)$ has to use as many $L$ entries in our set row and no $-L$ entries to be maximized. Therefore, $\mu(A_{01}) = |J|L + \mu_{A_{1}}(J)$.
\end{proof}

Therefore, we can conclude with the following theorem:

\begin{theorem}\label{thm:hard}
An efficient lifting of a (multidimensional) rightward function may not always exist.
\end{theorem}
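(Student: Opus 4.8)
The plan is to prove the statement not in the abstract but by producing a single explicit witness --- the maximum top subarray sum function $\mu$ of Figure \ref{fig:mtr} --- and showing that it admits no efficient lifting to a homomorphism. Since $\mu$ is implemented by a genuine single-pass nested loop of polynomial complexity, it is a bona fide multidimensional rightward function, so exhibiting one such function with no efficient homomorphic lifting is exactly what the theorem requires.

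First I would pin down the space budget that an efficient lifting of $\mu$ is permitted. Viewing $\mu$ as a depth-two rightward computation that sweeps the rows of its $n$-column input, Corollary \ref{cor:ltcomp} dictates that any lifting $\lift{\mu}{D'}$ preserving the original time complexity must store every auxiliary value $d' \in D'$ in space $O(m^{n-1})$ --- one dimension below the whole computation, i.e.\ (roughly) linear in the width $n$. Concretely, this matches the efficient lifting of the closely related maximum top-left example of Figure \ref{fig:mtlr}, whose auxiliary array has width $n$; the goal is to show that $\mu$ cannot be squeezed into a comparable budget.

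Next I would connect liftings to the information-theoretic lower bound established just above. By Proposition \ref{prop:map-reduce} a homomorphic lifting of $\mu$ is exactly a divide-and-conquer algorithm for $\mu$: the lifted value $\lift{\mu}{D'}(\sigma)$ attached to a block of rows $\sigma$ plays the role of the ``output of a subproblem'' that the join consumes. The preceding theorem forces this output to occupy $\Omega(kr)$ bits for a block of $r$ rows, and the adversarial single-row construction in its proof shows this is unavoidable --- the lifted state of a block genuinely has to carry enough information to answer $\mu_A(J)$ for every column interval $J$, of which there are $\Theta(n^2)$. Taking blocks with $r = \Theta(n^2)$ rows (allowed, since $r$ ranges up to $n(n+1)/2$) therefore drives the required state to $\Omega(kn^2)$, quadratic in the width and strictly above the (roughly) linear budget granted by Corollary \ref{cor:ltcomp}. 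Hence no efficient lifting of $\mu$ exists in the sense of Definition \ref{def:par}, which establishes the theorem.

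The hard part will be reconciling the two accountings of size so that the $\Omega$ lower bound provably dominates the $O$ budget. The budget from Corollary \ref{cor:ltcomp} is phrased in the loop-nest parameters (size $m$ and nesting depth), whereas the lower bound is phrased in rows $r$, columns $n$, and bit-width $k$; the delicate point is to choose the instance --- in particular the ratio of rows to columns and the growth of $k$ --- so that ``linear in the non-split width'' and ``$\Omega(kr)$ in the split height'' are genuinely separated rather than coinciding. The remainder is bookkeeping: once the lifted state is identified with the per-subproblem output and the lower bound is seen to be intrinsic rather than an artifact of a wasteful algorithm, the budget violation, and with it the impossibility, follows.
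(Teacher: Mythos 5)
Your proposal is correct and follows essentially the same route as the paper: Section \ref{sec:incompleteness} is the paper's proof of Theorem \ref{thm:hard}, using the maximum top subarray sum of Figure \ref{fig:mtr} as the witness, the encoding argument of Lemmas \ref{grading}--\ref{compression} to force subproblem outputs of $\Omega(kr)$ bits, and the budget of Corollary \ref{cor:ltcomp} to derive the contradiction for tall instances with $r = \Theta(n^2)$. The one point you flag as delicate --- reconciling the per-slice budget with the per-block lower bound --- is indeed the step the paper leaves implicit, and your identification of the lifted state with the subproblem output via Proposition \ref{prop:map-reduce} is the right way to make it precise.
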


\subsection{Parallelization Schema}

The diagram in Figure \ref{fig:schema} illustrates the algorithmic steps in our methodology to parallelize an input sequential program. The light grey section performs the summarization of the loop, which corresponds to the discovery of a {\em map}. The light blue section parallelizes the summarized loop, which corresponds to the discovery of a {\em reduction}. The key algorithmic steps are the synthesis of the summarized outer loop and the parallel join (respectively labeled as (II) and (I) in Figure \ref{fig:schema}), which are solved using syntax-guided synthesis (Section \ref{sec:join}), and the memoryless lift and the homomorphism lift (boxes respectively labeled as (IV) and (III) in Figure \ref{fig:schema}) which are performed using a deductive-style algorithm (Section \ref{sec:alg}). Each step of the process is labeled with the theorem justifying it.

The test for a homomorphism in the schema is only nominal and implemented practically through the success or failure of join synthesis algorithm. It is important to note that Rice's theorem dictates that deciding whether a computable function is a homomorphism in general is undecidable, and therefore there exists no decision procedure for this test.

If a function is not memoryless, then in (IV), an efficient {\em memoryless lift} is attempted; that is, the most efficient lift whose time complexity is no more than $O(m^{n-2})$. If this fails, we know we can always rely on the default admissible memoryless lift (which is incidentally of complexity $O(m^{n-1})$). If a homomorphism lift within the complexity budget (determined by $k$, the summarized loop depth) exists, then a classic divide-and-conquer parallel program is produced. Otherwise, we opt to parallelize the inner loop through the map and leave the outer loop's computation as sequential (as is the case for the example in Section \ref{sec:wbp}). When summarization does not reduce the depth of the loop (i.e $n = k$), then there is no benefit in parallelizing the inner loop nest through a parallel {\em map}; i.e. the parallelization has failed.


\section{Join Synthesis} \label{sec:join}
In this section, we address the algorithmic problems of generation of the {\em parallel join} operator and the {\em summarized outer loop}, respectively steps (I) and (II) from the general schema of Figure \ref{fig:schema}. Although the two problems seem independent, the latter turns out to be a special instance of the former.

\subsection{Syntax-guided Synthesis of Parallel Join}

We employ syntax-guided synthesis (SyGuS)\cite{Alur15} to generate the parallel join. Given a \emph{correctness specification} $\phi$ and \emph{syntactic constraints} describing the syntactic space $\mathbbm{S}$ of possible implementations for join, a syntax-guided synthesis solver finds a solution $x \in \mathbbm{S}$ where $\phi(x)$ holds. The correctness specification for the join operator $\odot$  is that $h_L$ (the summarized loop function) forms a homomorphism with $\odot$ (i.e. Definition \ref{def:hom}); i.e., $\phi(\odot) \equiv \forall x, y \in S_D, h_L(x \ccat y) = h_L(x) \odot h_L(y)$.

The main challenge in SyGuS is to appropriately define the syntactic restrictions. On one hand, they need to be expressive enough to include an efficient $\odot$ that satisfies $\phi$ if one exists. On the other hand, the smaller the state space $\mathbbm{S}$, the more tractable the search problem for its synthesis.

We use an insight to define $\mathbbm{S}$ effectively. A function $f^{-1}$ is a weak inverse of a function $f$ iff $f \circ f^{-1} \circ f = f$, and $f$ always has at least one weak inverse iff $f$ is computable and its domain is countable. All the functions of interest in this paper have weak inverses of signature $D \rightarrow \seq^n$. In the proof of the third homomorphism theorem in \cite{Gibbons96}, it is observed that a {\em join} $\odot$ for a homomorphism $h_L$ can be constructively defined based on $h_L$'s weak inverse. That is, for all $d,d' \in D$ we have $d \odot d' = h_L(h_L^{-1}(d) \ccat h_L^{-1}(d'))$. This implies an $\odot$ with a similar syntactic structure to $h_L$ exists. Moreover,  Proposition \ref{prop:tcomp} implies that for $\odot$ to remain within the complexity budget,  $h_L^{-1}(d)$ and $h_L^{-1}(d')$ need to have {\em constant} length.

\begin{example}\label{ex:7-1}
  Recall the maximum top-left subarray sum example from Figure \ref{fig:mtlr}(c). The summarized function $h_{mtls}$'s signature
  is the tuple of state variables $\langle${\tt \small rec,max\_rec,row\_mrec,} {\tt \small mtl\_rec}$\rangle$ and its weak inverse is a 2-row array with the same width as the original input. It is illustrated below.
  \parpic[rb]{\includegraphics[scale=0.2]{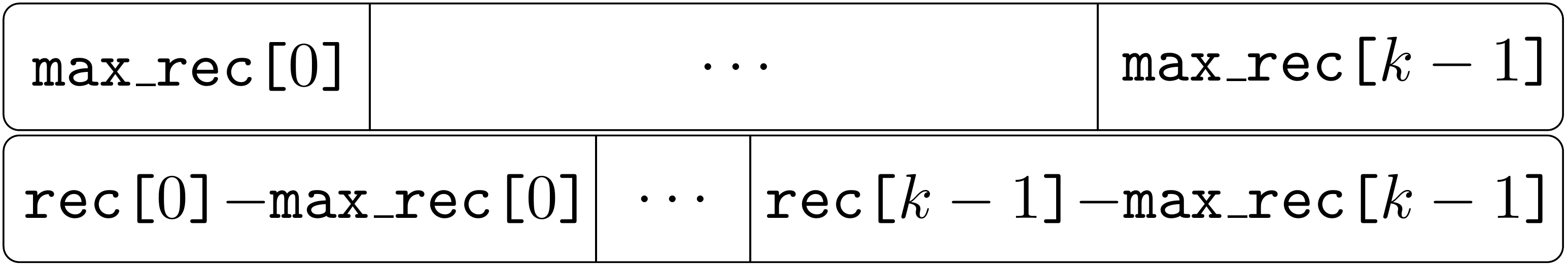}}
  A join constructed based on this weak inverse executes $h_{mtls}$ on 4 rows; the concatenation of 2 sets of 2 rows coming from the left and the right threads.
\end{example}

In syntax-guided synthesis, $\mathbbm{S}$ is defined by a {\em sketch} (a program with unknowns) and an {\em expression grammar} $G$ specifying possible completions for the holes (unknowns). Intuitively, the solver searches for substitutions from expressions in $G$ for all holes in the sketch, such that the resulting program satisfies the correctness specification. The construction of the sketch we use is an extension of the one in \cite{PLDI17}. It needs to be extended, since this paper introduces a technique to synthesize superscalar joins, and in \cite{PLDI17} only constant-time computable joins are considered.

\paragraph{Sketch}
 To obtain the sketch, a compilation function $\cf$ (presented in Figure \ref{fig:compilation_full}) is applied to the system of recurrence equations representing the body of the summarized loop. The result is a system of equations where the right-hand side of the equations become expressions with holes. We define $\cf$ on expressions first, and then extend it to a systems of equations.

In Figure \ref{fig:compilation_old}, we recall the compilation function $\cf$ of \cite{PLDI17}. It transforms expressions of the input loop body into expressions of the sketch by replacing variables with holes. Recall that the join takes as input the computation results of two threads: we will refer to them as the \emph{left} thread and the \emph{right} thread. In order to reduce the size of the state space of solutions, we identify two types of holes: (1) right holes $\rhole$, which can be completed by expressions using only variables from the right thread and (2) left holes $\lrhole$, which can completed using variables from both threads.  The compilation function $\cf$ is defined recursively on expressions $e$ of the input language. $op$ is an operator,  $x$ is a (possibly subscripted) variable, and $c$ is a constant.

Since the join will use recursion (or, equivalently iteration with accumulation), we need to allow the use of recursion variables in the join. We extend the compilation function with $\cfrec$ and add a third type of hole: recursive hole $\lrrechole$ which can completed using variables from both threads and local variables defined in the join. $\cfrec$ is defined in Figure \ref{fig:compilation}: it coincides with $\cf$ on constants and expressions, and replaces state variables with recursive holes $\lrrechole$ instead of left-right holes $\lrhole$.

\begin{figure}[h]
 \vspace{5pt}
  \begin{subfigure}[t]{0.5\textwidth}
    \footnotesize
    \framebox{
      \begin{minipage}{8.2cm}
        \begin{center}
          \begin{tabular}{rcl}
            $\cf\left(k\right)$ & $=$ & $??_R$ \\
            $\cf\left(x \right)$ & $ = $ &
                                           $\left\{ \begin{matrix}
                                               \rhole & \IF \ x \in \IVar \\
                                               \lrhole  & \IF \ x \in \SVar
                                             \end{matrix} \right.
                                                          $ \\
            $\cf\left(op(e_1, .., e_m)\right)$&$ = $&
                                                      $op\left(\cf(e_1), ..., \cf(e_m)\right)$
          \end{tabular}
        \end{center}
      \end{minipage}
    }
    \caption{Sketch compilation for scalar joins from \cite{PLDI17}.}
    \label{fig:compilation_old}
  \end{subfigure}

  \begin{subfigure}[t]{0.5\textwidth}
    \footnotesize
    \framebox{
      \begin{minipage}{8.2cm}
        \begin{center}
          \begin{tabular}{rcl}
            $\cfrec\left(x \right)$ & $ = $ &
                                              $\left\{
                                              \begin{matrix}
                                                \rhole & \IF \ x \in \IVar \\
                                                \textcolor{blue}{\mathbf{\lrrechole}} & \IF \ x \in \SVar
                                              \end{matrix} \right.
                                                                                        $ \\
          \end{tabular}
        \end{center}
      \end{minipage}
    }
    \caption{Extension of the compilation function with $\cfrec$.}
    \label{fig:compilation}
  \end{subfigure}

  \begin{subfigure}[t]{0.5\textwidth}
    \footnotesize
    \framebox{
      \begin{minipage}{8.2cm}
      \[
        \cf(E) = \left(
          \begin{matrix}
            s_1  = \cf(Exp_1(\IVar, \SVar))\; \\
            \vdots \\
            (s_{i_1},s_{i_2}, \dots, s_{i_p}) = \loopfor{i \in \mathcal{I}}{ \: \cfrec(E_{loop})}\\
            \vdots \\
            s_q = \cfrec(Exp_q(\IVar, \SVar)) \\
          \end{matrix}
        \right)
      \]
    \end{minipage}
  }
    \caption{Compilation of the sketch body from the system of equations $E$.}\label{fig:compilation_system}
  \end{subfigure}
  \vspace{-5pt}
  \caption{Sketch compilation function $C$}\label{fig:compilation_full}
  \vspace{5pt}
\end{figure}

Finally, the compilation function is defined over a system of recurrence equations $E$ in Figure \ref{fig:compilation_system}.
$\cf(E)$ is the result of applying the compilation functions to the expressions on the right-hand side of the equations in the loop body.
$\cf$ is applied to the right-hand side of simple equations appearing before all loop equations. $\cfrec$ is used for the body of a loop equation ($\cfrec(E_{loop})$) and all the equations after. $\cfrec(E)$ is defined similarly in a recursive manner, but only $\cfrec$ is applied to the expressions of the right hand side of each equation in $E$.
Remark that for local variables to be used in the loop of the sketch, the local variables will first need to be initialized, and we will add equations $s_i = \lrhole$ for any variable $s_i$ that can be used in a loop and that has not been initialized before that loop.
To include the solutions described in Example \ref{ex:7-1}, we allow bounded repetitions of the sketch. To produce the exact join of this example, four would have been necessary. But, practically, in the vast majority of the cases one repetition of the sketch is sufficient. Additionally, we extend the state space of solutions represented by the sketch to include potential summarized solutions: any equation sketch $s_i = \cfrec(Exp)$ appearing after a loop is copied in the body of the preceding loop. The construction still ensures that the solution based on the weak inverse is in the space of possible solutions.

\parpic[rt]{\includegraphics[scale=0.33]{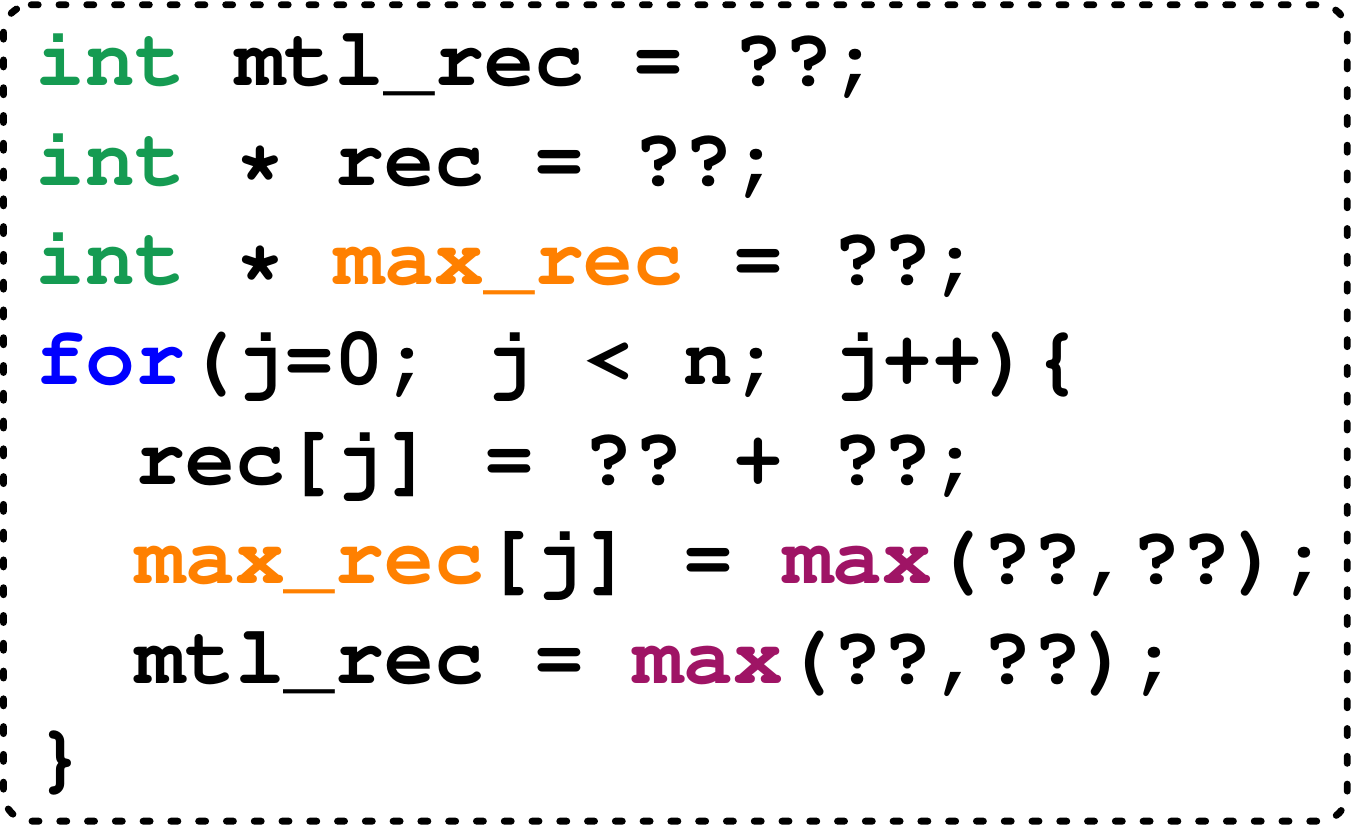}}
For example, consider the sketch illustrated on the right, written in the syntax of the input language for simplicity.
The crucial difference between a looped sketch like this one and those in \cite{PLDI17} is that in a loop, variables may have to be referenced on the right-hand side of the assignments to effectively implement {\em recursion}. Therefore, the extended sketch allows for join variables to appear on the righthand side of the expressions (i.e. {\tt ??} stands for all variables in contrast to just left and right variables). A complete sketch admits bounded repetitions of the above loop (not illustrated), which then produces exactly the solution from Example \ref{ex:7-1} in 4 repetitions. But, one can piggy back on the first loop to update {\tt mtl\_rec} simultaneously with {\tt max\_rec[]} instead of having to wait for the next loop in the repetition. This leads to the discovery of an optimal join (i.e. the one in Section \ref{sec:mtlr}), compared to a less efficient join of Example \ref{ex:7-1}. Note that both joins are valid solutions of the sketch. In the implementation we first search for the simplest join by initially allowing only one repetition. 

  Assuming that $h_L^{-1}$ returns a constant-length (multidimensional) sequence, and $h_L$ is a homomorphism, then a join is guaranteed to exist  in the space described by our sketch. The synthesis procedure can {\em soundly} declare $h_L$ not to be a a homomorphism when it cannot find a join.

  \paragraph{Expression grammar}
  The grammar of expressions used to complete the holes $\rhole$, $\lrhole$ and $\lrrechole$ during the syntax-guided synthesis of the join operator is presented in Figure \ref{fig:grammar}. This grammar is parameterized by (1) the operators that can be used in the expression and (2) the set of expressions ($\mathit{nVars}$ and $\mathit{bVars}$) that can be used in the leaves of the expression tree and (3) the maximal expression height allowed $\kappa$.
Sets of operators of different types ($\numbinary, \compare$ and $\boolbinary$) are given in the figure and can be extended if the input program uses additional operators. In practice, the set of available operators is gradually increased until a solution is found, starting with the set of operators that appear in the input program.
The set of variables available in the expression depends on the hole type ($\rhole$, $\lrhole$ or $\lrrechole$), as discussed previously.
Finally, the parameter $\kappa$ is gradually increased until a solution is found; in practice, we observed that $\kappa \leq 2$.

For example, in the sketch presented above, most holes only need to be replaced by a single variable and only one hole needs to be replaced by an expression of height one (\texttt{rec\_l[j] + max\_rec\_r[j]}) to get the solution presented in Figure \ref{fig:mtlr-join}.

\begin{figure}
  \framebox{
    \begin{minipage}{7cm}
      \begin{minipage}[t]{7cm} \footnotesize
        $ ne_0 $ \hspace*{20pt} $::=  x\ |\ x[ej]\ | \ c $ \hfill $x \in nVars, c$ a numeric constant \\
        $ ne_{\kappa>0}$ \hspace*{12pt} $::= \ ne_{\kappa-1} \ \numbinary \  ne_{\kappa-1}\ | \ -ne_{\kappa-1}$ \\
        \hspace*{45pt}$| \ \texttt{if} \ (be_{\kappa-1}) \ {ne_{\kappa-1}} \ \texttt{else} \ {ne_{\kappa-1}}$ \\
      \end{minipage}              %
      \hfill
      \begin{minipage}[t]{7cm}\footnotesize
        $ be_0 $ \hspace*{20pt} $::=  b\ | b[ej]\| \ \mathit{true}\ |\ \mathit{false}$ \hfill $b \in bVars$ \\
        $ be_{\kappa>0}$ \hspace*{12pt} $::= \ be_{\kappa-1} \ \boolbinary \  be_{\kappa-1}\ | \  \neg be_{\kappa-1}$ \\
        \hspace*{45pt}$| \ ne_{\kappa-1} \ \compare \ ne_{\kappa-1}$ \\
        \hspace*{45pt}$| \ \texttt{if} \ (be_{\kappa-1}) \ {be_{\kappa-1}} \ \texttt{else} \ {be_{\kappa-1}}$ \\
      \end{minipage}\\
      \begin{minipage}[t]{7cm}\footnotesize
        \begin{tabular}{lclc}
          $ ej$ & $:=$ & $ j \numbinary c$ & $j$ an iterator, $c$ integer constant\\
        \end{tabular}
      \end{minipage}            %
      \hfill
      \hrule
      \begin{minipage}[t]{7cm}\footnotesize
        \begin{tabular}{lclc}
          $\numbinary$ & $:=$ & $ \  +, -, min, max,\times, \div$ & binary numeric\\
          $\compare$ & $:=$ & $ \  >, >=, <, <=, =  $ & comparisons\\
          $\boolbinary$ & $:=$ & $ \ \wedge, \vee $ & binary boolean\\
        \end{tabular}
      \end{minipage}
    \end{minipage}                %
  }

  \caption{\small Grammar of expressions used for $\rhole$, $\lrhole$ and $\lrrechole$ holes. $ne_{\kappa}$ and $be_{\kappa}$ correspond to expressions of depth up to and equal to $\kappa$. $\mathit{nVars}$ and $\mathit{bVars}$ stand for numeric and booleans variables.  \label{fig:grammar}}
  \vspace{-15pt}
\end{figure}

\subsection{Summarized Loop Synthesis}\label{sec:mjoin}
Assuming that the loop is memoryless, summarization of the loop boils down to the synthesis of the operator $\circledcirc$ from Figure \ref{fig:diagram}. We argue why this problem is nearly identical to the synthesis of a homomorphic join.

\begin{proposition}
  A multidimensional rightwards function is memoryless iff we have $\circledcirc$ and $\zero_g$ that satisfy the specification $ \phi(\circledcirc, \zero_g) \overset{def}{\equiv} \forall d \in D, \delta \in S^{n-1}, \ \mathbbm{G}(d)(\delta) = d \circledcirc \mathbbm{G}(\zero_g)(\delta)$.
\end{proposition}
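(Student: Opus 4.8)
The plan is to prove the two directions separately, treating this as a characterization of memorylessness (Definition \ref{def:memoryless}) in terms of the specification $\phi(\circledcirc, \zero_g)$. The key observation driving the whole argument is that memorylessness is exactly the statement that the family $\mathbbm{G}$ from Definition \ref{def:mrw} collapses: all members $\mathbbm{G}(d)$ should be recoverable from a single representative $\mathbbm{G}(\zero_g)$ by applying a binary operator $\circledcirc$ that folds in the carry-over state $d$. So I would read $\phi(\circledcirc, \zero_g)$ as the precise algebraic condition witnessing this collapse.

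For the ($\Leftarrow$) direction, I would assume $\phi(\circledcirc, \zero_g)$ holds, i.e.\ $\mathbbm{G}(d)(\delta) = d \circledcirc \mathbbm{G}(\zero_g)(\delta)$ for all $d \in D$ and $\delta \in \seq^{n-1}$. First I set $g := \mathbbm{G}(\zero_g)$, which is a (rightward or leftward) function $\seq^{n-1} \to D$, exactly the helper function shape required by Definition \ref{def:memoryless}. Then I substitute into the defining rightward equation of $f$, namely $f(\sigma \ccat [\delta]) = f(\sigma) \otimes \mathbbm{G}(f(\sigma))(\delta)$, using $d = f(\sigma)$. Applying $\phi$ gives $f(\sigma \ccat [\delta]) = f(\sigma) \otimes \left( f(\sigma) \circledcirc g(\delta)\right)$. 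To match Definition \ref{def:memoryless} I would define the memoryless binary operator $\oplus$ by $a \oplus b := a \otimes (a \circledcirc b)$; this is a genuine binary operator on $D \times D \to D$, and it yields $f(\sigma \ccat [\delta]) = f(\sigma) \oplus g(\delta)$, with $g$ having no dependence on $f(\sigma)$. Hence $f$ is memoryless.

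For the ($\Rightarrow$) direction, I would assume $f$ is memoryless, so there exist $g$ and $\oplus$ with $f(\sigma \ccat [\delta]) = f(\sigma) \oplus g(\delta)$. Comparing this to the rightward equation $f(\sigma \ccat [\delta]) = f(\sigma) \otimes \mathbbm{G}(f(\sigma))(\delta)$, I recover the relationship $\mathbbm{G}(d)(\delta)$ and $g(\delta)$ must satisfy whenever $d$ is a reachable state $f(\sigma)$; the essential content is that $g$ is a single function independent of $d$, so I can take $\zero_g$ to be any value with $\mathbbm{G}(\zero_g) = g$ (for instance a fixed initial state, matching the $\zero$ terminology from Section \ref{sec:mp}), and then \emph{define} $\circledcirc$ so that $\phi$ holds, e.g.\ by $d \circledcirc c := $ the appropriate combination that reconstructs $\mathbbm{G}(d)(\delta)$ from $d$ and $g(\delta) = \mathbbm{G}(\zero_g)(\delta)$. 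Concretely, since both $\mathbbm{G}(d)$ and $g$ are rightward functions over $\seq^{n-1}$ built from the same element-level operator $\ominus$ differing only in their base case (as noted after Definition \ref{def:mrw}, ``the family of functions can be viewed as only differing in their recursion base case''), $\circledcirc$ just has to correct the base case, analogously to the weak-inverse reconstruction used in Proposition \ref{prop:mlift}.

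The main obstacle I anticipate is the $(\Rightarrow)$ direction, specifically making the definition of $\circledcirc$ well-defined on \emph{all} of $D$ rather than merely on the reachable states $\{f(\sigma) : \sigma \in \seq^n\}$. The specification $\phi$ quantifies $d$ over all of $D$, whereas memorylessness only constrains $f$ on states that actually arise, so I must argue that $\circledcirc$ extends consistently to unreachable $d$ (either by invoking the base-case-only structure of the family $\mathbbm{G}$, which is defined for every $d \in D$, or by fixing an arbitrary consistent extension). The cleanest route is to exploit that each $\mathbbm{G}(d)$ is itself defined for every $d$ via the shared $\ominus$ with base case $d$, so $d \circledcirc \mathbbm{G}(\zero_g)(\delta) := \mathbbm{G}(d)(\delta)$ can be realized by a single operator precisely because $\mathbbm{G}(\zero_g)(\delta)$ carries enough information (through $g$'s rightward structure) to reconstruct $\mathbbm{G}(d)(\delta)$ after swapping the base case from $\zero_g$ to $d$; verifying that this reconstruction depends only on the value $\mathbbm{G}(\zero_g)(\delta)$ and not on $\delta$ itself is the delicate point, and it is exactly what the existence of a homomorphic/weak-inverse structure on $g$ (cf.\ Proposition \ref{prop:mlift}) would guarantee.
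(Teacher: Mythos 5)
Your ($\Leftarrow$) direction is correct and in fact more explicit than anything in the paper: setting $g := \mathbbm{G}(\zero_g)$ and $a \oplus b := a \otimes (a \circledcirc b)$ does convert the rightward equation of Definition~\ref{def:mrw} into the memoryless form of Definition~\ref{def:memoryless}. Be aware that the paper offers no proof of this proposition at all---it declares the equivalence with Definition~\ref{def:memoryless} ``straightforward'' and spends its effort on the companion result (Proposition~\ref{prop:mjoin}, via Proposition~\ref{prop:hom-family-reduction}) that $\phi$ holds iff $\mathbbm{G}$ is a family of homomorphisms. So there is no official argument to diverge from; but your ($\Rightarrow$) direction contains a genuine gap, one you partly flag yourself and then do not close.

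The gap is this: memorylessness of $f$ only gives you $f(\sigma \ccat [\delta]) = f(\sigma) \oplus g(\delta)$, i.e., a constraint on the $\otimes$-combined output of the inner computation, and only at states of the form $d = f(\sigma)$, for \emph{some} helper $g$ that need not be a member of $\mathbbm{G}$. The specification $\phi$ is a strictly stronger statement about the family itself: the single value $\mathbbm{G}(\zero_g)(\delta)$ must determine $\mathbbm{G}(d)(\delta)$ for \emph{every} $d \in D$. Your proposed repair via weak inverses is circular. If $w$ is a weak inverse of $g = \mathbbm{G}(\zero_g)$, then $d \circledcirc c := \mathbbm{G}(d)(w(c))$ is indeed a well-defined operator, but $d \circledcirc g(\delta) = \mathbbm{G}(d)(w(g(\delta)))$, and $w(g(\delta))$ is only guaranteed to have the same \emph{$g$-value} as $\delta$, not the same $\mathbbm{G}(d)$-value; the claim that $\mathbbm{G}(d)$ factors through $g$ is exactly $\phi$, the thing to be proved. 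Note that the paper's own use of weak inverses in the proof of Proposition~\ref{prop:hom-family-reduction} takes the $\circledcirc$-homomorphy of $\mathbbm{G}(\zero)$ as a \emph{hypothesis} and applies the weak inverse to the state $d$ (using the assumption that all members of $\mathbbm{G}$ share one step operator and differ only in their base case); it does not derive $\phi$ from memorylessness of $f$. To make your forward direction rigorous you would either have to add such hypotheses explicitly, or accept that $\phi$ is really being taken as the operational \emph{definition} of memorylessness at the level of $\mathbbm{G}$, which is implicitly what the paper does when it adopts $\phi$ as the synthesis specification.
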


It is straightforward to see why the above characterization is equivalent to the one given in Definition \ref{def:memoryless}. One can also show that $\phi$ is identical to the definition of a homomorphism.

\begin{proposition}\label{prop:mjoin}
  $\phi$ holds for a family of functions $\mathbbm{G}$ iff every member of the family is $\circledcirc$-homomorphic.
\end{proposition}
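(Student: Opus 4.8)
The plan is to unfold both sides of the claimed equivalence and reduce them to a single structural identity about the family $\mathbbm{G}$. Write $g_d := \mathbbm{G}(d)$. By construction every member shares the same inner step operator $\ominus$ and differs only in its base case, so $g_d([]) = d$ and $g_d(\delta \ccat [\gamma]) = g_d(\delta) \ominus \gamma$. The engine of the whole argument is the \emph{flow property}: for all $d \in D$ and $\delta, \delta' \in \seq^{n-1}$,
\[ g_d(\delta \ccat \delta') = g_{g_d(\delta)}(\delta'). \]
I would prove this by induction on $\delta'$ using the rightward recurrence: the base case $\delta' = []$ is immediate, and the inductive step reuses $g_d(\delta \ccat \delta'' \ccat [\gamma]) = g_d(\delta \ccat \delta'') \ominus \gamma$ together with the induction hypothesis. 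Intuitively it says that running the inner nest over $\delta$ from base $d$ and then over $\delta'$ is the same as running it over $\delta'$ from the intermediate state $g_d(\delta)$.

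For the forward direction, assume $\phi(\circledcirc, \zero_g)$, i.e. $g_d(\delta) = d \circledcirc g_{\zero_g}(\delta)$ for all $d, \delta$. First I would extract two facts from $\phi$: evaluating at $\delta = []$ gives $d = d \circledcirc \zero_g$, so $\zero_g$ is a right unit of $\circledcirc$, and since $\circledcirc$ is forced to be associative (because concatenation is associative), $\circledcirc$ behaves as a monoid operation on the relevant values. Then for any member $g_d$ I compute, using the flow property and then $\phi$ applied with base value $g_d(\delta)$,
\[ g_d(\delta \ccat \delta') = g_{g_d(\delta)}(\delta') = g_d(\delta) \circledcirc g_{\zero_g}(\delta'), \]
which is the homomorphism equation for $g_d$ once the suffix contribution is recognized as the canonical value $g_{\zero_g}(\delta')$. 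This is exactly the sense in which, as the paper remarks, $\phi$ ``is identical to the definition of a homomorphism'': the single equation of $\phi$, fed back through the flow property, is precisely the split of $g_d$ over a concatenation.

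For the backward direction, assume every member of the family is $\circledcirc$-homomorphic. In particular the canonical member $g_{\zero_g}$ is, so $g_{\zero_g}(\delta_1 \ccat \delta_2) = g_{\zero_g}(\delta_1) \circledcirc g_{\zero_g}(\delta_2)$. The base values $d$ that $\phi$ must cover are precisely the states reachable by the inner nest, i.e. $d = g_{\zero_g}(\delta_d)$ for some $\delta_d$; for such $d$ the homomorphism of $g_{\zero_g}$ together with the flow property yields
\[ d \circledcirc g_{\zero_g}(\delta) = g_{\zero_g}(\delta_d) \circledcirc g_{\zero_g}(\delta) = g_{\zero_g}(\delta_d \ccat \delta) = g_{g_{\zero_g}(\delta_d)}(\delta) = g_d(\delta), \]
which is exactly $\phi(\circledcirc, \zero_g)$ for the very $\circledcirc$ and $\zero_g$ at hand. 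The main obstacle — the step I expect to require the most care — is the reconciliation of the base-case value $d$. In the forward computation the suffix is delivered through the \emph{canonical} function $g_{\zero_g}$ rather than through $g_d$ itself, so to read off a genuine $\circledcirc$-homomorphism of $g_d$ one must argue that the base contribution of $d$ (recorded in $g_d([]) = d$) is not double-counted when $g_d$ is split over a concatenation. This is what pins the homomorphic behaviour of the whole family to its canonical member $g_{\zero_g}$ and to the unit role of $\zero_g$ under $\circledcirc$, and the argument is cleanest precisely when the base values range over states reachable from $\zero_g$, so that each $d$ can be re-expressed via the canonical computation as $d = g_{\zero_g}(\delta_d)$ and the flow property can absorb it.
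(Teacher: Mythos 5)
Your proof is correct, and it takes a genuinely different route from the paper's. Where you isolate a ``flow property'' $g_d(\delta \ccat \delta') = g_{g_d(\delta)}(\delta')$, proved by induction on the suffix, and use it as the engine of both directions, the paper works in the opposite direction: it pushes the state $d$ back into a virtual \emph{prefix} via a weak inverse, writing $\mathbbm{G}(d)(\delta) = \mathbbm{G}(\zero)\left(\mathbbm{G}(\zero)^{-1}(d) \ccat \delta\right)$, formalizes ``every member is homomorphic'' by an asymmetric definition (Definition \ref{def:hom-family}, $\mathbbm{G}(d)(\delta \ccat \delta') = \mathbbm{G}(d)(\delta) \circledcirc \mathbbm{G}(\zero)(\delta')$), and proves a reduction (Proposition \ref{prop:hom-family-reduction}) showing it suffices to check the canonical member $\mathbbm{G}(\zero)$; both you and the paper rely on the same underlying fact that all members share one step operator and differ only in their base case. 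The trade-offs land in opposite places. Your forward direction ($\phi$ implies homomorphic) is cleaner than the paper's: applying $\phi$ at the intermediate state $g_d(\delta)$ needs no weak inverse and no reachability assumption, whereas the paper's identity $\mathbbm{G}(\zero)(\mathbbm{G}(\zero)^{-1}(d)) = d$ silently holds only for $d$ in the image of $\mathbbm{G}(\zero)$. Conversely, your backward direction pays for routing through the canonical member: you obtain $\phi$ only for reachable $d = g_{\zero_g}(\delta_d)$, while under the asymmetric definition the paper's backward direction is a one-line instantiation at $\delta = []$ (using $\mathbbm{G}(d)([]) = d$) that covers every $d \in D$. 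Since your forward direction establishes exactly the asymmetric equation, you could have closed the loop the same way and removed the reachability caveat entirely; as written, your proof and the paper's carry the same caveat, just in opposite directions, so this is a blemish rather than a gap. Your discussion of double-counting --- why the suffix contribution must be the canonical value $g_{\zero_g}(\delta')$ rather than $g_d(\delta')$ --- is precisely the point of the paper's asymmetric definition. One loose end: your claim that $\circledcirc$ is ``forced to be associative'' is true only on the image of $g_{\zero_g}$, but since you never use it, no harm is done.
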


\begin{proof}
  Consider a family of rightwards (or leftwards) functions $\mathbbm{G} : D \rightarrow (\seq^{n-1} \rightarrow D)$ defined by $\forall d \in D, \delta \in \seq^{n-1}, \ \mathbbm{G}(d)(\delta) = \foldl(\oplus) \ \delta \ d$ with some operator $\oplus : D \times \seq^{m-2} \rightarrow D$.

  Let us first define what it means for a function in $\mathbbm{G}$ to be homomorphic, since their signature is slightly different from the functions in Definition \ref{def:hom}.
   Then, we remark in Proposition \ref{prop:hom-family-reduction} that for every $g$ in $\mathbbm{G}$ to be $\circledcirc$-homomorphic, that is for the \emph{family of functions} to be homomorphic (defined below in Definition \ref{def:hom-family}), we only need to prove that the function $\mathbbm{G}(\zero_g) : \seq^{n-1} \rightarrow D$ is $\circledcirc$-homomorphic, for some $\zero_g$ in $D$. This leads us to our conclusion, equating $\phi$ to the specification of a family of homomorphisms.

\begin{definition}\label{def:hom-family}
  A family of rightwards (or leftwards) functions $\mathbbm{G}: D \rightarrow (\seq^{n-1} \rightarrow D)$ is a family of $\circledcirc$-homomorphisms for binary operator $\circledcirc : D \times D \rightarrow D$ with identity element $\zero \in D$ iff for all sequences $\delta, \delta' \in \seq^{n-1}$ and $\forall d \in D$ we have
  $\mathbbm{G}(d)(\delta \ccat \delta') = \mathbbm{G}(d)(\delta) \circledcirc \mathbbm{G}(\zero)(\delta')$.
\end{definition}

Remark the asymmetry in the definition: the right hand operand of the $\circledcirc$ operator is independent from $d$. This is necessary, as illustrated by the following example. Take the family of sum functions initialized with an arbitrary integer: we have $D = \int$ and $\mathbbm{G}(d)(\delta) = d + sum(\delta)$ where $sum$ returns the sum of all the elements of $\delta$.  Then, for every integer $d$:
\[\mathbbm{G}(d)(\delta \ccat \delta') = d + sum(\delta) + 0 + sum(\delta')
= \mathbbm{G}(d)(\delta) + \mathbbm{G}(0)(\delta').\]
Using $d$ on both side of $+$ would have yielded the wrong answer. The asymmetry allows for every member of the family to have the same homomorphic join operator, in this case $+$.

To prove that a family of rightwards functions is homomorphic, there is no need to prove that for every $d$, the function $\mathbbm{G}(d)$ is homomorphic. It suffices to prove it for $\mathbbm{G}(\zero)$, as the following proposition states.

\begin{proposition}\label{prop:hom-family-reduction}
  A family of rightwards (or leftwards) functions $\mathbbm{G}$ is a family of homomorphisms iff there is an element $\zero \in D$ such that $\mathbbm{G}(\zero)$ is $\circledcirc$-homomorphic for some $\circledcirc : D \times D \rightarrow D$.
\end{proposition}

  Remark that if $\mathbbm{G}$ is a family of homomorphisms, then in particular $\mathbbm{G}(\zero)$ is a homomorphism.

  Now, assume that we have an element $\zero$ and operator $\circledcirc$ such that  $\mathbbm{G}(\zero)$ is $\circledcirc$-homomorphic. We are only interested in computable functions that have a countable domain, and therefore have weak inverses. We denote the weak inverse of $\mathbbm{G}(\zero)$ by $\mathbbm{G}(\zero)^{-1}$: we have $\forall d \in D$, $\mathbbm{G}(\zero)(\mathbbm{G}(\zero)^{-1}(d) = d$.
  Let $\delta = [\delta_0, \ldots, \delta_n]$ a sequence of length $n$, we can develop the function application as follows (where $\gamma = \mathbbm{G}(\zero)^{-1}(d) = [\gamma_0, \ldots, \gamma_{n'}]$):
  \begin{align}
    \mathbbm{G}(d)(\delta) &= (\ldots (d \oplus \delta_0) \oplus \ldots \oplus \delta_n)\nonumber\\
                           &= (\ldots (\mathbbm{G}(\zero)\left(\mathbbm{G}(\zero)^{-1}(d)\right) \oplus \delta_0) \oplus \ldots \oplus \delta_n)\nonumber\\
                           &= (\ldots ((\ldots (\zero \oplus \gamma_0) \ldots \oplus \gamma_{n'}) \oplus \delta_0) \oplus \ldots \oplus \delta_n)\nonumber\\
                           &= \mathbbm{G}(\zero) (\gamma \ccat \delta) =  \mathbbm{G}(\zero)(\mathbbm{G}(\zero)^{-1}(d) \ccat \delta) \nonumber
  \end{align}

  Let $\delta$ and $\delta'$ two sequences. We use the previous result, and the fact that $\mathbbm{G}(\zero)$ is homomorphic:
  \begin{align}
    \mathbbm{G}(d)(\delta \ccat \delta') &= \mathbbm{G}(\zero)(\mathbbm{G}(\zero)^{-1}(d) \ccat \delta \ccat \delta')\nonumber\\
                                         &= \mathbbm{G}(\zero)(\mathbbm{G}(\zero)^{-1}(d) \ccat \delta) \circledcirc \mathbbm{G}(\zero)(\delta')\nonumber\\
                                         &= \mathbbm{G}(d)(\delta) \circledcirc \mathbbm{G}(\zero)(\delta')\nonumber
  \end{align}
  Therefore, $\mathbbm{G}$ is a family of homomorphisms.

Proposition \ref{prop:hom-family-reduction} justifies the Definition \ref{def:hom-family} by proving that the latter matches exactly the definition of a homomorphism in Definition \ref{def:hom}. 

Let us come back to the original problem. Recall the correctness specification used in the summarized loop synthesis:
\[ \phi(\circledcirc, \zero) = \forall d \in D, \forall \delta \in \seq^{n-1},  \mathbbm{G}(d)(\delta) = d \circledcirc \mathbbm{G}(\zero)(\delta)\]
We want to prove that $\phi$ holds for the family of function $\mathbbm{G}$ iff every $g$ in $\mathbbm{G}$ is a homomorphism.

If $\mathbbm{G}$ is a family of homomorphisms, then $\phi$ is satisfied: it is the homomorphism definition with $\delta = []$ and $\delta = \delta'$.

If $\phi$ is satisfied, we have $\circledcirc$ and $\zero$ such that $\forall d \in D, \forall \delta \in \seq^{n-1}$:
\[ \mathbbm{G}(\zero)(\mathbbm{G}(\zero)^{-1}(d) \ccat \delta) = \mathbbm{G}(\zero)(\mathbbm{G}(\zero)^{-1}(d))  \circledcirc \mathbbm{G}(\zero)(\delta)\]
which shows that $\mathbbm{G}(\zero)$ is $\circledcirc$-homomorphic, and, by Proposition \ref{prop:hom-family-reduction}, every $g$ in $\mathbbm{G}$ is $\circledcirc$-homomorphic.
\end{proof}


Therefore, the operator $\circledcirc$ can be synthesized as a homomorphism join operator of the functions in family $\mathbbm{G}$, which is the problem that we have already addressed in the previous subsection. The only point of difference is that the complexity budget set for a memoryless join operator $\circledcirc$ and the parallel join operator $\odot$ (previously discussed) are different. The budget for $\odot$ is determined by the depth of the summarized loop, whereas the budget for $\circledcirc$ is determined by the original loop nest's depth, as indicated in Figure \ref{fig:schema} respectively by $k$ and $n$.

There are two modifications to the sketch compilation function of the join synthesis. First, instead of replacing state variables $x \in \SVar$ by holes, we simply put the variable $x_L$ from the left thread, since the right operand of $\circledcirc$ is the result of applying $g$ to only one element of $\seq^{n-1}$ and looking for a join that iterates on its inputs more than once makes no sense. Then, the sketch compiled from the body of $g$ is wrapped in a loop iterating over the size of an element of $D$ instead of a constant. That is, if $D$ contains scalar elements, the sketch is constant-time.

If the loop is memoryless, this synthesis procedure always succeeds, even if it has to fall back on returning the trivial answer (see Proposition \ref{prop:lift-triv}). But, as discussed in section \ref{sec:lift}, the goal is to find the simplest join. This is achieved by two mechanisms. First, the sketch complexity is at most the complexity of the data. For example, a linear join for scalar variables will only be necessary if a linear variable has been added through lifting. Second, the expressions completing the holes are the simplest possible, because the search for a solution increases the depth $\kappa$ until a solution is found.


\section{Automatic Lifting} \label{sec:alg}
As argued in Section \ref{sec:mlift}, a {\em memoryless lift} is a special instance of the homomorphism lift and both problems admit the same algorithmic solution. Here, we present an algorithm for discovering a {\em homomorphism lift}, which would respectively apply to modules (III) and (IV) in Figure \ref{fig:schema}.

\subsection{Rewriting Oracle}\label{sec:lift-alg}
Assume a memoryless function $f: \seq^n \to D$ defined recursively as $f(\sigma) = \foldl(\oplus) \circ \map(g)(\sigma)$ is not a homomorphism. Let $h: \seq_{D} \to D$ be the summarization of $f$, as defined in Proposition \ref{prop:conv}, that is:
\begin{align*}
h([]) &= f([]) \\
\forall a \in D: h(x \ccat [a]) &= h(x) \oplus a
\end{align*}
for $x \in \seq_{D}$ and $a \in D$. Recall that according to Theorem \ref{thm:convl}, a lifting for $h$ can be computed instead of a lifting for $f$.

Let $x,y \in \seq_{D}$, $\vec{s} = h(x)$, and $y = [a_1, \ldots, a_k]$ (with $a_i \in D$). The sequential computation of $h(x \ccat y)$ can be written as:
\begin{align}
h(x \ccat [a_1, \ldots, a_k]) = (\cdots \left(\vec{s} \oplus a_1\right) \oplus \cdots ) \oplus a_k. \label{eqn3}
\end{align}

Lifting $h$ to a homomorphism $\lift{h}{D'}$ corresponds to extending the image of $h$ to $D \times D'$ and lifting the initial state to $(\vec{s_0},\vec{s_0'}) = \lift{h}{D'}([])$. If $\lift{h}{D'}$ is a homomorphism, then there exists a binary operator $\odot$ such that ($(\vec{s},\vec{s'}) = \lift{f}{D'}(\sigma)$):
\begin{align}
  &\lift{h}{D'}(x \ccat y) = (\vec{s},\vec{s'}) \odot \lift{h}{D'}([a_1, \ldots, a_k]) \nonumber \\
                                        &= (\vec{s},\vec{s'})  \odot
                                       \left(
                                       \cdots (((\vec{s_0},\vec{s'_0})
                                       \lift{\oplus}{} a_1)
                                       \lift{\oplus}{} \cdots)
                                       \lift{\oplus}{} a_k
                                       \right).
                                       \label{eqn2}
\end{align}

First, the following proposition, adapted from Theorem 6.2 of \cite{PLDI17},  indicates that $\vec{s'}$ is not relevant to the discovery of the lifting $\lift{h}{D'}$.

\begin{proposition}\label{thm:asym}
If there exists a $\odot$-homomorphic lifting  $\lift{h}{D'}$ of $h$, then there exists a $\oast$-homomorphic lifting $\lift{h}{D'}$ where for all $c,d \in D$ and $c',d' \in D'$, there exists functions $\theta$ and $\theta'$ such that
\[
(c,c') \oast (d, d') =
(\mathit{\theta}(c,d,d'), \mathit{\theta'}(c,d,c',d')). \]
\end{proposition}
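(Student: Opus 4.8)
The plan is to keep the given lifting $\lift{h}{D'}$ unchanged and to exhibit a join operator $\oast$ that agrees with $\odot$ on the image of $\lift{h}{D'}$ but whose first output component is manifestly independent of the left operand's auxiliary value $c'$. The crux is a single observation about rightwardness: since $h$ satisfies $h(x \ccat [a_1, \ldots, a_k]) = (\cdots(h(x) \oplus a_1) \oplus \cdots) \oplus a_k$, the value $h(x \ccat y)$ depends on $x$ only through $h(x)$, and not through any auxiliary data that a lifting attaches to $x$. This is what lets us strip the dependence on $c'$ from the first component while leaving $\theta'$ unconstrained.

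First I would define the candidate first component $\theta$. Given $c, d \in D$ and $d' \in D'$, whenever there exist $x, y \in \seq_D$ with $\pi_D(\lift{h}{D'}(x)) = c$ and $\lift{h}{D'}(y) = (d, d')$, set $\theta(c,d,d') := h(x \ccat y)$; for triples not arising this way $\theta$ may be fixed arbitrarily. The two well-definedness checks are where the argument does its work. Independence from the choice of $x$ (and hence from its auxiliary $c'$) follows directly from rightwardness: any two witnesses $x_1, x_2$ with $h(x_1) = h(x_2) = c$ yield $h(x_1 \ccat y) = h(x_2 \ccat y)$, since the fold of $y$ reads only the starting value $c$. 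Independence from the choice of $y$ follows from the hypothesis that $\lift{h}{D'}$ is $\odot$-homomorphic: if $\lift{h}{D'}(y_1) = \lift{h}{D'}(y_2)$ then $\lift{h}{D'}(x \ccat y_1) = \lift{h}{D'}(x) \odot \lift{h}{D'}(y_1) = \lift{h}{D'}(x) \odot \lift{h}{D'}(y_2) = \lift{h}{D'}(x \ccat y_2)$, so their first components agree.

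It then remains to package this into $\oast$. I would define $\theta'(c,d,c',d') := \pi_{D'}((c,c') \odot (d,d'))$ and set $(c,c') \oast (d,d') = (\theta(c,d,d'), \theta'(c,d,c',d'))$. On the image of $\lift{h}{D'}$ this reproduces $\odot$ exactly: the first component equals $\pi_D(\lift{h}{D'}(x \ccat y)) = h(x \ccat y) = \theta(c,d,d')$ because $\lift{h}{D'}$ is a lifting of $h$, and the second component equals $\theta'$ by definition. Hence $\lift{h}{D'}$ is $\oast$-homomorphic and $\oast$ has the required asymmetric shape. I expect the main obstacle to be stating the well-definedness of $\theta$ cleanly, in particular isolating \emph{why} the left auxiliary $c'$ is dispensable (rightwardness) while keeping this separate from the use of the homomorphism property needed to collapse the dependence on the representative $y$; conflating the two would either reintroduce a $c'$ dependence or leave $\theta$ ill-defined.
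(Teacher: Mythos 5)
Your proof is correct: the two well-definedness checks (rightwardness of $h$ makes $h(x \ccat y)$ depend on the left witness only through $c = h(x)$, and the homomorphism hypothesis collapses the dependence on the right witness to $(d,d')$) are exactly what is needed, and packaging them into $\oast$ with $\theta'$ read off from $\odot$ yields the asymmetric join while preserving the homomorphism property on the image of $\lift{h}{D'}$. The paper itself gives no proof of this proposition --- it is imported from Theorem 6.2 of the cited predecessor work --- but your argument is the standard one underlying that result, so there is nothing to flag.
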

The significance of Proposition \ref{thm:asym} is that the value of the $D$ component of the join result (i.e. $\mathit{\theta}(c,d,d')$) need not depend on the value of the $D'$ component of its left input (i.e. $c'$).
Interpreting this for equation \ref{eqn2}, we conclude that the value of $\pi_D (h(x) \odot  \lift{h}{D'}([a_1, \ldots, a_k]))$, only depends on $h(x)$ and $\lift{h}{D'}([a_1, \ldots, a_k])$. Therefore, one can imagine an algorithm that starts from an arbitrary state $\vec{s}=h(x)$ and tries to {\em guess} what $\lift{h}{D'}([a_1, \ldots, a_k]))$ should look like (as in what $D'$ should be) so that such a join exists. Specifically, there is a function ${\color{red}\theta}$ such that:
\begin{align}
  ( \cdots (
  (\vec{s} &\oplus a_1)
  \oplus \cdots )
  \oplus a_k)
  =  \label{eqn4}\\
   &{\color{red} \theta(} \vec{s},
     \   h([a_1, \ldots, a_k]),
     \   \pi_{D'} \circ \lift{h}{D'}([a_1, \ldots, a_k])) \nonumber
\end{align}
and, the third component of ${\color{red}\theta}$ is the auxiliary computation that needs to be discovered. Note that ${\color{red}\theta}$ could stand for the computation of a loop nest, in contrast to the setting in \cite{PLDI17} where it stood for an expression (i.e. loop-free code), which means the lifting algorithm in \cite{PLDI17} is not applicable and a new lifting algorithm is required. Equation \ref{eqn4} is the key to our algorithmic solution. The left hand side corresponds to the sequential execution and the right hand side corresponds to a parallel one. Since the join does not have access to the input (i.e. $a_1, \dots, a_k$),  the value of $\pi_{D'} \circ \lift{h}{D'}([a_1, \ldots, a_k])$ (i.e. the extra information in the signature of $\lift{h}{D'}$) has to be computed by the worker threads and passed on to the join.

\begin{figure}[t]
\begin{center}
\includegraphics[scale=0.22]{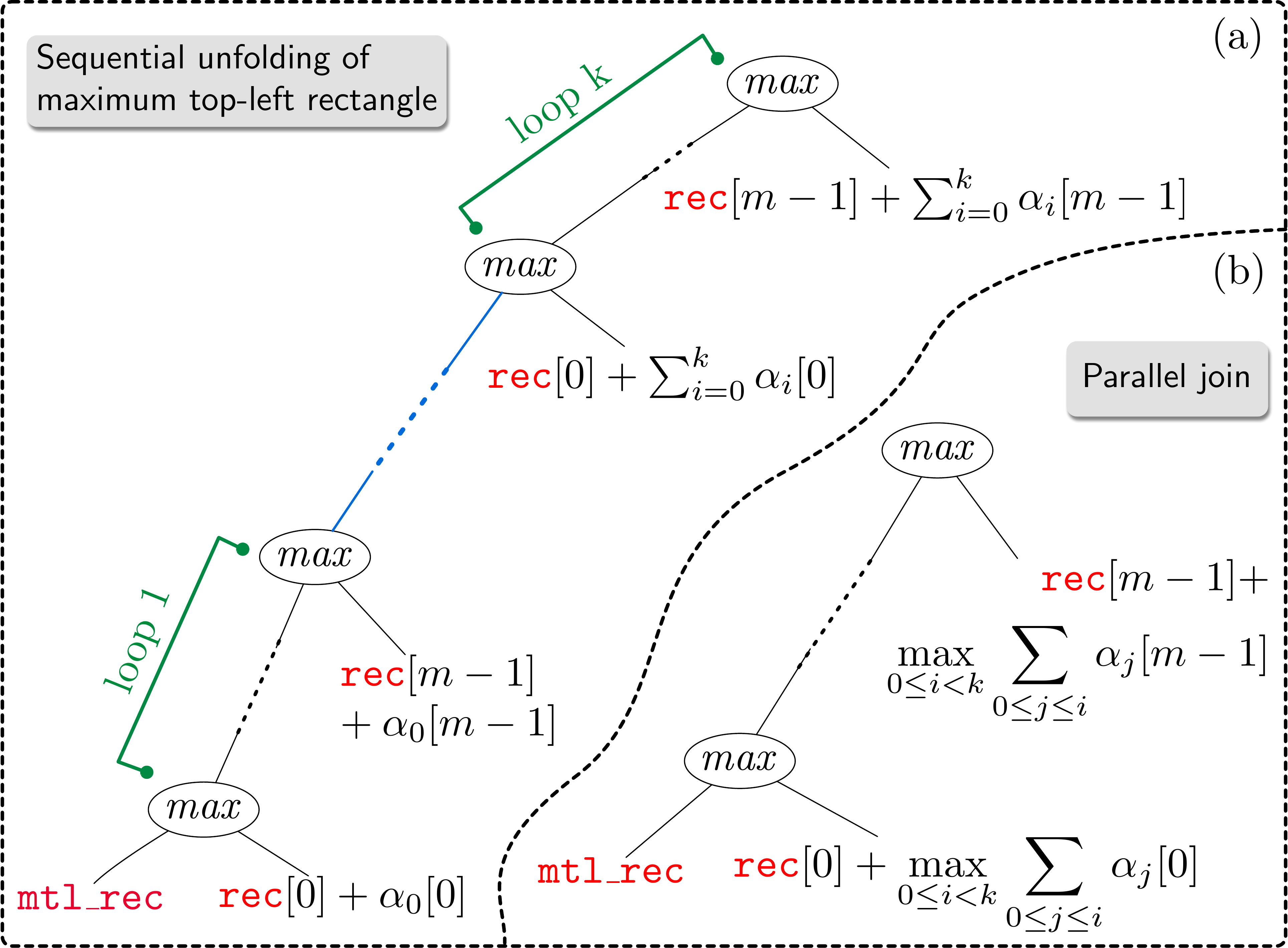} \vspace{-15pt}
\caption{Sequential (a) vs parallel (b) computations.} \vspace{-15pt}
\label{fig:ex8}
\end{center}
\end{figure}

\begin{example}
Figure \ref{fig:ex8} illustrates the two sides of Equation \ref{eqn4} for the maximum top-left rectangle example of Section \ref{sec:mtlr}. Variables in red indicate the values of state variables from $\vec{s}$. Each $\alpha$ technically should include a field for {\tt rec[]} and a field for {\tt mtl\_rec} of the saved states after summarization. But, we abuse notation and take $\alpha[i]$ to mean $\alpha.\mbox{\tt rec}[i]$ for brevity. Note that the sequential computation executes $k$ instances of a loop that iterates $m$ times to update the value of {\tt mtl\_rec} variable; one for each row of the original input.
However, in the parallel join, there is budget only for one (or generally constantly many) of these loops. The two (expression) trees (a) and (b) correspond to equivalent computations. The tree (b) is more compact provided that the values of the terms $\max_{0\le i < k}  \sum_{0 \le j \le i} \alpha_j[l]$ are available (i.e. computed before the join). These are exactly the auxiliary computation that are extrapolated, once the left tree (a) is rewritten to the equivalent right tree (b), and are stored in the {\tt max\_rec[]} variable in Figure \ref{fig:mtlr}(c).
\end{example}

\subsection{The Algorithm}\label{sec:heuristic}

If one starts from an arbitrary unfolding of the sequential computation (i.e. the lefthand side of Equation \ref{eqn4} for an arbitrary $k$), and attempts to {\em rewrite} it to a form that adheres to the righthand side, then one can extrapolate a guess for $\pi_{D'} \circ \lift{h}{D'}$ from $\pi_{D'} \circ \lift{h}{D'}([a_1, \dots, a_k])$.
Let us assume an oracle {\em Normalize} performs the left-to-right hand side transformation, and another oracle {\em Discover-Recursion} discovers the implementation of $D'$ components of $\lift{h}{D'}$. The {\em Normalize} oracle transforms an expression into another, while the goal of {\em Discover-Recursion} is to synthesize a function $f'$ from the expression of its unfolding on an input sequence. We propose heuristic algorithms implementing these two oracles.

The algorithm for {\em Normalize} uses (generic) algebraic equalities, applies them step by step until it reaches the desired form. The key question is, how would the algorithm know that it has reached its target? To characterize this, we need to define a normal form for ${\color{red}\theta}$.

\intextsep=0pt
\columnsep=3pt

\paragraph{Normal form.}\label{sec:normalform}
 \begin{wrapfigure}{h}{1.9cm}
 \includegraphics[scale=0.25]{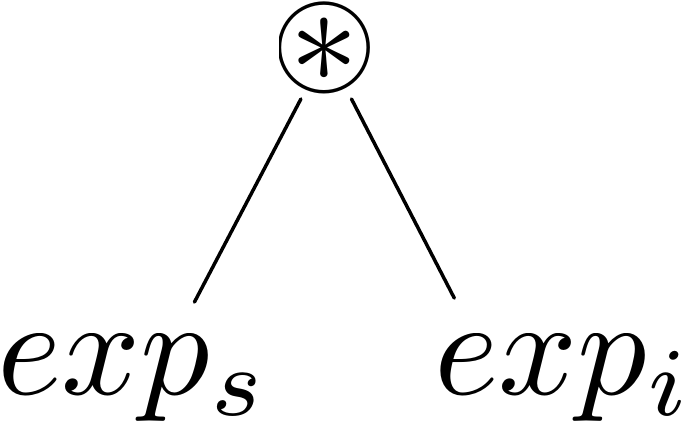}\vspace{-7pt}
 \end{wrapfigure}
An expression $e_c$ over scalar variables is in {\em constant normal form} if it is of the form illustrated on the right where $\mathit{exp}_s$ is an expression containing only state variables, $\mathit{exp}_i$ is an expression of input variables of ${a_1, a_2, \dots a_k}$, and $\circledast$ stands for a constant-size expression skeleton consisting of operators and constants (i.e. no variables).

 \begin{definition}
   An expression $e$ is in $\ovee$-\emph{recursive normal form} if it is defined recursively using an operator $\ovee$ as $e = e_c\ |\ e_c \ovee e$, where $e_c$ is in constant normal form (the base case) and $\ovee$ satisfies the same condition as $\circledast$.
 \end{definition}

 For example, in Figure \ref{fig:ex8}(b), each leaf of the tree is in constant normal form, and therefore, the entire tree is in $\max$-recursive normal form.

The normal form does not have to be unique, and in the context of parallelization, one aims for the {\em simplest} normal form.
Intuitively, the constant normal form corresponds to a constant time join. The expressions over input variables are precisely what need to be {\em additionally} computed and made available to the {\em parallel join}.
However, the parallel join, in general, may not be constant time and the recursive nature of the definition addresses this. Note that the definition can be easily generalized to higher dimensions by replacing the constant normal form by another recursive normal form (over a distinct fresh operator, e.g. $\owedge$).

\paragraph{Normalization.}  Implementing an ideal rewriting oracle is impossible, since the problem of existence of a normal form is undecidable in the general case (equivalent to {\em the word problem}). There has been a lot of research in the area of rewrite systems notably for associative and commutative operators  \cite{Narendran1991, Marche1998} that can inspire several heuristics for normalization. We employ a cost-based search for the normal form as a heuristic to workaround the undecidability. Our algorithm uses a set of standard algebraic equalities as rewrite rules $\rrules$ and searches for the {\em shortest} normal form. The rewrite rules in $\rrules$ are applied to an expression if they reduce its cost, and the rewriting process terminates when no rule can be applied.

Our algorithm operates in two phases. In the first phase, its goal is to find a constant normal form. If it succeeds, the task is done (e.g. this is the case for lifting the example from the introduction). If it fails, the second phase tries to rewrite the result of the first phase into a recursive normal form. Both phases perform a cost-based search and are distinguished by their cost functions.
The cost function for the first phase is defined by the number of occurrences and the depth of the state variables (of $h$) in the expression tree.
The cost function is identical to the one from \cite{PLDI17}, which is no coincidence since \cite{PLDI17} focuses solely on constant normal forms.

In the second phase, the algorithm makes a guess about $\ovee$, inspired by expression $e$ which is the result of the first phase, and attempts to rewrite $e$ to a $\ovee$-recursive normal form. Since phase one forces $\vec{s}$ (or $h(x)$) to the lowest possible depth, operators that appear near the root of expression $e$ are good candidates for $\ovee$.
The algorithm chooses the simplest $\ovee$ such that $e = e_c \ovee e'$, where $e_c$ is in constant normal form and $e'$ is an arbitrary expression.
For a fixed $\ovee$, a cost function is defined. It combines the sum of the sizes of expression not in constant normal form and the count of expressions in constant normal form. Formally:
\begin{definition}
  The cost function $\mathrm{Cost}_{\ovee} : \Exp \to \int \times \int$ relative to operator $\ovee$ returning a pair $(size, c_{\ovee})$ is defined by:
  {\small\[
  \mathrm{Cost}_{\ovee}(e) =
  \left\{\begin{matrix}
      \mathrm{Cost}_{\ovee}(e_1) + \mathrm{Cost}_{\ovee}(e_2) \quad \text{ if }e=e_1 \ovee e_2, \hfill\\
      (0 , 1)  \quad \hspace{5em} \text{ if } e \text{ is in constant normal form,}\hfill\\
      (expsize(e), 0)  \quad \hspace{4em}\text{otherwise.}\hfill
    \end{matrix}\right.
  \]}
  where $expsize$ returns the size of the expression tree.
\end{definition}
Intuitively, the expression is in $\ovee$-recursive normal form when it has cost $(0,\_)$. Moreover, we are interested in the normal form with the smallest count of subexpressions in constant normal form. A rule is applied if it decreases $size$ or, if it increases $c_{\ovee}$ when $size$ cannot be decreased and $size > 0$.

In the example of Figure \ref{fig:ex8}, the expression in (a) is initially in $\max$-recursive normal form with cost $(0, km + 1)$. When the expression is rewritten in the first phase (using the cost function from \cite{PLDI17}) with the aim of  reducing the occurrences and depths of state variables, the cost goes down, but the normal form is lost.  Since a constant normal form is not reached at the end of the first phase, the second phase is applied, using the cost function above, which yields the expression in Figure \ref{fig:ex8}(b). This expression is in $\max$-recursive normal form with cost $(0, m + 1)$.

If the process fails to find a normal form for $\ovee$, then another operator $\owedge$ is guessed and the process is repeated.
Since the expressions are finite-sized, only a finite number of guesses are necessary, and the process is guaranteed to terminate.
This simple heuristic is a small part of the contributions of this paper, though it is promisingly effective as demonstrated in Section \ref{sec:experiments}.

\paragraph{Recursion discovery} The {\em normalize} heuristic distills the $\pi_D \circ \lift{h}{D'}([a_1, \ldots, a_k])$ part from its input expression, which we know is required for a join operation to exist. It remains to discover the recursive (i.e. looped) computation that can be added to the original program that would produce this required information. More precisely, the goal of recursion discovery is to synthesize a function that computes the expression $u_k \equiv \pi_{D'} \circ \lift{h}{D'} ([a_1, \ldots, a_k])$ for any $k > 0$.
Recursion discovery, based on input/output examples, has been previously studied \cite{kitzelmann2006inductive}. Our specific instance of the problem is simpler and amenable to a simple heuristic solution.

Since $u_k$ is recursively (rightward) computable, there is an operator $\boxplus$ such that
$u_k =  u_{k-1} \boxplus a_k$ for $k > 0$. If $u_{k-1}$ and $u_k$ are simple expressions, we can extrapolate a hint on what $\boxplus$ is by identifying $u_{k-1}$ as a subexpression of $u_k$ (that is precisely the subtree isomorphism \cite{shamir1999faster}). In general, however, $u_k$, $u_{k-1}$ and $a_k$ can be collections of complex expressions; i.e. lists of expressions as is the case for the example of Section \ref{sec:mtlr}. The solution remains the same, except, we identify families of subtree isomorphisms. In our implementation, we simplify the problem further by looking for specific patterns of subtree isomorphisms corresponding to recursion schemes such as $\mathit{zip}$, $\mathit{scans}$ or $\mathit{folds}$. For example, a $zip$ operator translates to having each expression in $u_{k-1}$  isomorphic to a subtree of one expression in $u_k$.

For example, the list of expressions $\max_{0\le i < k}  \sum_{0 \le j \le i} \alpha_j[l]$ (for $0\le l <m$) from Figure \ref{fig:ex8}(b) can be computed in an auxiliary array {\tt max\_rec[]} using a $\mathit{zip}$ operation and the state variable  {\tt rec[]} as follows:
\[ \mathtt{max\_rec} = \mathit{zip} \ (max) \ \mathtt{rec} \ \mathtt{max\_rec}.  \]
The next section illustrates how this auxiliary can be found.


\section{An extended example}\label{sec:extended-example}
In this section, we go through steps of the automated parallelization process for the maximum top-left subarray sum example of Section \ref{sec:mtlr}. The goal is to give a good intuition of how the heuristic algorithms described in this paper work on an example but not to describe them precisely.

The code is given in Figure \ref{fig:mtlr}(a). We call the corresponding function $mtls$, which is of type $\int [][] \rightarrow \int [] \times \int \times \int$ (before lifting) and the state is a triple of variables $\SVar = \{ \mathtt{rec[]}, \mathtt{row\_sum}, \mathtt{mtl\_rec}\}$.

\subsection{Summarizing the loop nest}
The first step in the parallelization process is to synthesize the summarized loop ((II) in Figure \ref{fig:schema}). As stated in Section \ref{sec:join}, the problem is similar to synthesizing a parallel join with a few modifications. We explain here how the sketch is generated and the solution found by the syntax-guided synthesis solver.

From the code in Figure \ref{fig:mtlr}(a) we synthesize the sketch that corresponds to finding a parallel join for the inner loop, with the complexity budget that is enforced by the budget for the summarized function. Remark that for this reason, the sketch resulting from the application of the compilation function of Section \ref{sec:join} can be repeated more than a constant number of times. Since there is a linear variable (\texttt{rec[]}), we will need a loop on the dimension of this linear variable. The solution will require a loop of the size of \texttt{rec[]}.

\begin{figure}[b]
\begin{subfigure}[b]{0.4\textwidth}
{\footnotesize
  \begin{lstlisting}[mathescape=true]
int mtl_rec = $\lrhole$;
int rec[] = $\lrhole$;
int row_sum = $\lrhole$;
for(j=0; j < m; j++){
  row_sum =  $\lrrechole$ + $\rhole$;
  rec[j] = $\lrrechole$ + $\lrrechole$;
  mtl_rec = max($\lrrechole$, $\lrrechole$);
}
\end{lstlisting}
}~
\end{subfigure}
\begin{subfigure}[b]{0.4\textwidth}
{\footnotesize
  \begin{lstlisting}
int mtl_rec = mtl_rec_l;
int rec[] = rec_l[];
int row_sum = 0;
for(j=0; j < m; j++){
  row_sum = row_sum + rec_r[j];
  rec[j] = rec[j] + rec_r[j];
  mtl_rec = max(mtl_rec, rec[j]);
}
\end{lstlisting}
}
\end{subfigure}
\caption{Sketch and solution for the memoryless join of $mtls$ (Sec. \ref{sec:mtlr})}\label{fig:memless-mtlr}
\end{figure}

Figure \ref{fig:memless-mtlr} presents the sketch for the memoryless join $\circledcirc$ and a solution (each hole has been completed). Given the generic correctness specification $\phi(\circledcirc, \zero_g) \equiv \forall d \in D, a \in int[]$,$\mathbbm{G}(d)(a) = a \circledcirc \mathbbm{G}(\zero_g)(a)$, the sketch and the generic grammar of Figure \ref{fig:grammar}, the solver finds the solution presented. Variables ending by \texttt{\_l} are variables from the left input of the join and the variables ending by \texttt{\_r} are the right input of the join. The identity state $\zero_g$ is $\{ [0], 0, 0 \}$.

Then, we obtain the summarized loop of Figure \ref{fig:mtlr}(b) by removing \texttt{row\_sum} which is not necessary for the computation of \texttt{mtl\_rec} anymore, and inserting the loop implementing $d \circledcirc \mathtt{inner\_loop[i]}$ where \texttt{inner\_loop[]} is the conceptual array representing the results of mapping the inner loop instances to the input. Remark that in the general case, the elements of \texttt{inner\_loop[]} are of type $D$, but in this case, the only element of $D$ that is required is the variable \texttt{rec[]}.

In the case where the variables are linear and every cell is modified in the inner loop, we cannot reduce the complexity of the loop nest. However, the summarized loop abstracts all the unnecessary information that the inner loop has computed. This will become more apparent in the next section, where the unfoldings of the function on symbolic inputs need to be inspected. In this example, the elements of \texttt{rec} that were column sums of prefix sums are summarized as simple column sums. In the symbolic execution, we reduce the amount of information in the cells from quadratic to linear.

\subsection{Lifting}\label{sec:developed-mtlr-lift}

The first attempt at synthesizing a join will fail for this example since the summarized loop cannot be parallelized. We need to lift the summarized function to a homomorphism ((III) in Figure \ref{fig:schema}), if possible.
The codomain of the summarized function $h_{mtls}$ is $D = \int[] \times \int$ before lifting. We drop \texttt{row\_sum} from the state, since it does not appear in the summarized loop of Figure \ref{fig:mtlr}(b).

\paragraph{Normalization}
\begin{figure}[h]
  \begin{subfigure}[t]{0.49\textwidth}
        \hspace{-5pt}\includegraphics[scale=0.55]{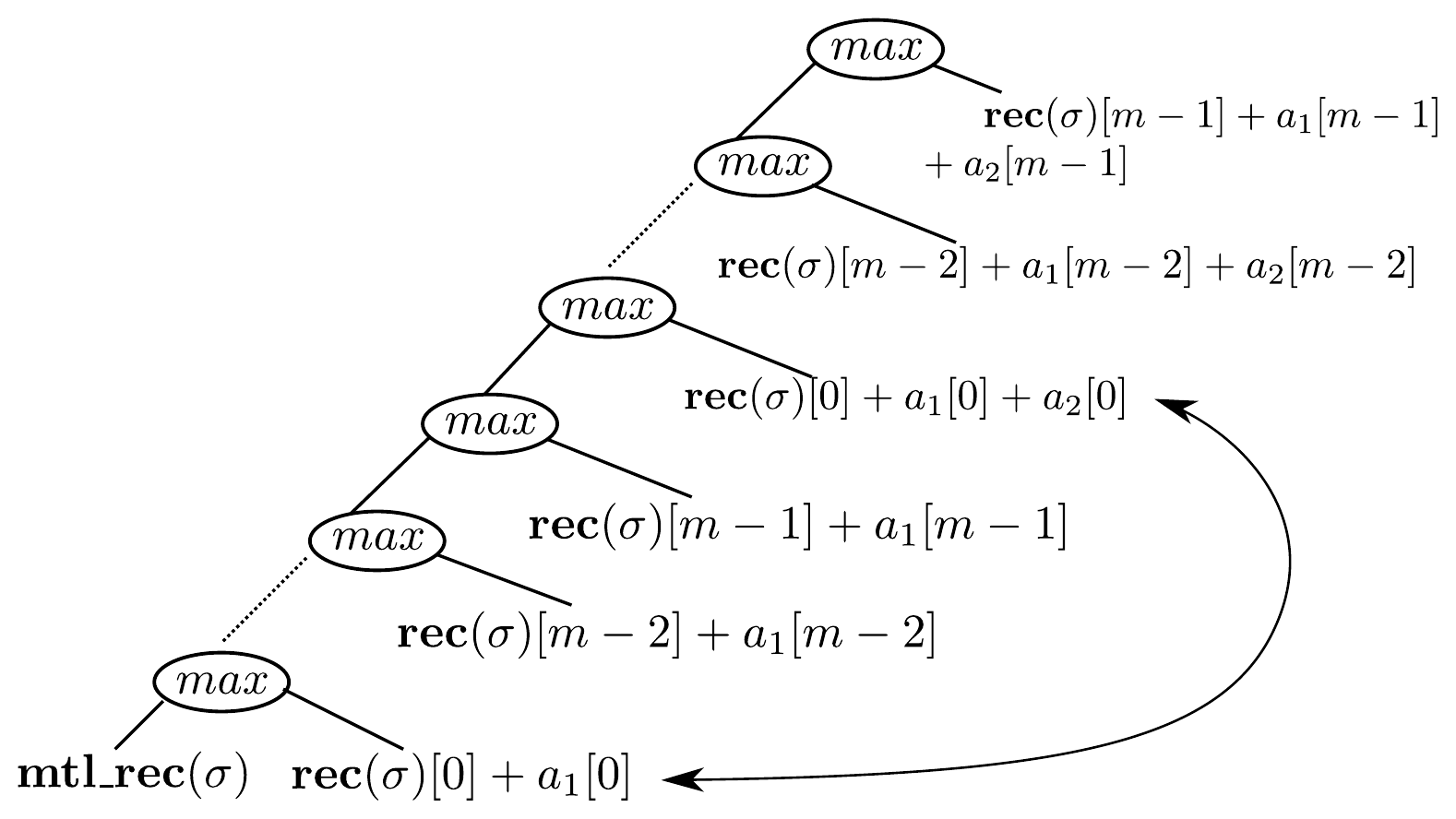}
    \caption{Expression tree after unfolding}\label{fig:unfold2}
  \end{subfigure}
  \begin{subfigure}[t]{0.49\textwidth}
    \hspace{-5pt}\includegraphics[scale=0.19]{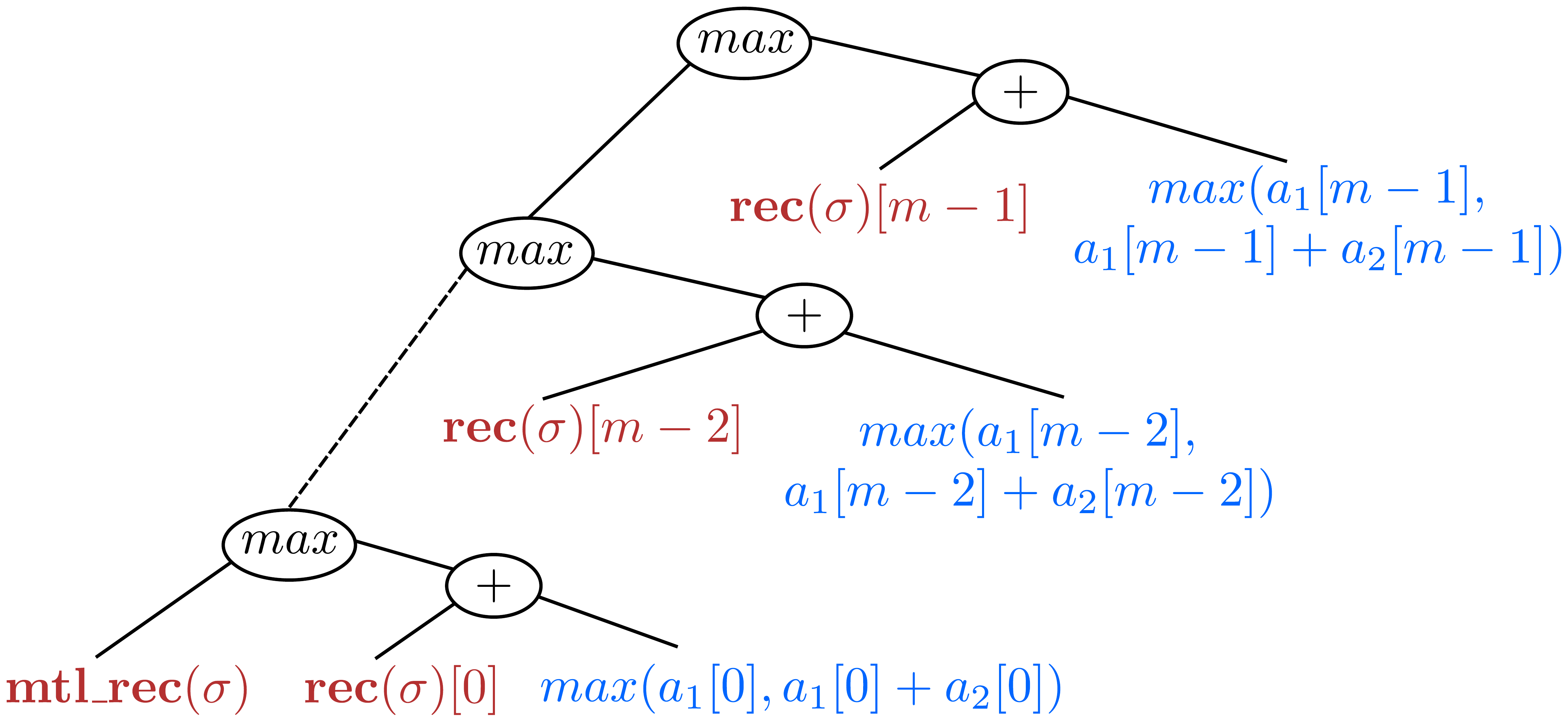}
    \caption{Normalized expression tree}\label{fig:unfold2norm}
  \end{subfigure}
  \caption{Expression trees of $mtl\_rec(\sigma \ccat [a_1, a_2])$}
\end{figure}

First, we compute the unfolding of $h_{mtls}$ over an input $\sigma \ccat [a_1, \ldots, a_k]$. Each $a_i$ for $0 \leq i \leq $ is itself an integer array of size $m$.
We denote the different projections of $h_{mtls}(\sigma)$ over the different parts of the codomain by $rec(\sigma)$ and $mtl\_rec(\sigma)$.

Let us start with $k=1$:
\\
\begin{tabular}{rcl}            %
$mtl\_rec(\sigma \ccat [a_1])$&$=$&$max(a_1[m-1] + rec(\sigma)[m-1],$\\
$                                  $&$$&$ \ max(a_1[m-2] + rec(\sigma)[m-2],$\\
$                                  $&$$&$ \ \ldots,$\\
$                                  $&$$&$ \ max(a_1[0] + rec(\sigma)[0],$\\
$                                  $&$$&$ mtl\_rec(\sigma)) \ldots)$))
\end{tabular}
\\
We can visualize the expression as a tree:

\vspace{0.3cm}
\hspace{-7pt}\includegraphics[scale=0.55]{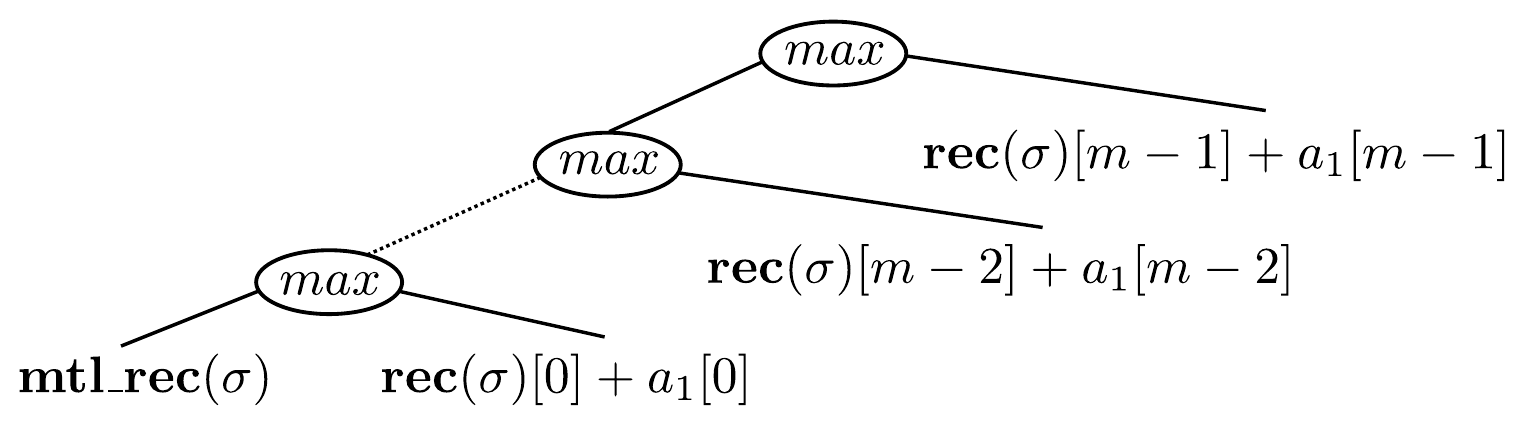}

Remark that this expression is in $max$-recursive normal form, and computable with a loop of size $m$ and it is already in the shortest ($max$)-recursive normal form. This is not surprising, since we only unfolded the function on one line of input.

The second unfolding will be computed with a loop of size $2m$: we need a loop of size $m$ to compute $mtl\_rec(\sigma \ccat [a_1])$ and $m$ additional iterations to compute the rest of the expression.
\\
\begin{tabular}{rcl}
  $ mtl\_rec(\sigma \ccat [a_1, a_2])$&$=$&\hfill\\
\end{tabular}
\\
\begin{tabular}{rcl}
 $\quad                                  $& &$max(a_1[m-1] + a_2[m-1] + rec(\sigma)[m-1],$\\
 $\quad                                  $&$$&$\ max(a_1[m-2] + a_2[m-1]+ rec(\sigma)[m-2],$\\
 $\quad                                  $&$$&$\ \ \ldots,$\\
 $\quad                                  $&$$&$\ \  max(a_1[0] + a_2[0] + rec(\sigma)[0],$\\
 $\quad                                  $&$$&$mtl\_rec(\sigma \ccat [a_1])) \ldots)))$
\end{tabular}
\\

Remark that the expression is already in $max$-normal form, since each subexpression (of the form $a_1[j] + a_2[j] + rec(\sigma)[j])$ or $a_1[j] + rec(\sigma)[j])$) is in constant normal form. However, this normal form does not represent a computationally efficient parallel join, since for an arbitrary $k$ we would need $km$ iterations, which is not acceptable under Proposition \ref{prop:tcomp} (the join would be $O(m^n)$ for $n$ lines of input).
However, we can normalize the expression to a shorter normal form. We write the expression tree in Figure \ref{fig:unfold2}.

Some terms in this expression tree can be factored together using the associativity of $max$ to reorganize the tree, and the factorization rule $max(a + b, a + c) \to a + max(b, c)$.  For example, the two expressions linked by the double arrow can be factored together:
$max(\mathbf{rec}(\sigma)[0] + a_1[0], \mathbf{rec}(\sigma)[0] + a_1[0] + a_2[0]) = \mathbf{rec}[0] + max(a_1[0], a_1[0] + a_2[0])$.
This rule decreases the number of occurrences of state variables (the $\mathbf{rec}[j]$) and is applied during the first phase of the normalization process. We can apply the same rule for all $m$ pairs of subexpressions and this yields the expression in Figure \ref{fig:unfold2norm}.

No rewrite rule can be applied to this expression in the first phase, but the above is not in constant normal form. When starting the second phase, the $max$ operator is picked and the matching expression cost is $(0, m + 1)$. The $max$-recursive normal form uses $m+1$ $max$ operators, and is computable using a loop of size $m$ and one additional statement. Each of the subexpressions under the tree spine of $max$ operators is in constant normal form: it is either only a state variable ($mtl\_rec(\sigma)$) or an expression that has the root operator $+$ and on one side, the state variable $\textcolor{statecolor}{mtl\_rec(\sigma)[j]}$ and on the other side the input-only dependent expression $\textcolor{inputcolor}{max(a_1[j], a_1[j] + a_2[j])}$, for $0 \leq j < m$. The second phase concludes immediately, and returns the expressions in $max$-recursive normal form.

Remark that this generalizes to any unfolding $k$ (as seen in Section \ref{sec:alg}). We will have a $max$-recursive normal form with a spine of length $m+1$, and the constant normal forms $e_c[j]$ that have input and state variables will be:
\[ e_c[j] =
  \textcolor{statecolor}{mtl\_rec(\sigma)[j]} +
  \textcolor{inputcolor}{max(a_1[j], a_1[j] + a_2[j], \ldots , \sum_{l=1}^k a_l[j])} \]

\paragraph{Recursion discovery}
In the next step, we want to extract the information that needs to be computed in the threads for the join from the expression in normal form. We denote the expressions that consitute this information by $u_k = \pi_{D'} \circ \lift{h_{mtls}}{D'}([a_1, \ldots, a_k])$. In the expression tree, it corresponds to the blue input dependent expression in each of the constant normal forms. For any unfolding $k > 0$ we have:
\[ u_k = \left\{max(a_1[j], a_1[j] + a_2[j], \ldots , \sum_{i=1}^k a_i[j]) \mid 0 \leq j < m \right\}\]

We have collected  $u_k$ and now need to discover a recursive function that can compute them. The codomain of the function needs to be extended. Since in this case $u_k$ is an integer collection of size $m$, we extend the codomain by an array of integers $D' = int[]$. We have to discover an operator $\boxplus: D' \times int[] \to D'$ such that $u_{k+1} = u_k \boxplus a_{k+1}$. To do so, we look for subtree isomorphisms (or, subexpression relations) between the elements of $u_k$ and the elements of $u_{k+1}$.

For any $j \in [0 .. m]$ the following equality defines a family of subtree isomorphisms for each pair $u_{k+1}[j], u_{k}[j]$:

\[ u_{k+1}[j] = max(u_k[j], \sum_{i=1}^{k+1}a_k[j])\]

This is not sufficient yet, since the sum term needs to be computed in non-constant time. But this sum is already part of the computation we are lifting, we have:
\[rec([a_1, \ldots, a_k, a_{k+1}])[j] = \sum_{i=1}^{k+1}a_k[j]\]
The operator of the lifting $\boxplus$ can in use parts of the existing function. In this case, the $max\_rec$ auxiliary is created, defined by:
\[ max\_rec = \mathbf{zip} (max) \ rec \ max\_rec \]
Which concludes our description of the lifting.
A description of the last step in the parallelization process, the join synthesis ((I) in Figure \ref{fig:schema}), has already been included as part of the example developed in Section \ref{sec:join}.


\section{Experimental Evaluation} \label{sec:experiments}
\begin{table*}[t]
\begin{center}
\bgroup
\setlength\tabcolsep{1.5pt}
\small
\begin{tabular}{l|c|c|c|c|c|c|c|c|c|c|c|c|c|c|c|c|c|c|c|c|c|c|c|c|c|c|c|c|c}
 \cline{2-28}
  & \multicolumn{18}{|c|}{{\tt \scriptsize 2D} loop/input}
  & \multicolumn{4}{|c|}{{\tt \scriptsize 3D} loop/input}
  & \multicolumn{5}{|c|}{{\tt \scriptsize 2D} loop, {\tt \scriptsize 1D} inputs}\\
  \cline{2-28}
  & \rotatebox{90}{\tt \scriptsize sum}
  & \rotatebox{90}{\tt \scriptsize sorted}
  & \rotatebox{90}{\tt \scriptsize vertical gradient}
  & \rotatebox{90}{\tt \scriptsize diagonal gradient}
  & \rotatebox{90}{\tt \scriptsize min-max}
  & \rotatebox{90}{\tt \scriptsize min-max col.}
  & \rotatebox{90}{\tt \scriptsize saddle point}
  & \rotatebox{90}{\tt \scriptsize max top strip}
  & \rotatebox{90}{\tt \scriptsize max bottom strip}
  & \rotatebox{90}{\tt \scriptsize max segment strip}
  & \rotatebox{90}{\tt \scriptsize max left strip}
  & \rotatebox{90}{ \tt \bfseries\scriptsize mtls (Sec. \ref{sec:mtlr})}
  & \rotatebox{90}{\tt \bfseries \scriptsize max bot-left rect.}
  & \rotatebox{90}{\tt \bfseries \scriptsize max top-right rect.}
  & \rotatebox{90}{\tt \scriptsize bp  (Sec. \ref{sec:wbp})}
  & \rotatebox{90}{\tt \scriptsize increasing ranges}
  & \rotatebox{90}{\tt \scriptsize pyramid ranges}
  & \rotatebox{90}{\tt \scriptsize overlapping ranges}
  & \rotatebox{90}{\tt \scriptsize max top box}
  & \rotatebox{90}{\tt \scriptsize mbbs (Sec. \ref{sec:introduction})}
  & \rotatebox{90}{\tt \scriptsize max segment box }
  & \rotatebox{90}{\tt \scriptsize max left box }
  & \rotatebox{90}{\tt \scriptsize mode}
  & \rotatebox{90}{\tt \scriptsize max-dist}
  & \rotatebox{90}{\tt \scriptsize balanced substr.}
  & \rotatebox{90}{\tt \scriptsize inter. ranges}
  & \rotatebox{90}{\tt \scriptsize LCS}
  \\
  \hline
  \multicolumn{1}{|l|}{Summarization time}
  & 1.2 
  & 1.3 
  & 1.1 
  & 1.2 
  & 1.2 
  & 1.5 
  & 4.6 
  & 1.2 
  & 1.2 
  & 1.2 
  & 1.6 
  & 1.4 
  & 30.2 
  & 1.4 
  & 5.3 
  & 6.2
  & 2.5 
  & 1.3 
  & 1.3 
  & 1.3 
  & 1.3 
  & 2.1  
  & 2.4 
  & 1.4 
  & 54.9 
  & 1.3  
  & 2.3 
  \\ \hline
 \multicolumn{1}{|l|}{\# Aux required}
  & - 
  & 1 
  & - 
  & - 
  & - 
  & - 
  & - 
  & - 
  & 1 
  & 2 
  & - 
  & 1 
  & 1 
  & 1 
  & 1*
  & 1*
  & 2 
  & 2 
  & - 
  & 1 
  & 2 
  & - 
  & - 
  & - 
  & - 
  & - 
  & \ding{55} 
  \\
  \hline
  \multicolumn{1}{|l|}{Join synthesis time}
  & 2.3 
  & 1.1 
  & 1.1 
  & 1.1 
  & 2.5 
  & 2.3 
  & 5.4 
  & 6.1 
  & 11.8 
  & 64.1 
  & 11.2 
  & 116.3 
  & 216.2 
  & 313.5 
  & 8.1 \textdagger
  & 10.5 
  & 2.6 
  & 3.3 
  & 2.5 
  & 7.1 
  & 52.3 
  & 11.2  
  & 22.7 
  & 4.0 
  & 11.5 
  & 1.5 
  & \ding{55} 
  \\ \hline
\end{tabular}
\egroup
\end{center}
\caption{\small
  Experimental results for performance of \tool. Times are in seconds.  ``--'' indicates that lifting was not required. Hardware: desktop with 8G RAM and Intel dual core m3-6Y30. The starred numbers of the middle row correspond to auxiliaries for a {\em memoryless lift},  and otherwise for a {\em homomorphism lift}. \textdagger: time reported is spent by the solver to answer unsat, since the summarized loop is not parallelizable.
\label{tab:benchmarks}} \vspace{-15pt}
\end{table*}


\paragraph{Implementation}\label{sec:implem}
Our methodology is implemented as an addition to our existing too \tool from \cite{PLDI17}.
We employ a new incremental approach to synthesizing the join to mitigate the large search problem. The state of the loop is partitioned into substates, and the join is synthesized for each substate separately, while preserving correctness. All synthesis times improve as a result of this strategy, but the complex ones improve more substantially; for example, for {\em maximum top-left subarray}, the synthesis time is reduced from over 1000 to 116.3 seconds.

The main idea behind this optimization is that if a function $h: \seq^n \rightarrow D$ is homomorphic, then any projection of $h$ over a domain $D' \subset D$ will be homomorphic, if the projection is well defined. The projection on $D'$ is well defined if there is no variable in $D'$ that depends on a variable in  $D \setminus D'$. We first define a sequence of sets $D_1 \subset D_2 \subset \dots \subset D$ such that the projection of $h$ over the $D_i$ are well defined.\\
Then, we start by solving the synthesis problem for $h$ projected on $D_1$.
We continue to $\pi_{D_2} \circ h$. But at this point, we can use the solution of the previous problem to solve part of the current problem. Intuitively, only the part concerning the variables in $D_2 \setminus D_1$ needs to be dealt with. We go on incrementally with the next projections, and finally with $h$.

\subsection{Evaluation}
\label{sec:eval}
To the best of our knowledge, there is no tool or technique that can parallelize the class of programs considered in this paper, and therefore, our evaluation of \tool \ is not comparative. The theoretical results presented in this paper justify many of the choices made in our proposed methodology. There are two key parts of the algorithmic modules, however, that require empirical evaluation: (1) The effectiveness of the heuristics proposed in Section \ref{sec:alg} for lifting; the (necessarily) incomplete algorithm has no theoretical guarantees for success. And, (2) the efficiency of SyGuS-based solution for parallel join generation and loop summarization; since it is impossible to predict synthesis times for the search-based routine.

\paragraph{\bf Benchmarks.}
We use a diverse set of benchmarks, where some implement highly non-trivial single-pass algorithms. Since a standard set of benchmarks for this problem space does not exist, we included any example we could find in the related work and parallel programming/algorithms text books, but only those that admit a divide-and-conquer solution according to Definition \ref{def:par}. Table \ref{tab:benchmarks} includes a complete list.  It is important to note that the difficulty of an instance is not directly co-related with code size or the classic notions of dependence such as sparsity of dependencies. Rather, it depends on how complex the required added computation is.

The code of the benchmarks is available at \url{https://github.com/victornicolet/parsynt-pldi19-benchmarks}.

\paragraph{\bf Performance of \btool.}

Table \ref{tab:benchmarks} presents the times spent in the two steps summarized loop and parallel join generation.
The synthesis time have been measured on a laptop with an dual core Intel Core m3-6Y30 CPU and 6 Gb RAM running 64-bit Ubuntu 16.04. Our goal is to provide an implementation that can synthesize the parallel implementation of a sequential algorithm in reasonable time, and much faster than a programmer. It is not aimed at being integrated in a compiler, but rather in a programmer-aid tool.

We do not report the individual times for liftings, since they are negligible compared to the synthesis times; largest lifting time was 12ms (3 orders of magnitude less than smallest synthesis times).

Table \ref{tab:benchmarks} reports how many auxiliary accumulators were discovered during the lifting. To get a sense of how significant the syntactic restrictions based on the weak inverse are, consider as an example that synthesizing a join for benchmark \texttt{max top strip} without them would take 12.1 seconds. Moreover, using a straightforward syntax-guided synthesis scheme (instead of deductive style algorithm of Section \ref{sec:alg}), it took over 40 minutes to find the auxiliary for {\tt mbbs} which is arguably the simplest instance that requires lifting.

\paragraph{\bf Quality of the Synthesized Code.}

\begin{figure}[t]
  \begin{center}
    \includegraphics[scale=0.25]{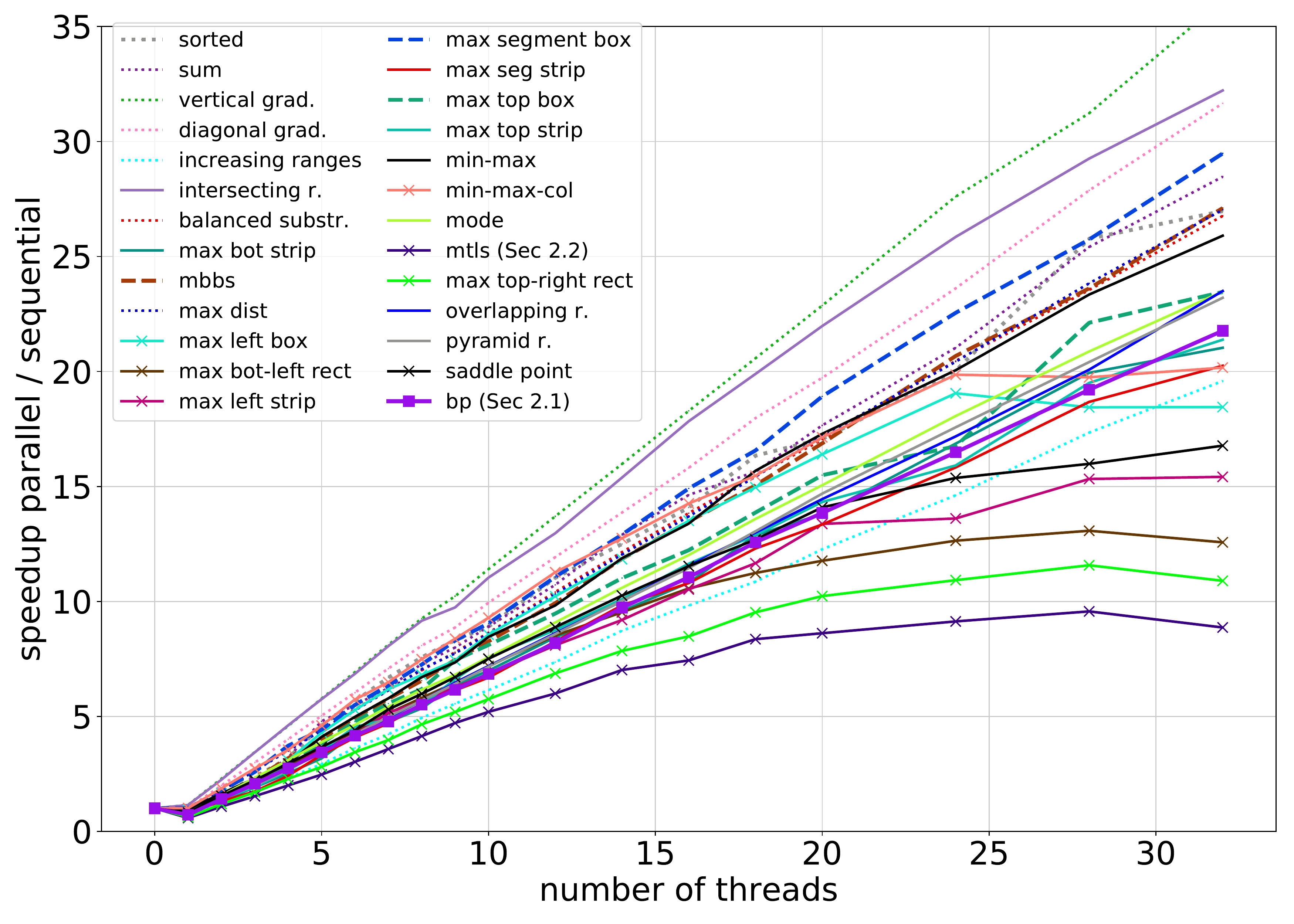}\vspace{-10pt}
    \caption{\small Speedups relative to the sequential implementation.
      {\em Hardware:} 8 eight-core Intel X6550 processors (64 cores total) and 256G of RAM running 64-bit Ubuntu
    }
    \label{fig:plot}\vspace{-10pt}
  \end{center}
\end{figure}

A divide-and-conquer parallelization, in the style of this paper, is data parallel program with no inter-thread dependencies, and therefore, reasonable parallelization speedups are expected. We implemented our produced parallel solutions using Intel's Thread Building Blocks (TBB) \cite{TBB} as well as OpenMP, to measure the speedups over the sequential variations. TBB turned out to produce better performing parallel programs. Speedups for four instances (one from each category) at 16 threads are compared below.
\begin{center}
\small
\begin{tabular}{|l|c|c|c|c|}\hline
 & {\tt max bot strip} & mbbs & mode & bp \\ \hline
OpenMP  & 11.0  & 8.6 & 11.0 & 7.8 \\ \hline
  TBB  & 12.7 & 10.7 & 11.5 & 8.9 \\ \hline
\end{tabular}
\end{center}
Figure \ref{fig:plot} illustrates the TBB speedups for up to 32 cores. In the  experiments, the size of the input arrays is about 2bn elements and the grain size is set at 50k. We used \texttt{gcc 7.3.0} to compile the parallelized benchmarks. On average, the speedups are close to linear on the number of cores up to around 32 cores.  Speedup is measured by dividing the running time of the parallel implementation with the running time of the sequential implementation {\em without auxiliaries}. Therefore, the speedup reported for 0 is always 1, and the speedup for 1 core is the parallel implementation {\em with auxiliaries} running on a single thread (which explains the inflexion of the curve at the beginning of the graph).
The examples with  smaller speedups over larger number of cores are those that have a more complex parallel join operator; in particular, those with looped joins. It is also known \cite{ContrerasM08} that the overhead of TBB increases with more cores and becomes very significant at 32 cores.

\paragraph{\bf Correctness.} {\sc Rosette} performs bounded verification for the solutions it generates. To have correctness over all inputs, we use the same scheme as \cite{PLDI17} to produce correctness proofs verified through Dafny \cite{dafny}. The majority of the programs were verified using the same proof generation scheme as \cite{PLDI17}. However, the bold benchmarks in the table required additional {\em simple and generic} lemmas that lift standard algebraic identities over integers to those over sequences of integers; for example, $\vec{x} + \mathit{max}(\vec{y}, \vec{z}) = \mathit{max}(\vec{x} + \vec{y}, \vec{x} + \vec{z})$.


\section{Discussion and Future Work} \label{sec:discussion}

\paragraph{\bf Input Programs.} Theoretically, our approach admits any program that can be semantically translated to a nested system of recurrence equations. In this model, each loop is a system of recurrence equations nested in the corresponding system for its surrounding loop. This is in contrast to other widely used models like System of Affine Recurrence Equations (SARE) \cite{srikant2002compiler}, designed to track dependencies, that represent the loop nest as a flat system of equations associated with an iteration domain. The only strict limitation for our input programs is that the input should not be modified by the program. More details on our input model can be found in \cite{extended}.

\paragraph{\bf Limitations.} Not all loop nests admit an efficient divide-and-conquer parallel solution with a syntactic divide operator that is the inverse of concatenation; for example, {\em quick sort} is a divide-and-conquer solution with a non-trivial {\em divide} operation whereas {\em merge sort} is a divide-and-conquer solution with a divide that is inverse of concatenation. Synthesis of non-trivial divide operators is an interesting topic of research for future work.

The dynamic programming instances considered in Bellmania \cite{Itzhaky0SYLLC16} also do not admit an efficient parallelization according to Definition \ref{def:par}. Bellmania uses {\em tiling} (a different {\em divide}) and a necessary scheduling of dependent tiles to correctly parallelize dynamic programming code. The produced code, however, is not fully data parallel in the manner resulting from manufacturing homomorphisms. A homomorphism can be executed in parallel without the need for scheduling.

{\tt LCS} (longest common substring), a dynamic programming algorithm, can be rewritten to admit an efficient parallelization (according to Definition \ref{def:par}). We used \tool on the modified code, which failed to parallelize it.  This is due to the fact that the auxiliary accumulators required for its lifting are {\em conditional}, and therefore fall beyond the reach of the heuristics of recursion discovery currently implemented in \tool. To sum up, within the class of programs that do admit efficient parallelizations (as defined in Definition \ref{def:par}), the limitations of \tool are mainly due to the heuristic natures of the implementations of the {\em normalization} and {\em recursion discovery} methods. For more complex computations, one can imagine that limitations of {\sc SyGuS} solvers can play a role in not discovery a join that theoretically exists.

\paragraph{\bf Predictability.} Due to the semantic nature of our approach, one cannot predict parallelizability of a loop nest based on any of its syntactic properties. Since parallelizability is equivalent to the computation being semantically a homomorphism (or liftable to one), the only way to know if a loop nest is parallelizable is to try to parallelize it. In fact, a negative is quite hard to prove as the proof of Theorem \ref{thm:hard} indicates.

\paragraph{\bf Future work.} The focus of this paper (and its predecessor \cite{PLDI17}) has been on synthesizing homomorphisms which have fixed simple divide operations. An interesting direction for future research would be solutions for  synthesizing non-trivial divide operations. Note that with the join and (potentially) the lifting being unknown, this adds one more unknown dimension to the problem which can make it substantially more difficult/interesting to solve.


\section{Related Work} \label{sec:relwork}
There is a vast body of literature on parallelizing code. We review only the closely related work to our approach here.

\paragraph{\bf Homomorphisms for Parallelization}
Our work is most closely related to those that exploit homomorphisms for parallelization \cite{Gorlatch96, Morita07}, and builds up on our recent work \cite{PLDI17}, where sequence (list) homomorphisms are automatically synthesized to parallelize simple (non-nested) loops. This paper is a highly non-trivial generalization of the work in \cite{PLDI17} to arbitrarily nested loops. Less recent attempts in using derivation of list homomorphisms for parallelization included methods based on the third homomorphism theorem  \cite{Geser97,Gorlatch96, Morita07}, function composition \cite{Fisher1994}, and quantifier elimination \cite{MorihataM10}, as well as those based on recurrence equations \cite{Ben01}. These techniques are either not fully automatic, or rely on additional guidance from the programmer beyond the input sequential code.

\paragraph{\bf Simple Loop Parallelization}
More recently in \cite{Raychev2015}, symbolic execution is used to identify and break dependencies in loops that are hard to parallelize. This approach can be regarded as a dynamic counterpart to that of \cite{PLDI17}, and its scope is similarly limited to simple loops. In distributed computing, a related vein of research has been focused on automatic production of map/reduce programs, for example, by means of specific rewrite rules \cite{Radoi2014TIC} or synthesis \cite{Smith2016MPS}. GraSSP \cite{Fedyukovich2016} parallelizes a sequential implementation by analyzing data dependencies and its scope is functions over lists. The (constant sized) prefix information used in \cite{Fedyukovich2016} is essentially a special case of the auxiliary accumulators in \cite{PLDI17}.

\paragraph{\bf Program Synthesis for Parallel Code Generation}
In this paper, the specification for synthesis is the sequential input program, and no other information (such as input/output examples or a sketch) is required from the programmer. Synthesis techniques have been leveraged for parallel programs before, instances of which include synthesis of distributed map/reduce programs from input/output examples \cite{Smith2016MPS} and optimization and parallelization of stencils \cite{Kamil2016}. Aside from the use of synthesis, these problem areas and the solutions have little in common with the scope and approach in this paper. Bellmania \cite{Itzhaky0SYLLC16} synthesizes divide-and-conquer variations of a class of dynamic programming algorithms {\em with programmer's guidance} and the notion of divide-and-conquer (with a fixed divide) in this paper differs from the one that Bellmania uses.

\paragraph{\bf Parallelizing Compilers and Runtime Environments}
Automatic parallelization in compilers has been a prolific and highly effective field of research, with source-to-source compilers using highly sophisticated methods to parallelize generic code \cite{Blume1996PPP, Bacon1994, Padua1986,hwansoo2001comparison} or more specialized nested loops with polyhedral optimization \cite{Verdoolaege2013, Bastoul2004, Vasilache2006, Bondhugula2008}. There is a body of work specific to reductions and parallel-prefix computations \cite{blelloch1990prefix, hillis1986data, ladner1980parallel} that deals with dependencies that cannot be broken.  In contrast to correct source-to-source transformation achieved through provably correct program transformation rules,  the aim of this paper is to use search (in the style of synthesis), which facilitates the discovery of equivalent parallel implementations that are not reachable through a pre-established set of correct transformation rules. There is work in the literature on breaking static dependencies at runtime \cite{Pingali2011} based on the observation that actual runtime dependencies happen rarely in some sparse problems. The scope of applicability of our method is different and we consider these techniques to be complementary. In \cite{BlellochFGS12}, a static two-phase solution is proposed that resolves dependencies in the first phase, and can proceed to perform independent parallel tasks in the second. We view the approach in this paper as complimentary to these techniques.


\clearpage

\citestyle{acmnumeric}
\bibliography{biblio}

\end{document}